\documentclass{article}

\pdfoutput=1

\usepackage{fullpage}
\usepackage{amsmath,amsthm}
\usepackage{amssymb,mathtools}
\usepackage{graphicx,subcaption}
\usepackage{ulem}
\usepackage{xcolor}
\usepackage{enumitem}
\usepackage{times}
\usepackage{bm}
\usepackage[round,sort]{natbib}
\usepackage[plain,noend]{algorithm2e}
\usepackage{tabularx}

\newtheorem{theorem}{Theorem}
\newtheorem{corollary}{Corollary}[theorem]
\newtheorem{lemma}{Lemma}

\def\V{\textup{\textsf{V}}}
\def\Y{\textup{\textsf{Y}}}
\def\M{\textup{\textsf{M}}}
\def\Vx{\includegraphics[height=1.6ex,keepaspectratio=true]{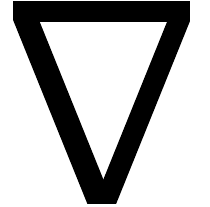}}
\def\rightM{\includegraphics[height=1.6ex,keepaspectratio=true]{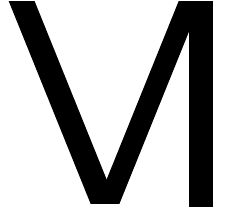}}
\def\leftM{\includegraphics[height=1.6ex,keepaspectratio=true]{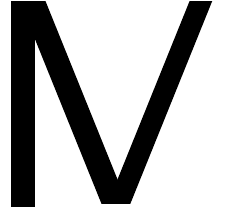}}
\def\rightlongM{\includegraphics[height=1.6ex,keepaspectratio=true]{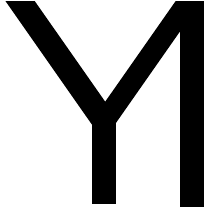}}
\def\leftlongM{\includegraphics[height=1.6ex,keepaspectratio=true]{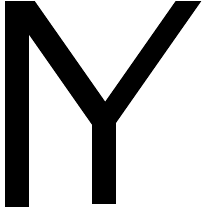}}
\def\longM{\includegraphics[height=1.6ex,keepaspectratio=true]{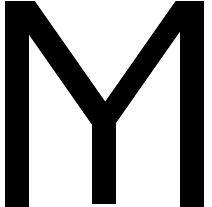}}

\def\P{\mathrm{P}}

\newcommand{\add}[1]{{\color{black}{}#1}}

\begin{document}

\title{The Magnitude and Direction of Collider Bias for Binary Variables}

\author{Trang Quynh Nguyen, Allan Dafoe, Elizabeth L. Ogburn}

\date{October 5, 2018 \\~\\ (accepted by Epidemiologic Methods, Jan 14, 2019)}

\maketitle

\begin{abstract}
Suppose we are interested in the effect of variable $X$ on variable $Y$.
If $X$ and $Y$ both influence, or are associated with variables that influence, a common outcome, called a {\it collider}, then conditioning on the collider (or on a variable influenced by the collider -- its ``child'') induces a spurious association between $X$ and $Y$, which is known as collider bias. 
Characterizing the magnitude and direction of collider bias is crucial for understanding the implications of selection bias and for adjudicating decisions about whether to control for variables that are known to be associated with both exposure and outcome but could be either confounders or colliders.  
Considering a class of situations where all variables are binary, and where $X$ and $Y$ either are, or are respectively influenced by, two marginally independent causes of a collider, we derive collider bias that results from (i) conditioning on specific levels of the collider or its child\add{ (on the covariance, risk difference, and in two cases odds ratio, scales)}, or (ii) linear regression adjustment for, the collider or its child. 
We also derive simple conditions that determine the sign of such bias.\\

{\it Key words}: Bias; Collider; Collider bias; Collider-stratification bias; Selection bias; M-bias.

\end{abstract}

\section{Introduction}

{\it Collider bias} is bias in a measure of association between two variables due to conditioning on a common outcome (a {\it collider}) of the two variables or of their causes. 
In various contexts collider bias is also known as M-bias, selection bias, endogenous selection bias, or Berkson's bias \citep{Elwert2014}. 
Here we use {\it collider bias} as a general term that includes bias due to stratifying or subsetting on, as well as bias due to statistical adjustment for, a collider or a variable influenced by a collider. 
Examples of collider bias in empirical research abound; for examples in epidemiology see papers on this topic by \citet{Greenland2003,Cole2010a,Hernan2004} and for examples in sociology, see a review by \citet{Elwert2014}.

Collider bias is at the heart of a long-standing controversy in the literature on estimating causal effects using observational data. 
If a treatment or exposure, $X$, and an outcome, $Y$, both influence (or are respectively associated with two variables that both influence) another variable, called a collider, then conditioning on the collider (or on a variable influenced by it) induces a spurious association between $X$ and $Y$; this spurious association is known as collider bias. 
When selecting covariates to control for, many statisticians and applied researchers hew to the {\it pretreatment criterion} \citep{vanderweele2011new}, which stipulates that all available baseline covariates should be controlled for.  
This is the recommendation found, for example, in \citet{rosenbaum2002observational}, \citet{rubin2009should} and \citet{rubin2009author}; it is based on the rationale that any pretreatment (or pre-exposure) covariate is a potential confounder of the $X$-$Y$ relation thus controlling for all such potential confounders maximizes the chance that no unmeasured confounding remains to bias causal effect estimates.  
In contrast to the pretreatment criterion, other approaches to controlling for confounding attempt to differentiate between pretreatment confounders and pretreatment colliders, and to control for the former but not the latter.  
This is the approach advocated by, for example, \citet{rothman2008modern, Pearl2009, hernan2010causal} and \citet{vanderweele2011new}.  
After the publication of a lengthy back-and-forth exchange debating the foundations and merits of these two approaches \citep{rubin2009author, rubin2009should, shrier2008letter, sjolander2009propensity, pearl2009letter, pearl2009myth}, researchers have attempted to mediate between the two schools of thought \citep{vanderweele2011new, Ding2015}.  
While it is widely accepted that conditioning on a pretreatment collider can introduce bias, it is still a matter for debate whether this bias is significant enough to undermine the seductively simple pretreatment criterion.

While the debate described above is about pretreatment colliders, there is also much interest in and a literature surrounding post-treatment colliders, especially in the field of epidemiology.  
Colliders that are influenced by treatment are implicated in several ``paradoxes" in the epidemiology literature \citep{porta2015current}, such as the ``obesity paradox," where selection on the basis of diabetes status creates a spurious negative correlation between obesity and mortality \citep{banack2013obesity}, and the ``birth weight paradox," where stratifying on birth weight creates a spurious negative association between a risk factor and neonatal mortality \citep{whitcomb2009}.  
\add{Collider bias is also a problem in mediation analysis if there is a post-treatment confounder of the mediator -- the natural direct and indirect effects are unidentified because an attempt to isolate them would need to condition on the mediator, but conditioning on the mediator would induce collider bias that distorts the effects \citep{TchetgenTchetgen2014e}.}

Methodologists have provided insights into bias due to conditioning on pre-treatment \citep{Pearl2013,Ding2015,Greenland2003} and post-treatment \citep{jiang2016directions,Greenland2003,Pearl2013} colliders. 
\add{\cite{vanderweele2007directed} determine the sign of the conditional covariance of the causes of a collider when the causal structure is a sufficient component cause \citep{rothman2008modern} model.}
\cite{Pearl2013} and \cite{Ding2015} derive collider bias under the assumption of linear models, while \cite{Greenland2003} relies on assumptions of no interactions on the odds ratio scale.
Without making such assumptions, \citet{jiang2016directions} determine the sign, but do not quantify the magnitude, of collider bias.
\add{\citet{Shahar2017} identify conditions in which marginally independent causes of a collider are independent or dependent conditional on the collider.}
Other work provides criteria, based on graphical models, for qualitatively assessing whether conditioning on variables associated with colliders may induce bias \citep{greenland2011adjustments} and, when all variables are jointly Gaussian, for partially ordering the bias induced by conditioning on different variables associated with a collider \citep{chaudhuri2002using}. 
In this paper we provide analytic results on the direction and magnitude of collider bias in settings with both pre- and post-treatment colliders, and with binary treatment, collider, and outcome variables.

We focus our attention on settings where the putative exposure and outcome are marginally independent, i.e., settings under the null hypothesis of no exposure-outcome relationship. 
We consider situations where the exposure and outcome either influence the collider or are influenced by causes of the collider. We derive precise formulas for bias of the exposure-outcome effect measured on the covariance, risk difference, and in some cases odds ratio scale, due to conditioning on specific levels of the collider (or a variable it influences); and bias in risk difference estimated by linear regression due to adjustment for the collider (or the variable it influences).
We discuss conditions under which collider bias is negative, positive, or zero, and point out how collider bias in each structure relates to collider bias in the simpler structure(s) embedded in it.
To the best of our knowledge, the analytic results we present here (except one result \add{in section \ref{subsec:Vx}} that we include for completeness) are novel results quantifying collider bias in this class of binary variable structures without any simplifying assumptions about the effects of the two causes on the collider.
In settings with pre-treatment colliders, \add{our results help adjudicate the debate described above}: using the associations among the variables, we determine the direction and magnitude of bias due to controlling for a potential collider, which can be weighed against the comparable risks of failing to control for a potential confounder.
In settings with post-treatment colliders, our paper is in dialogue with the papers mentioned above. It provides additional insight into the settings considered by \cite{jiang2016directions}, and relaxes the assumptions of \cite{Greenland2003}.

The remainder of this paper is organized as follows.
In Section \ref{sec:notation} we introduce notation and terminology, including definitions of measures of collider bias.
Sections \ref{sec:stratum} and \ref{sec:lm} present analytic results for collider bias, and discuss the direction of such bias.
Section \ref{sec:stratum} covers collider bias conditioning on specific levels of the collider (or its child) in all the binary variable structures in Fig. \ref{fig:1}, while Section \ref{sec:lm} covers collider bias due to linear regression adjustment for the collider (or its child) in the structures where the collider's parents are marginally independent. Section \ref{sec:discussion} concludes with a discussion.
\add{Proofs of all results presented in the paper are provided in the Appendix.}

\begin{figure}
\caption{Binary variable structures considered in this paper}
\centering
\begin{subfigure}[b]{.2\textwidth}
\includegraphics[width=\textwidth]{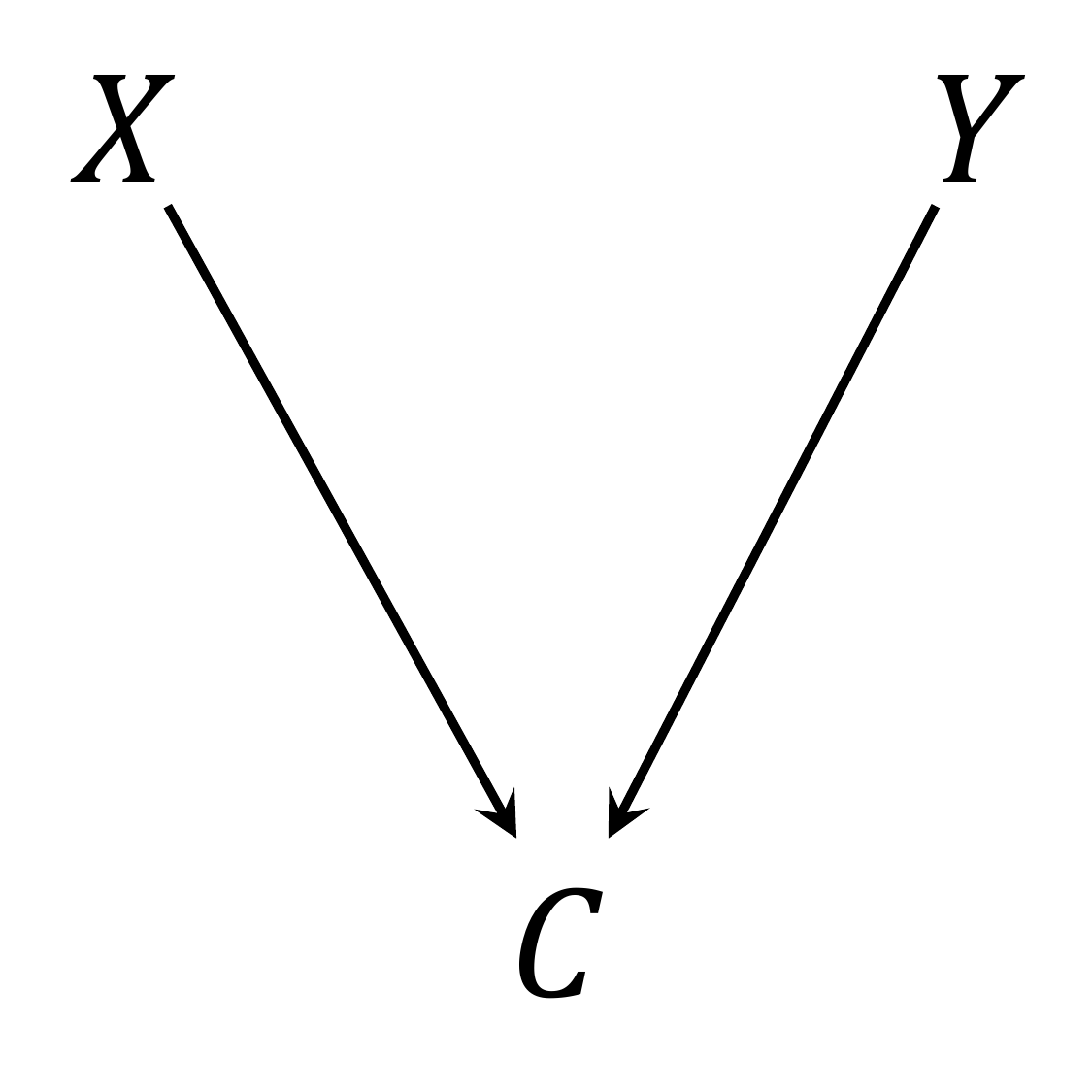}
\caption*{~~~~\V~structure}
\end{subfigure}
\begin{subfigure}[b]{.13\textwidth}
\includegraphics[width=\textwidth]{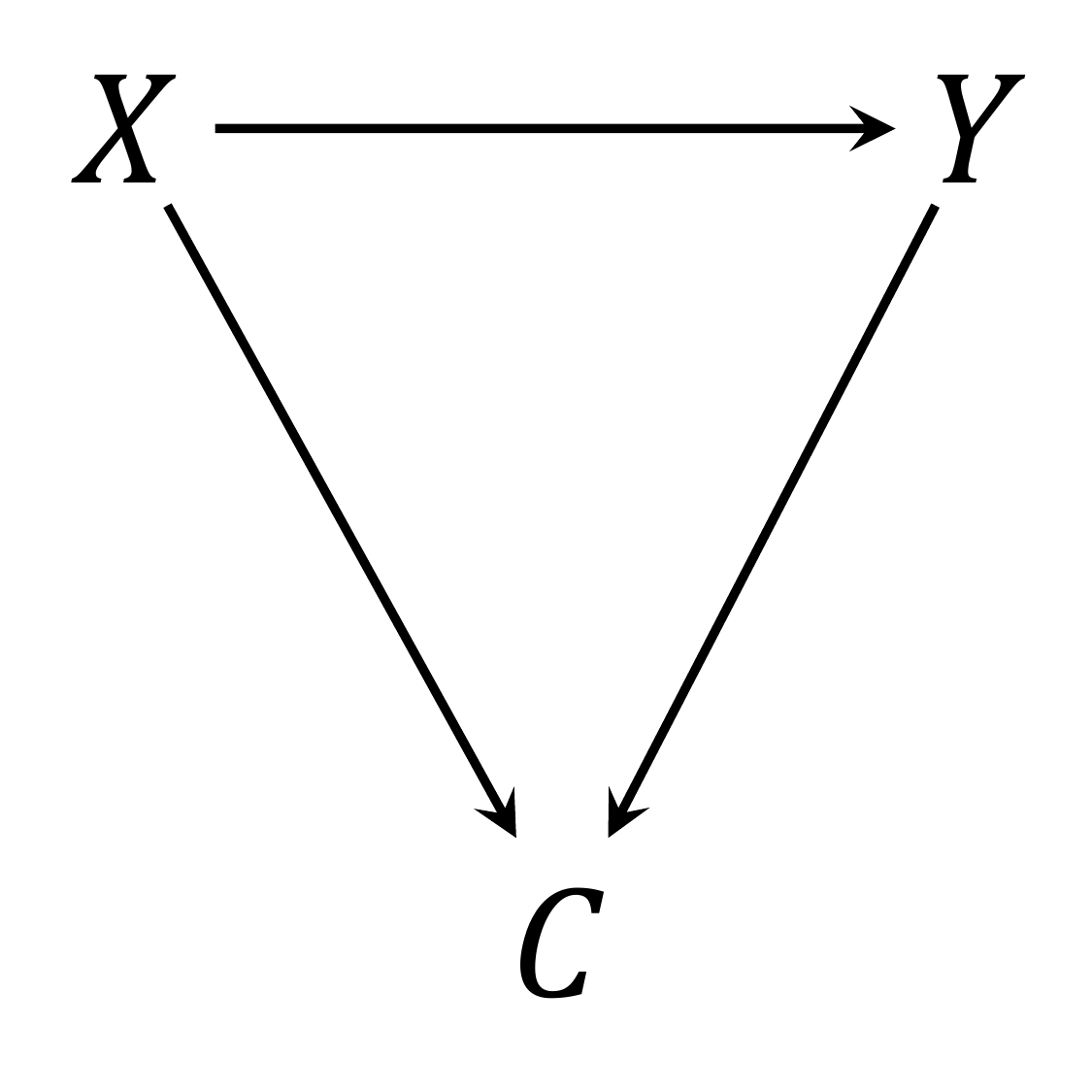}
\caption*{\footnotesize\Vx~structure}
\end{subfigure}
~~~~~~~~~~~~~~~~~~~
\begin{subfigure}[b]{.2\textwidth}
\includegraphics[width=\textwidth]{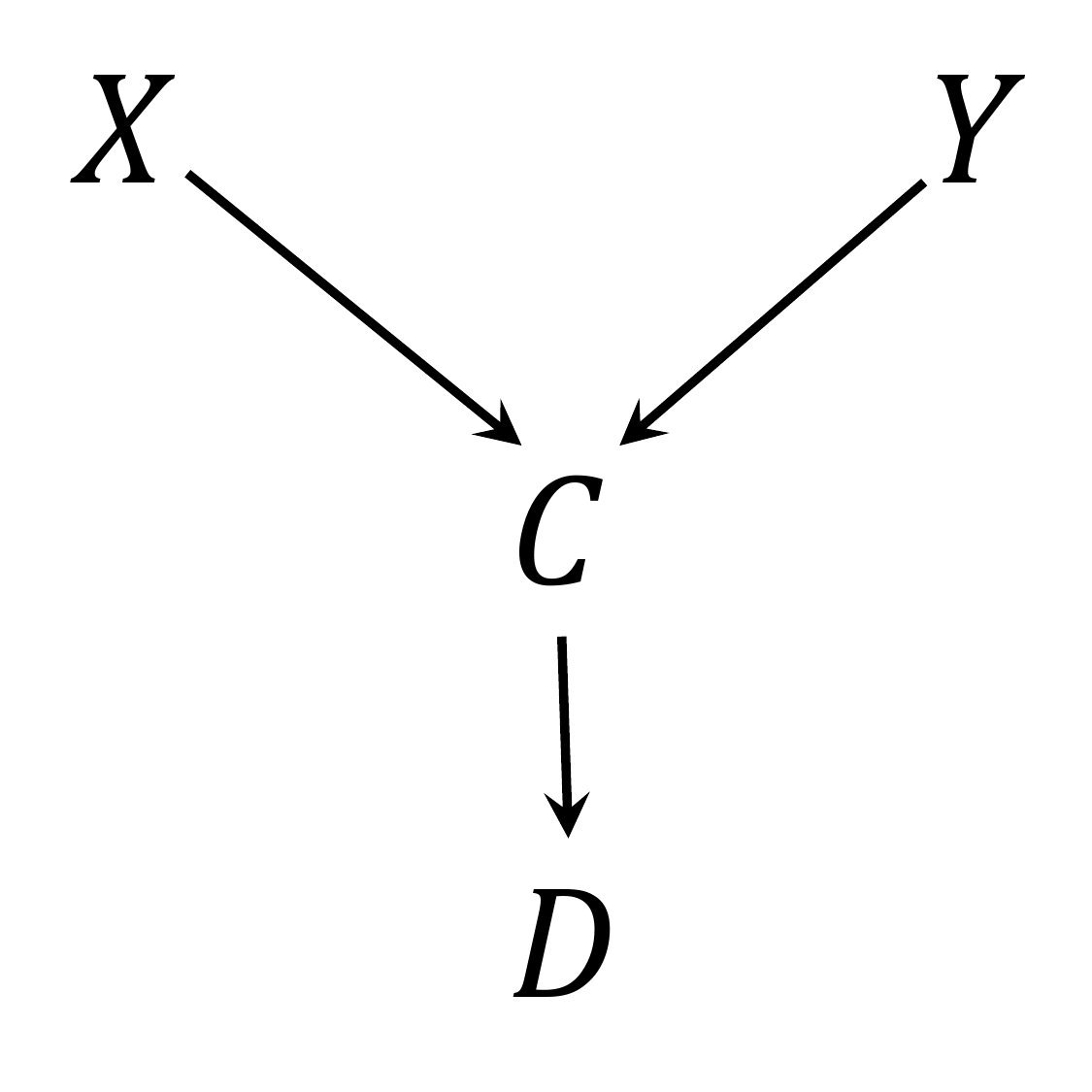}
\caption*{~~~~\Y~structure}
\end{subfigure}
\\
~\\
~\\
\begin{subfigure}[]{.13\textwidth}
\includegraphics[width=\textwidth]{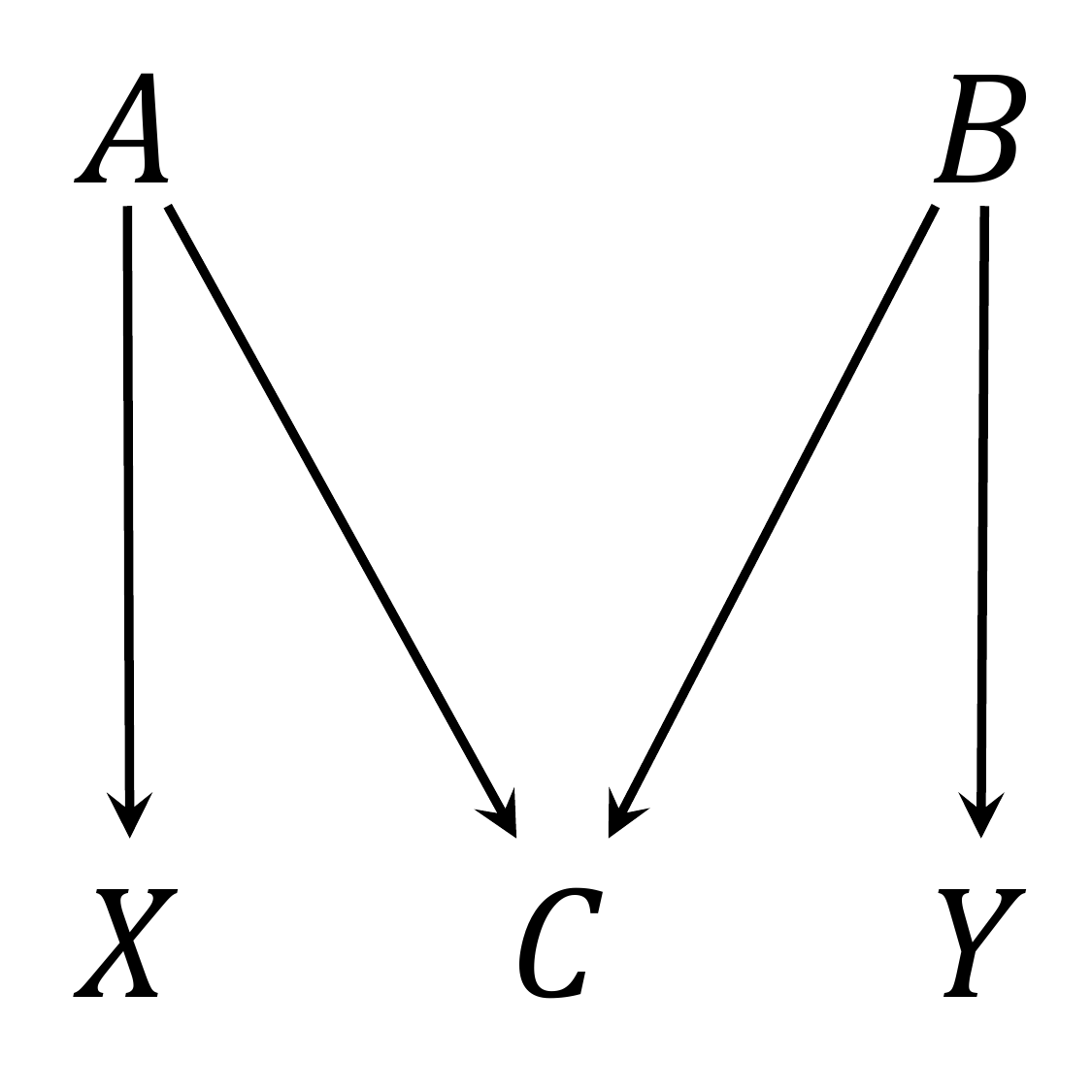}
\caption*{\footnotesize\M~structure}
\end{subfigure}
~~~
\begin{subfigure}[]{.13\textwidth}
\includegraphics[width=\textwidth]{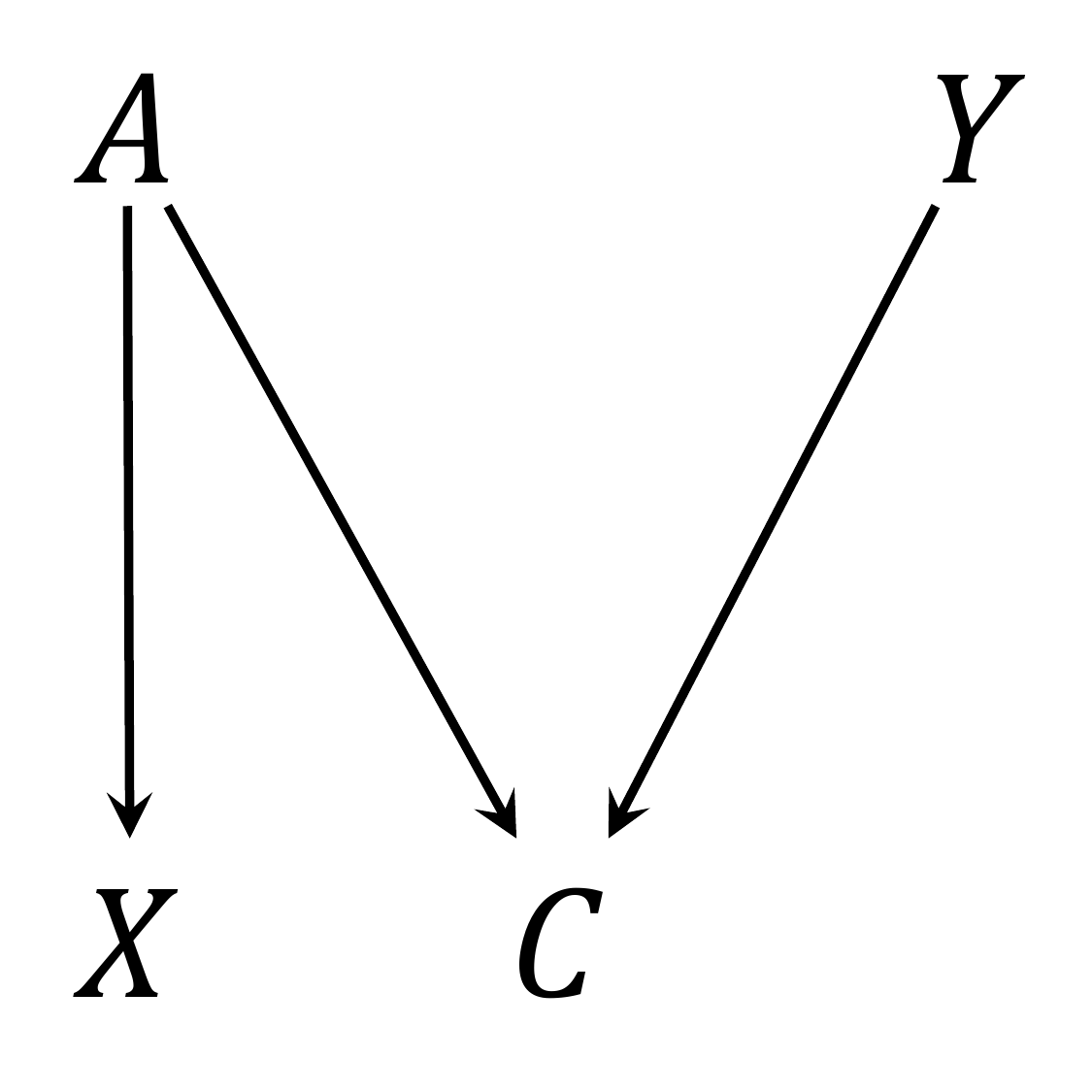}
\caption*{\footnotesize\leftM~structure}
\end{subfigure}
~~~
\begin{subfigure}[]{.13\textwidth}
\includegraphics[width=\textwidth]{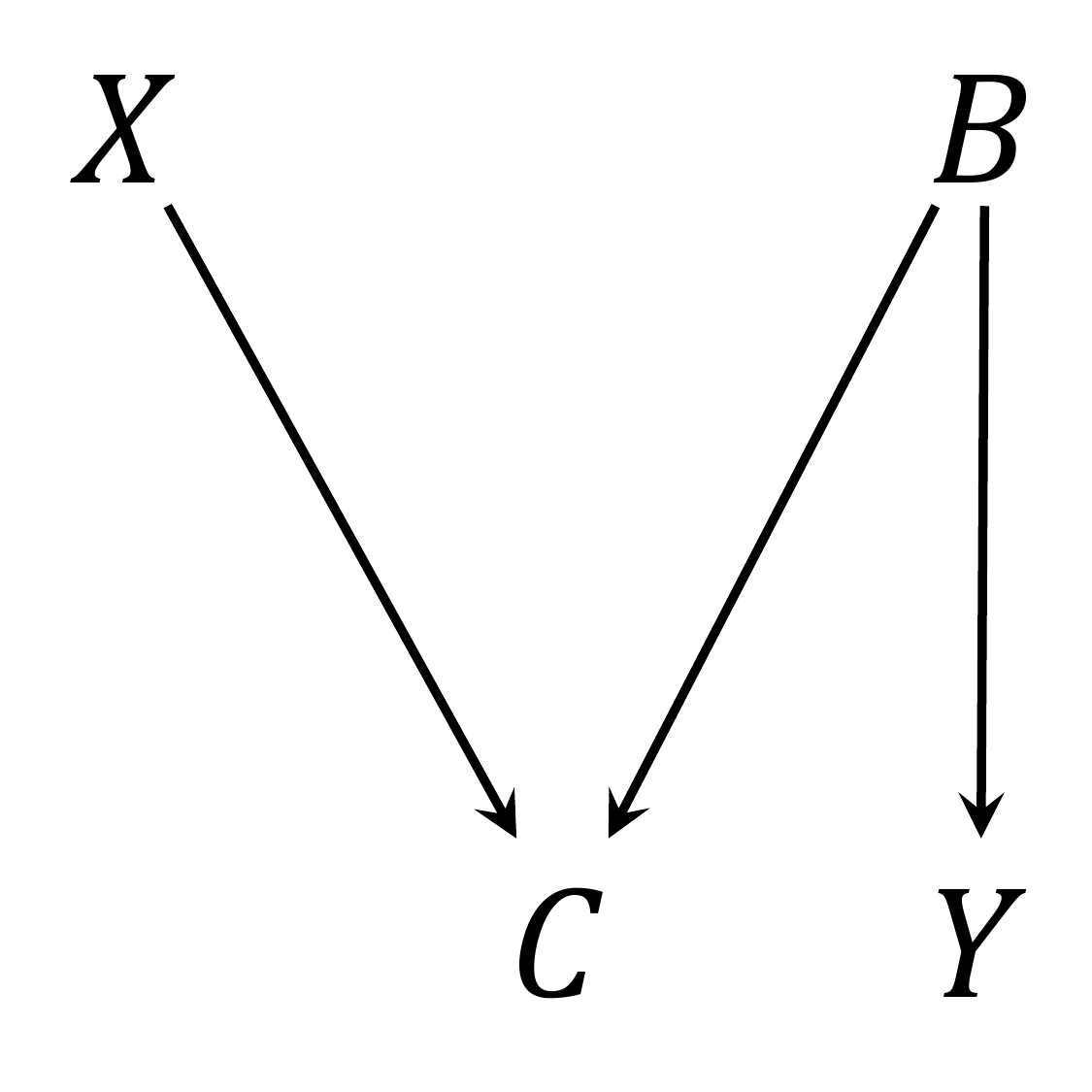}
\caption*{\footnotesize\rightM~structure}
\end{subfigure}
~~~~
\begin{subfigure}[]{.13\textwidth}
\includegraphics[width=\textwidth]{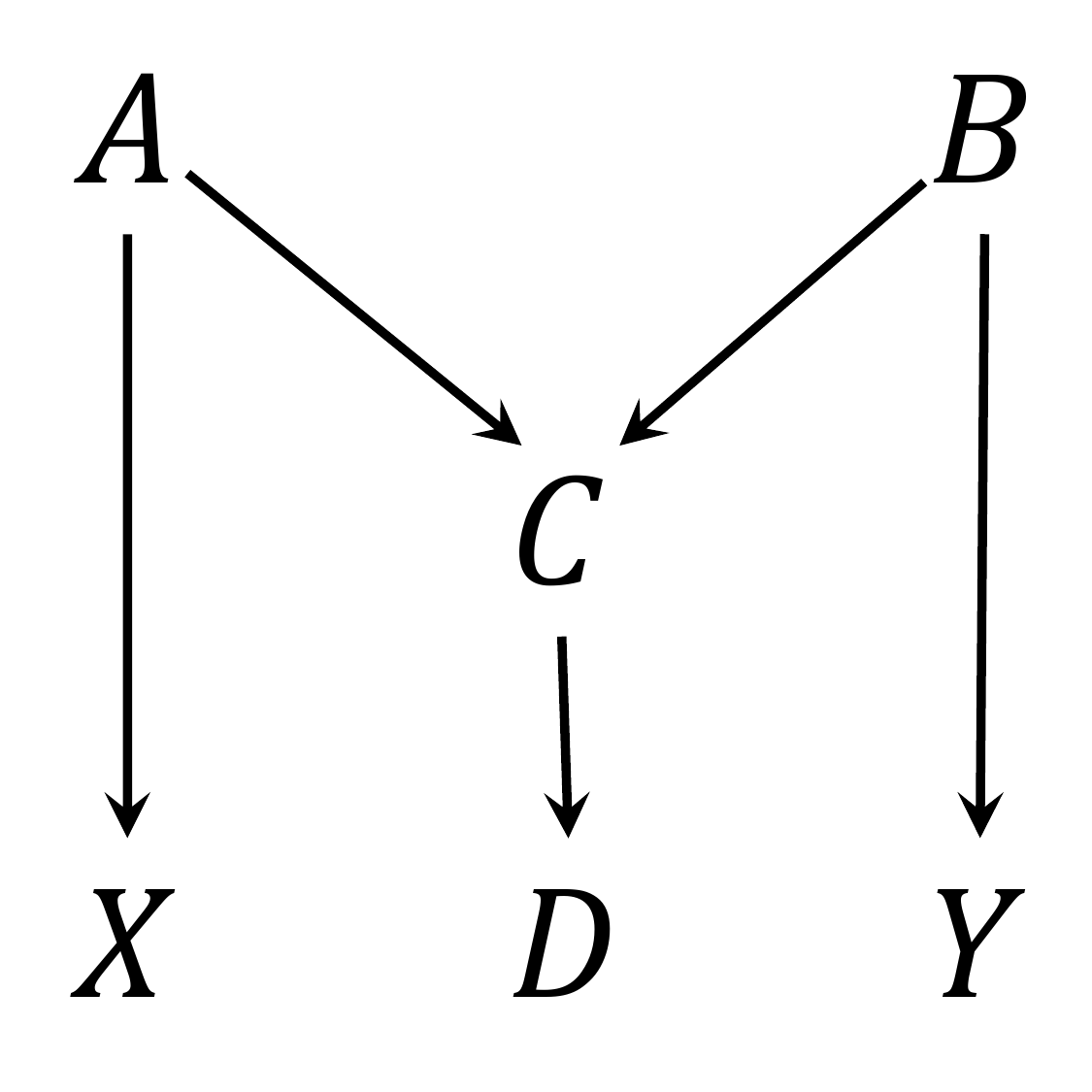}
\caption*{\footnotesize\longM~structure}
\end{subfigure}
~~~
\begin{subfigure}[]{.13\textwidth}
\includegraphics[width=\textwidth]{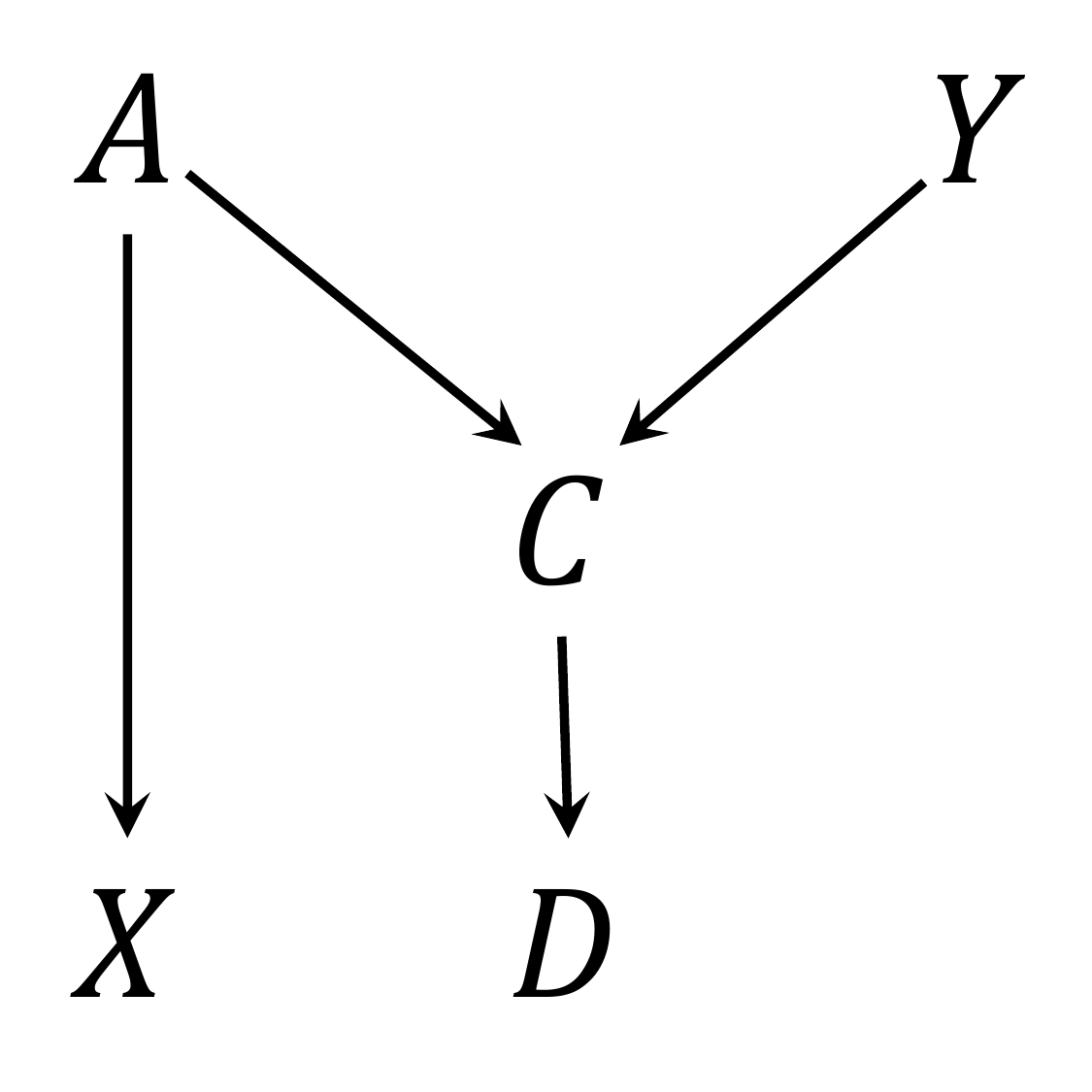}
\caption*{\footnotesize\leftlongM~structure}
\end{subfigure}
~~~
\begin{subfigure}[]{.13\textwidth}
\includegraphics[width=\textwidth]{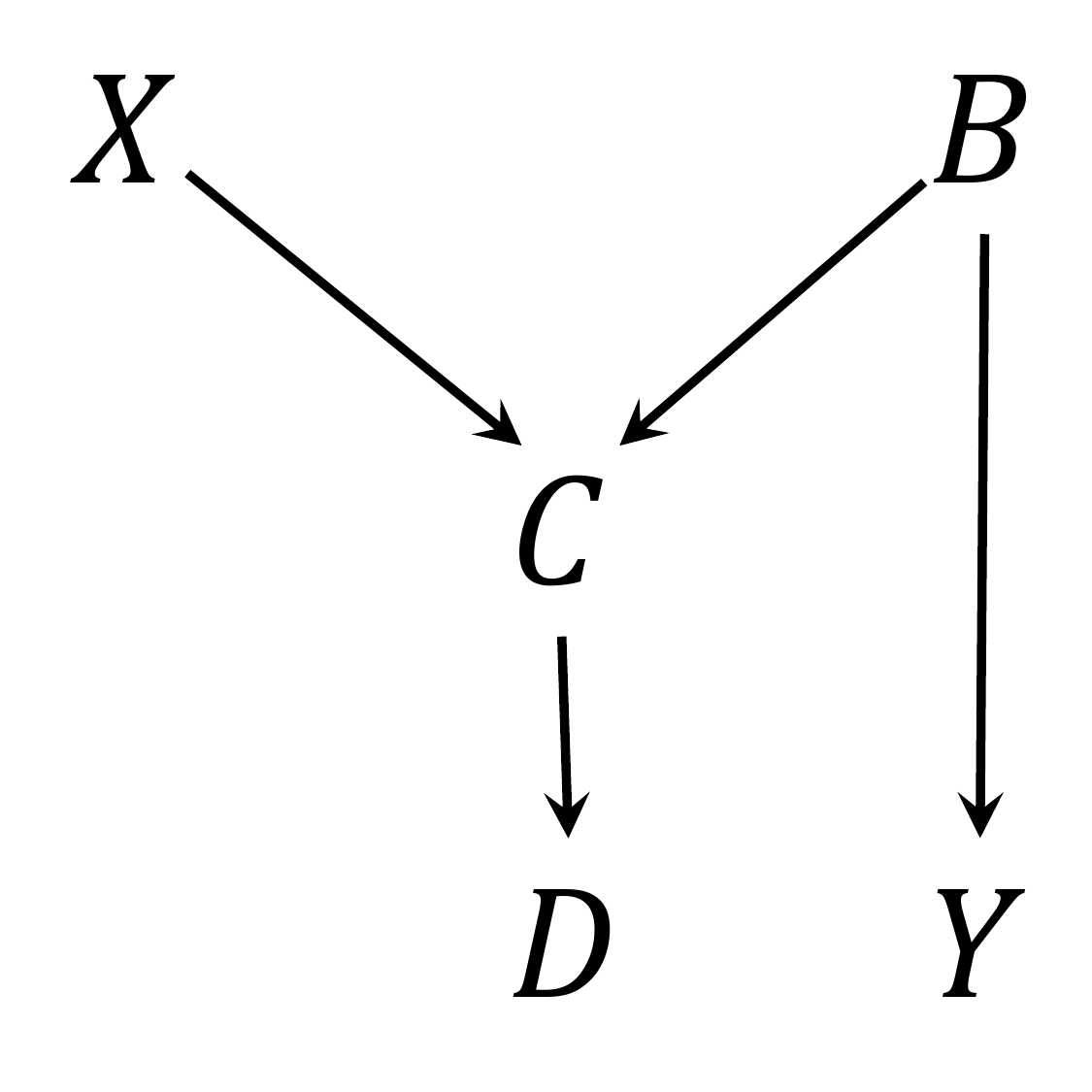}
\caption*{\footnotesize\rightlongM~structure}
\end{subfigure}
\label{fig:1}
\end{figure}

\section{Notations, definitions, terms and abbreviations} \label{sec:notation}

Figure \ref{fig:1} depicts the relationships among $X$, $Y$ and a collider $C$, and in some cases a variable $D$ influenced by $C$ that we consider in this paper.  The diagrams in Figure \ref{fig:1} are causal directed acyclic graphs (DAGs; \citealp{Pearl2009}).  A causal DAG consists of nodes that represent variables and arrows that represent causal effects; it includes all common causes of any pair of variables in the DAG.
If one variable is a cause of another (either directly, or through intermediate variables), the former is called an {\it ancestor} of the latter, and the latter a {\it descendant} of the former.
An ancestor and a descendant directly connected by a single arrow are referred to as {\it parent} and {\it child}, respectively.
A child with more than one parent is a collider between its parents.

In each of the DAGs in Fig. \ref{fig:1}, $C$ is a collider between two variables, one of which is denoted either $X$ or $A$ and the other either $Y$ or $B$, depending on the structure.
In all these settings, the relationship between $X$ and $Y$ is of interest to a scientist, who considers $X$ the exposure and $Y$ the outcome.
Depending on the setting, each of these two variables is either a parent of $C$ or another child of a parent of $C$.
In four of the DAGs in Figure \ref{fig:1}, $C$ has a child $D$ that is not influenced by any other variables in the DAG.
We use sans serif letters to refer to three structures in Figure \ref{fig:1}: \V, \Y~and \M.
We refer to the other structures using letter-like symbols that mimic how these structures are drawn: \Vx~(`upside-down triangle'), \leftM~(`left-sided M'), \rightM~(`right-sided M'), \longM~(`long M'), \leftlongM~(`left-sided long M'), and \rightlongM~(`right-sided long M').
The \V~ structure is the simplest case, in which $X$ and $Y$ affect a post-treatment collider directly.
The \M~structure, which gives rise to the term {\it M-bias}, in which $C$ may be a post-treatment or pre-treatment variable, is perhaps the most well-known of these structures, having been at the heart of the debate described in the introduction.
Several ``paradoxes'' in epidemiology involve the \rightM~structure or a structure with an embedded \rightM~substructure.


We consider collider bias conditioning on $C$ in the \V, \Vx, \M, \leftM, and \rightM~ structures, and conditioning on $D$ in the \Y, \longM, \leftlongM, and \rightlongM~structures.
We define collider bias generally as \textit{the departure of the conditional association of $X$ and $Y$ from their marginal association}.
This departure is characterized either as a difference or a ratio, depending on the measure of association (or effect scale) used to characterize the $X$-$Y$ relationship.
For example, collider bias conditioning on $C=c$ may be measured by
a difference between covariances,
$$\textup{cov}(X,Y\mid C=c)-\textup{cov}(X,Y),$$
a difference between risk differences (\textsc{rd}s),
$$\left\{\begin{matrix*}[l]
\mathrm{P}(Y=1\mid X=1,C=c)-\\
\mathrm{P}(Y=1\mid X=0,C=c)
\end{matrix*}\right\}-
\left\{\begin{matrix*}[l]
\mathrm{P}(Y=1\mid X=1)-\\
\mathrm{P}(Y=1\mid X=0)
\end{matrix*}\right\},$$
a ratio of risk ratios (\textsc{rr}s),
$$\frac{\mathrm{P}(Y=1\mid X=1,C=c)}{\mathrm{P}(Y=1\mid X=0,C=c)}
\bigg/\frac{\mathrm{P}(Y=1\mid X=1)}{\mathrm{P}(Y=1\mid X=0)},$$
or a ratio of odds ratios (\textsc{or}s),
$$\frac{\P(Y=1\mid X=1,C=c)/\P(Y=0\mid X=1,C=c)}{\P(Y=1\mid X=0,C=c)/\P(Y=0\mid X=0,C=c)}
\bigg/
\frac{\P(Y=1\mid X=1)/\P(Y=0\mid X=1)}{\P(Y=1\mid X=0)/\P(Y=0\mid X=0)}.$$
%
We also consider collider bias due to linear regression adjustment for $C$ (or $D$).
This bias is defined as the difference between the adjusted \textsc{rd} of $Y$ comparing the two levels of $X$, represented by the coefficient of $X$ in the linear model regressing $Y$ on $X$ and $C$ (or $D$), and the marginal \textsc{rd}. Note that \textit{when $X$ and $Y$ are marginally independent} (all structures in Fig. \ref{fig:1} except \Vx), \textit{these bias measures reduce to the conditional/adjusted association}.
Although all these measures of bias can be derived, our interest is in expressions that provide insights into the direction and magnitude of bias. To this end, we restrict our attention to the covariance, \textsc{rd}, linear regression, and \textsc{or} measures for the \V~structure; to the \textsc{or} scale for the \Vx~structure; and to the covariance, \textsc{rd}, and linear regression measures for the other structures.

When referring to collider bias in a specific structure, we name the bias after the structure: \V-bias, \M-bias, \leftM-bias, etc.
When referencing a specific effect scale and/or a type of conditioning, we add such information after the bias name, e.g., \Y-bias$(D=0,\text{cov})$ and \M-bias(\textsc{lm}) refer to \Y-bias conditioning on $D=0$ on the covariance effect scale and \M-bias due to linear regression adjustment for $C$, respectively.

As will be shown, collider bias measures are often complex functions of the marginal and conditional probabilities of variables in the structure.
To improve clarity, we introduce a simple shorthand system for some of these probabilities.
For an exogenous variable, a marginal probability is abbreviated using $p$ with an index, e.g., $\mathrm{P}(A=1)$ is abbreviated to $p_{A=1}$.
For an endogenous variable, a conditional probability conditioning on all its parents is similarly abbreviated, with the conditioning event added to the index and the parents implied, e.g., $\mathrm{P}(D=1\mid C=0)$ becomes $p_{D=1\mid0}$.
For the collider, the conditioning index includes two values, the first referring to the parent on the left hand side, the second the parent on the right hand side, e.g., $\mathrm{P}(C=0\mid X=1,Y=0)$ in the \V~structure and $\mathrm{P}(C=0\mid A=1,B=0)$ in the \M~structure are abbreviated to $p_{C=0\mid10}$.
In addition, when referring to an eligible but not specific value of a variable, we use lower case notation (e.g., $d$ representing a value of variable $D$), and abbreviate the index further, e.g., $p_{D=d\mid1}$ becomes $p_{d\mid1}$, and $p_{C=c\mid11}$ becomes $p_{c\mid11}$.

When discussing the direction of collider bias, we will use \textit{sign of bias} language.
By `zero' or `no' bias, we mean that a conditional covariance, \textsc{rd}, or \textsc{or}, is equal to its marginal counterpart, thus the measure of collider bias is equal to 0 on the covariance or \textsc{rd} effect scale, and equal 1 on the \textsc{or} effect scale.
By `positive' (`negative') bias, we mean that a conditional covariance, \textsc{rd}, or \textsc{or}, is greater (smaller) than its marginal counterpart, thus the measure of collider bias is greater (smaller) than 0 on the covariance or \textsc{rd} effect scale, and greater (smaller) than 1 on the \textsc{or} effect scale.

\section{Bias due to conditioning on a specific level of a collider or its child}\label{sec:stratum}

\subsection{Bias due to conditioning on a specific level of the collider in the \V~structure}\label{subsec:V}

\add{There are many examples of \V-bias conditioning on a level of $C$.
In a simple didactic example \citep{Cole2010a}, some of the people who attended a party later developed a fever ($C$) because they either ate tainted sandwiches ($X$) at the party or had contracted the flu ($Y$) prior to the party, or both. The flu and the sandwiches were truly unrelated, but among those with fever there was a negative association between having the flu and eating the tainted sandwiches. The explanation is intuitive: a person who did not eat tainted sandwiches must have had the flu to have the fever and vice versa.
A set of more complex examples was presented by
\cite{Berkson1946}, where $X$ is diabetes and $Y$ is cholecystitis (inflammation of the gallbladder). Both of these conditions may lead to hospitalization, thereby increasing the probability of being selected into a hospital-based sample ($C$). The observed association between diabetes and cholecystitis in this sample may be either positive or negative, depending on how the sample is selected, even though these two conditions are independent in the general population.
We will touch on these examples as we present and discuss analytical results regarding the direction and magnitude of \V-bias.}

\medskip

The following theorem provides formulas for \V-bias conditioning on a specific level of $C$. On the covariance or \textsc{rd} effect scale, such bias is dependent on the marginal probabilities of $X$ and $Y$ and the conditional probabilities of $C$ given $X$ and $Y$; on the \textsc{or} effect scale, it is a function of the conditional probabilities of $C$ only.

\begin{theorem}[$C$-specific \V-bias theorem]\label{thm:V}
\V-bias conditioning on $C=c$, for $c\in\{0,1\}$, is given by the following expressions:
\begin{align*}
    \V\textup{-bias}(C=c,\mathrm{cov})&=\frac{p_{X=1}p_{X=0}p_{Y=1}p_{Y=0}}{\{\mathrm{P}(C=c)\}^2}\cdot(p_{c\mid00}p_{c\mid11}-p_{c\mid10}p_{c\mid01}),\\
    \V\textup{-bias}(C=c,\textsc{rd})&=\frac{p_{Y=1}p_{Y=0}}
    {
    (p_{Y=1}p_{c\mid11}+p_{Y=0}p_{c\mid10})
    (p_{Y=1}p_{c\mid01}+p_{Y=0}p_{c\mid00})
    }\cdot(p_{c\mid00}p_{c\mid11}-p_{c\mid10}p_{c\mid01}),\\
    \V\textup{-bias}(C=c,\textsc{or})&=\frac{p_{c\mid00}p_{c\mid11}}{p_{c\mid10}p_{c\mid01}},
\end{align*}
and for completeness, $\mathrm{P}(C=c)=p_{X=1}p_{Y=1}p_{c\mid11}+p_{X=0}p_{Y=1}p_{c\mid01}+p_{X=1}p_{Y=0}p_{c\mid10}+p_{X=0}p_{Y=0}p_{c\mid00}.$
\end{theorem}

\medskip

An insight from Theorem \ref{thm:V} is that \V-bias conditioning on $C=c$ is of the same sign as the sign of function $g(c)=p_{c\mid00}p_{c\mid11}-p_{c\mid10}p_{c\mid01}$.
In the special case where either $p_{c\mid00}$ or $p_{c\mid11}$ is zero and either $p_{c\mid10}$ or $p_{c\mid01}$ is zero, $g(0)$ will be zero. Outside of this special case, generally $g(c)=0$ when
$$\frac{p_{c\mid11}}{p_{c\mid00}}=\frac{p_{c\mid10}}{p_{c\mid00}}\cdot\frac{p_{c\mid01}}{p_{c\mid00}}.$$
As the ratios above represent effects of $X$ and $Y$ on the probability of $C=c$ on the \textsc{rr} scale, this expression reads: the effect of the combination of two actions, shifting $X$ to 1 and shifting $Y$ to 1 (from a base of $(X=0,Y=0)$), equals the combination of the individual actions' effects, that is, there is no interaction. Hence, \V-bias conditioning on $C=c$ is zero when $X$ and $Y$ do not interact in their effects on the probability of $C=c$ on the \textsc{rr} scale\add{; this is consistent with \citeauthor{Shahar2017}'s (\citeyear{Shahar2017}) finding}.
In addition, \V-bias conditioning on $C=c$ is positive when $X$ and $Y$ interact in their effects on the probability of $C=c$ on the \textsc{rr} scale such that
$$\frac{p_{c\mid11}}{p_{c\mid00}}>\frac{p_{c\mid10}}{p_{c\mid00}}\cdot\frac{p_{c\mid01}}{p_{c\mid00}},$$
and is negative when the interaction is in the opposite direction, i.e.,
$$\frac{p_{c\mid11}}{p_{c\mid00}}<\frac{p_{c\mid10}}{p_{c\mid00}}\cdot\frac{p_{c\mid01}}{p_{c\mid00}}.$$
%
%
The sign of \V-bias conditioning on a specific level of $C$ is not dependent on the choice of effect scale.
(This is true generally for collider bias conditioning on a specific level of the collider or its child, provided that the collider's parents are marginally independent -- true in all structures in Fig. \ref{fig:1} except \Vx.)

\add{With the \textsc{or} scale, another way to see this is to re-express the \V-bias$(C=c,\textsc{or})$ formula in Theorem \ref{thm:V} as $\displaystyle\frac{p_{c\mid11}}{p_{c\mid01}}\big/\frac{p_{c\mid10}}{p_{c\mid00}}$, the ratio of the \textsc{rr}-scale effects of $X$ on the probability of $C=c$ conditional on the two levels of $Y$ (or equivalently as $\displaystyle\frac{p_{c\mid11}}{p_{c\mid10}}\big/\frac{p_{c\mid01}}{p_{c\mid00}}$, the ratio of the \textsc{rr}-scale effects of $Y$ on the probability of $C=c$ conditional on the two levels of $X$), which measures the interaction effect between $X$ and $Y$ on the probability of $C=c$. Clearly, the bias factor is equal to 1 in the absence of such interaction, and is not equal to 1 in the presence of such interaction.}

\medskip

Theorem \ref{thm:V} implies the two corollaries below.

\begin{corollary}[$C$-specific \V-bias corollary 1]\label{crlry:V1}
In the \V~structure, if $X$ has positive effects on $C$ at both levels of $Y$ (i.e., $p_{C=1\mid10}>p_{C=1\mid00}$, $p_{C=1\mid11}>p_{C=1\mid01}$) and $Y$ has positive effects on $C$ at both levels of $X$ (i.e., $p_{C=1\mid01}>p_{C=1\mid00}$, $p_{C=1\mid11}>p_{C=1\mid10}$), or alternatively, if $X$ has negative effects on $C$ at both levels of $Y$ and $Y$ has negative effects on $C$ at both levels of $X$, then \V-bias is always negative for at least one level of $C$.
On the other hand, if $X$ has positive effects on $C$ at both levels of $Y$ and $Y$ has negative effects on $C$ at both levels of $X$, or vice versa, then \V-bias is always positive for at least one level of $C$.
\end{corollary}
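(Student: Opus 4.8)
The plan is to push everything through the sign of the single quantity $g(c)=p_{c\mid00}p_{c\mid11}-p_{c\mid10}p_{c\mid01}$. By the discussion immediately following Theorem~\ref{thm:V}, \V-bias conditioning on $C=c$ (on the covariance, \textsc{rd}, or \textsc{or} scale) has the same sign as $g(c)$. Hence it suffices to show that under the first set of hypotheses $g(c)<0$ for at least one $c\in\{0,1\}$, and under the second set $g(c)>0$ for at least one $c\in\{0,1\}$.

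The crux is an algebraic decomposition of $g(1)$ and $g(0)$. Write $a=p_{C=1\mid10}-p_{C=1\mid00}$ (the effect of $X$ on $C$ at $Y=0$), $b=p_{C=1\mid01}-p_{C=1\mid00}$ (the effect of $Y$ on $C$ at $X=0$), and $\delta=p_{C=1\mid00}+p_{C=1\mid11}-p_{C=1\mid10}-p_{C=1\mid01}$ (the additive $X$--$Y$ interaction on $\mathrm{P}(C=1)$). Substituting and expanding the products gives
\[
g(1)=p_{C=1\mid00}\,\delta-ab,\qquad g(0)=g(1)-\delta=-(1-p_{C=1\mid00})\,\delta-ab ;
\]
the second identity is because replacing each $p_{C=1\mid ij}$ by $1-p_{C=1\mid ij}$ sends $g(1)$ to $g(0)$ and subtracts exactly $\delta$. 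I note in passing that the two remaining monotonicity conditions in each hypothesis --- concerning the effects of $X$ at $Y=1$ and of $Y$ at $X=1$ --- amount to the quantities $a+\delta$ and $b+\delta$, so they only constrain $\delta$ and are not otherwise used.

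The conclusion then follows from a short sign check, splitting on the sign of $\delta$ and using $p_{C=1\mid00}\in[0,1]$. In the first (``same-direction'') case the stated inequalities force $a$ and $b$ to have a common sign, so $ab>0$: if $\delta\ge 0$ then $g(0)=-(1-p_{C=1\mid00})\delta-ab<0$, while if $\delta<0$ then $g(1)=p_{C=1\mid00}\,\delta-ab<0$; either way $g(c)<0$ for some $c$, i.e.\ \V-bias is negative for at least one level of $C$. In the second (``opposite-direction'') case $ab<0$, so $-ab>0$, and the mirror-image check yields $g(1)>0$ when $\delta\ge 0$ and $g(0)>0$ when $\delta<0$, i.e.\ \V-bias is positive for at least one level of $C$.

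The main obstacle is spotting the decomposition $g(1)=p_{C=1\mid00}\,\delta-ab$ together with $g(0)=g(1)-\delta$; once these are in hand the argument is essentially two lines, since the monotonicity hypotheses pin down the sign of the product term $ab$ while the sign of $\delta$ merely selects which level of $C$ realizes the extreme bias. The only delicate bookkeeping is strictness: the hypotheses are strict inequalities, so $a,b\neq 0$ and $ab$ is strictly signed, which is what makes the displayed inequalities strict for every admissible value of $p_{C=1\mid00}$; and one should record that the hypotheses of each case are mutually consistent (they are, via $p_{C=1\mid11}-p_{C=1\mid01}=a+\delta$ and $p_{C=1\mid11}-p_{C=1\mid10}=b+\delta$).
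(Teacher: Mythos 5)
Your proof is correct, and it takes a genuinely different route from the paper's. The paper proves the first scenario via an auxiliary inequality (its Lemma 5: if $0<a'<a<b<b'$ and $a'b'=ab$ then $a'+b'>a+b$), a separate treatment of the boundary case $p_{C=1\mid00}=0$, a contrapositive argument introducing an intermediate value $p^*$ with $p^*p_{C=1\mid11}=p_{C=1\mid10}p_{C=1\mid01}$, and then handles the remaining scenarios by reverse-coding $X$ and/or $Y$. Your reparametrization $g(1)=p_{C=1\mid00}\,\delta-ab$, $g(0)=-(1-p_{C=1\mid00})\,\delta-ab$ (which I have verified by direct expansion) collapses all of this into a two-line sign check: the hypotheses fix the sign of $ab$, the sign of $\delta$ selects which level of $C$ witnesses the conclusion, and the boundary cases $p_{C=1\mid00}\in\{0,1\}$ need no special handling. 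Your approach also makes transparent something the paper's proof obscures: only the conditional effects at the reference levels ($a=p_{C=1\mid10}-p_{C=1\mid00}$ and $b=p_{C=1\mid01}-p_{C=1\mid00}$) are needed, so the corollary holds under the weaker hypothesis that $a$ and $b$ are strictly signed, with the other two monotonicity conditions superfluous. What the paper's longer argument buys in exchange is Lemma 5 and the $p^*$ device, which are not reused elsewhere, so on balance your proof is both shorter and slightly more general. One small point to make explicit if you write this up: the reduction of ``sign of bias'' to ``sign of $g(c)$'' on the \textsc{or} scale requires $p_{c\mid10}p_{c\mid01}>0$, which is guaranteed here because the strict hypotheses force $p_{C=1\mid10}$ and $p_{C=1\mid01}$ into the open interval $(0,1)$.
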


In simpler (but less precise) terms, the conditions for \V-bias to be always negative for at least one level of $C$ stated in Corollary \ref{crlry:V1} may be thought of as ``$X$ and $Y$ influence $C$ in the same direction'', and the condition for \V-bias to be always positive for at least one level of $C$, as ``$X$ and $Y$ influence $C$ in opposite directions.'' The clarity provided by this corollary is that the influences on $C$ referred to here are conditional, i.e., the influences of $X$ on $C$ given levels of $Y$ and the influences of $Y$ on $C$ given levels of $X$.

\begin{corollary}[$C$-specific \V-bias corrolary 2]\label{crlry:V2}
In the \V~structure, if $X$ and $Y$ interact qualitatively on $C$, i.e., the effects $X$ on $C$ are of opposite signs across the two levels of $Y$ (e.g., $p_{C=1\mid10}>p_{C=1\mid00}$ and $p_{C=1\mid11}<p_{C=1\mid01}$) and/or the effects of $Y$ on $C$ are of opposite signs across the two levels of $X$, then \V-bias is negative for one level of $C$ and positive for the other level of $C$.
\end{corollary}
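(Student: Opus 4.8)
The plan is to reduce the whole statement to the sign of the single quantity $g(c)=p_{c\mid00}p_{c\mid11}-p_{c\mid10}p_{c\mid01}$ introduced after Theorem~\ref{thm:V}: since that theorem shows \V-bias$(C=c)$ has the same sign as $g(c)$ on every effect scale considered, and since the corollary only asserts that the bias is negative at one level of $C$ and positive at the other (without saying which), it is enough to prove that whenever $X$ and $Y$ interact qualitatively on $C$ the two numbers $g(0)$ and $g(1)$ have strictly opposite signs.

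First I would normalize away two symmetries. Writing $q_{ij}:=p_{C=1\mid ij}$, so that $p_{0\mid ij}=1-q_{ij}$, we have $g(1)=q_{00}q_{11}-q_{10}q_{01}$ and $g(0)=(1-q_{00})(1-q_{11})-(1-q_{10})(1-q_{01})$. Note that $g(c)$ is invariant under swapping the two parents of $C$ (this merely transposes the indices $10$ and $01$), so without loss of generality the qualitative interaction may be taken to be among the effects of $X$ on $C$ across the levels of $Y$; and note that replacing $q_{ij}$ by $1-q_{ij}$ (relabeling the levels of $C$) interchanges $g(0)$ with $g(1)$ while preserving the fact that these two effects have opposite signs, so without loss of generality the interaction may be taken in the form $q_{10}>q_{00}$ and $q_{01}>q_{11}$.

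The two sign computations are then mirror images. For $g(1)$: since $q_{01}>q_{11}\ge 0$ forces $q_{01}>0$, we get $q_{10}q_{01}>q_{00}q_{01}\ge q_{00}q_{11}$, hence $g(1)<0$. For $g(0)$: the hypotheses make $1-q_{00}$ and $1-q_{11}$ the strictly larger members of the pairs $\{\,1-q_{00},\,1-q_{10}\,\}$ and $\{\,1-q_{11},\,1-q_{01}\,\}$, and the identical chain of inequalities applied to the complements (using $1-q_{11}>1-q_{01}\ge 0$) gives $(1-q_{00})(1-q_{11})>(1-q_{10})(1-q_{01})$, hence $g(0)>0$. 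Thus $g(1)<0<g(0)$, so by Theorem~\ref{thm:V} \V-bias$(C=1)$ is negative and \V-bias$(C=0)$ is positive; undoing the two normalizations gives the statement in general. The ``and/or'' in the hypothesis causes no difficulty: if both parents' effects on $C$ interact qualitatively, either one may be used, and no inconsistency can arise because $g(0)$ and $g(1)$ are fixed numbers.

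I expect the only genuinely delicate step to be keeping track of strictness at the boundary of the probability simplex, i.e.\ when some $q_{ij}$ equals $0$ or $1$. The definition of qualitative interaction uses strict inequalities, and I would lean on them precisely where a product could collapse --- for instance $q_{01}>q_{11}\ge 0$ is exactly what forces $q_{01}>0$ and hence makes $q_{10}q_{01}>q_{00}q_{01}$ strict rather than merely weak. Once this bookkeeping is handled, the remaining algebra is entirely routine.
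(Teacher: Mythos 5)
Your proposal is correct and follows essentially the same route as the paper's own proof: both reduce the claim to showing that $g(0)$ and $g(1)$ have strictly opposite signs, invoke the same without-loss-of-generality reductions (one interacting parent suffices, and the orientation of the interaction can be fixed), and conclude by multiplying the two strict inequalities for $p_{C=1\mid\cdot\cdot}$ and again for their complements $p_{C=0\mid\cdot\cdot}$. Your extra bookkeeping on strictness at the boundary of the simplex is slightly more explicit than the paper's, but the argument is the same.
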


\subsection{An extension to a situation where the causes of the collider are marginally dependent}\label{subsec:Vx}

The result for the \textsc{or} effect scale in Theorem \ref{thm:V} for the \V~structure carries over to the \Vx~structure, such that
$$\Vx\textup{-bias}(C=c,\textsc{or})=\frac{p_{c\mid00}p_{c\mid11}}{p_{c\mid10}p_{c\mid01}},$$
that is, the \textsc{or} relating $X$ and $Y$ conditional on \mbox{$C=c$} is equal to their marginal \textsc{or} (which is $\displaystyle\frac{p_{Y=1\mid1}/p_{Y=0\mid1}}{p_{Y=1\mid0}/p_{Y=0\mid0}}$) times the bias factor above.
(The correspondence with the \V-bias result is due to the fact that the \V~structure is a special case of the \Vx~structure where the marginal \textsc{or} is 1.)
This is a well-known result in epidemiology for situations where sample selection and/or retention (here $C$) is influenced by both  exposure and outcome \citep{Kleinbaum1981,Greenland1996,Lash2009,jiang2016directions}; it is stated here for completeness.

This means that for the \Vx~structure, the effect of $X$ on $Y$, measured on the \textsc{or} scale, is not biased by conditioning on $C=c$ if $X$ and $Y$ do not interact on $C=c$ on the \textsc{rr} scale (and thus the bias factor is 1), but it is biased if $X$ and $Y$ interact on $C=c$ on the \textsc{rr} scale. This is similar to the interpretation of Theorem \ref{thm:V} for the \V~structure.
The difference is that with the \V~structure, this interpretation holds for all effect scales, but with the \Vx~structure, it holds only for the \textsc{or} scale.

The results in Corollaries \ref{crlry:V1} and \ref{crlry:V2} about the sign of \V-bias also hold for \Vx-bias on the \textsc{or} (but not other) effect scale.

\subsection{Visual representation of the sign of collider bias based on results in Sections \ref{subsec:V} and \ref{subsec:Vx}}\label{subsec:VisualVVx}

\begin{figure}
\caption{Schematic representation of variation in the effects of $X$ and $Y$ on $C$}
\centering
\includegraphics[width=.4\textwidth]{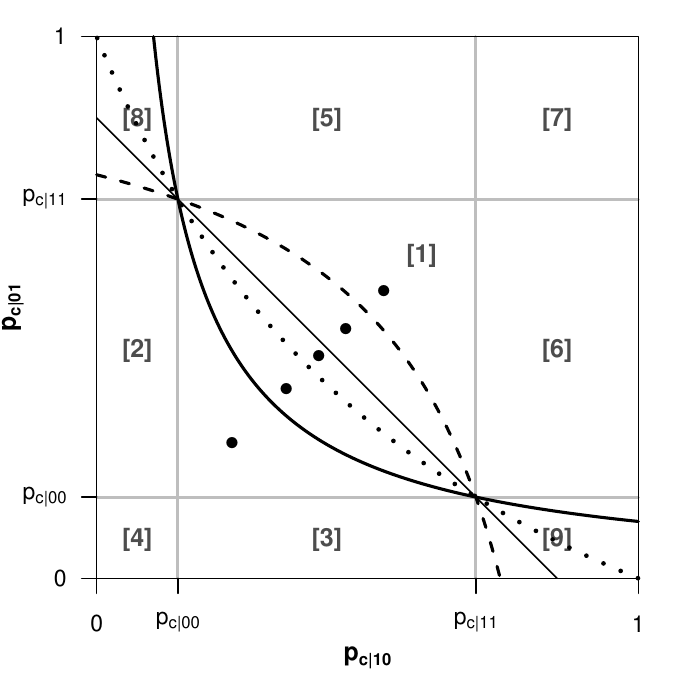}
\caption*{\footnotesize In this representation, $c$ is the level of $C$ such that $p_{c\mid11}\geq p_{c\mid00}$. Of the four black curves, the thick solid  and dashed curves are the two sets of points where $X$ and $Y$ do not interact on the \textsc{rr} scale on $C=c$ and on $C=1-c$, respectively; the diagonal line is where $X$ and $Y$ do not interact on the \textsc{rd} scale on (either value of) $C$; and the dotted curve is where $X$ and $Y$ do not interact on the \textsc{or} scale on (either value of) $C$.\\
\\}

\label{fig:2}
\caption{The sign of \V-bias on all effect scales and of $\nabla$-bias on the \textsc{or} scale, conditioning on the two levels of $C$}
\includegraphics[width=.4\textwidth]{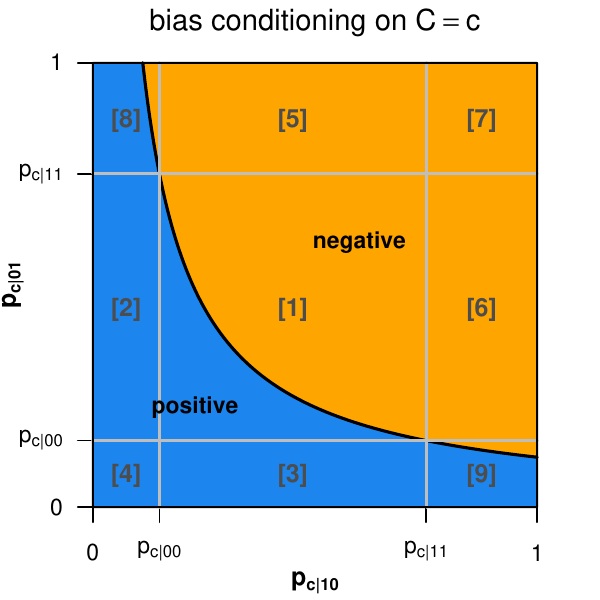}
\includegraphics[width=.4\textwidth]{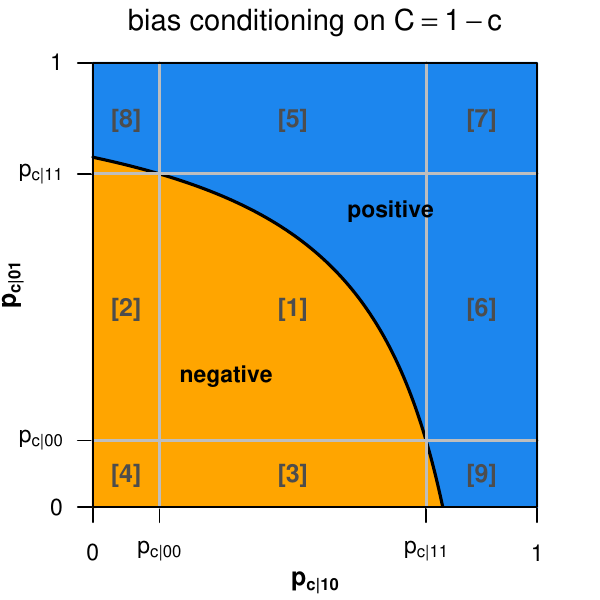}
\caption*{\footnotesize Here $c$ is the level of $C$ such that $p_{c\mid11}\geq p_{c\mid00}$. Bias is zero on the black curve, where $X$ and $Y$ do not interact on the \textsc{rr} scale on $C=c$ (in the left plot) or on $C=1-c$ (in the right plot).}
\label{fig:3}
\end{figure}

Given that much attention in the collider bias literature has been on the sign of bias, we devise a simple visual representation for the sign of collider bias as a function of the effects of $X$ and $Y$ on $C$.
This relies on depicting these effects through the four conditional probabilities of $C$ as in Fig. \ref{fig:2}.
In this representation, we pick $c$ to be the level of $C$ such that $p_{c\mid11}\geq p_{c\mid00}$.\add{ (This does not constitute any restriction, because if $p_{C=1\mid11}<p_{C=0\mid00}$, we simply pick $c=0$.)}
Fig. \ref{fig:2} is a graph with two axes representing possible values of  $p_{c\mid10}$ and $p_{c\mid01}$, ranging from 0 to 1, resulting in a square.
The specific values of $p_{c\mid00}$ and $p_{c\mid11}$ in a particular case are used to draw in the gray lines which divide the square into nine regions.
The specific values of $p_{c\mid10}$ and $p_{c\mid01}$ are represented as a point.
Take for example the first (i.e., leftmost) point of the five points marked by big dots. This point, combined with the gray lines, indicates a particular case where $p_{c\mid00}=.15,p_{c\mid11}=.75,p_{c\mid10}=p_{c\mid01}=.25$.

Which region the point falls in represents the type of effects $X$ and $Y$ have on the probability of $C=c$ (hereafter abbreviated to effects ``on $C=c$'').
\begin{itemize}
\item Region [1] is the \textit{same-sign effects}\footnote{Region [1] is named \textit{same-sign effects} instead of \textit{positive effects}, because the way $c$ is chosen ensures that some conditional effects are positive, so positive effects is not a defining characteristic of this region; the defining characteristic is that the effects are all in the same direction.} region: $X$ has positive effects on $C=c$ at both levels of $Y$ ($p_{c\mid10}>p_{c\mid00}$ and $p_{c\mid11}>p_{c\mid01}$); $Y$ has positive effects on $C=c$ at both levels of $X$ ($p_{c\mid01}>p_{c\mid00}$ and $p_{c\mid11}>p_{c\mid10}$). 
\item Regions [8] and [9] are the \textit{opposite-sign effects} regions: in region [8] $Y$ has positive effects on $C=c$ at both levels of $X$, but $X$ has negative effects on $C=c$ at both levels of $Y$ ($p_{c\mid10}<p_{c\mid00}$ and $p_{c\mid11}<p_{c\mid01}$); the opposite is true in region [9].
\item Regions [2]--[7] are the \textit{qualitative interaction} regions: in regions [2] and [5] the effect of $X$ changes sign depending on the level of $Y$; in regions [3] and [6] the effect of $Y$ changes sign across levels of $X$; and in regions [4] and [7] the effects of both causes change sign across levels of the other cause.
\end{itemize}
\add{There are no points in this unit square where the conditional effects of $X$ and $Y$ on $C=c$ are all negative, because that would imply $p_{c\mid00}>p_{c\mid11}$, and we have picked $c$ so that the opposite is true.}

\add{For a specific situation, this representation (with the regions and the point) is not unique, as it depends on the coding of the causes $X$ and $Y$ of the collider $C$. Reverse-coding both causes does not change how the regions look, but reflects the point across the 45 degree diagonal of the square -- e.g., if we are originally in region [8], we will be in region [9] in the new representation. Reverse-coding only one of the two causes, however, shifts the regions (because the probabilities labeled $p_{c\mid00}$ and $p_{c\mid11}$ are now different) in addition to changing the point; one potentially helpful result of this is switching from a representation with \textit{opposite-sign effects} (the point is in region [8] or [9]) to a new representation with \textit{same-sign effects} (the point is in region [1]) -- see an example shortly.
On the other hand, reverse-coding $C$ does not change the representation; it only changes the label of $c$.}

\medskip

The next level of detail in this representation scheme concerns only the \textit{same-sign} and \textit{opposite-sign effects} regions (regions [1], [8] and [9]). Within these regions, the effects of $X$ and $Y$ on $C$ can be further characterized in terms of \textit{quantitative interaction} (or non-interaction) on different scales. Fig. \ref{fig:2} includes four non-interaction curves:
(i) non-interaction on the \textsc{rr} scale on $C=c$ (the thick solid curve), (ii) non-interaction on the \textsc{rr} scale on $C=1-c$ (dashed curve), and non-interaction (iii) on the \textsc{or} scale (dotted curve) and (iv) on the \textsc{rd} scale (thin diagonal line) on either level of $C$.
\add{Note that on the \textsc{or} and \textsc{rd} scale, non-interaction effects on one level of $C$ implies non-interaction effects on the other level, but the same is not true for the \textsc{rr} scale -- hence the two curves for this scale.}

Consider $X$ and $Y$'s interactive effects at the five marked points (from left to right) in Fig. \ref{fig:2}.
$X$-$Y$ interaction on $C=c$ is positive on all three scales at the first point; negative on the \textsc{rr} scale but positive on the other two scales at the second point; negative on the \textsc{rr} and \textsc{or} scales and positive on the \textsc{rd} scale at the third point; and negative on all three scales at the fourth and fifth points.
$X$-$Y$ interaction on $C=1-c$ is negative on all three scales at the first and second points; positive on the \textsc{or} scale but negative on the other two scales at the third point; positive on the \textsc{or} and \textsc{rd} scales but negative on the \textsc{rr} scale at the fourth point; and positive on all three scales at the fifth point.

\medskip

Now we use this schematic representation to examine the sign of collider bias.
Fig. \ref{fig:3}, which shows the sign of \V-bias (regardless of effect scale) and \Vx-bias on the \textsc{or} effect scale conditioning on the two levels of $C$, makes clear several points:
\begin{itemize}
\item In the \textit{qualitative interaction} regions (regions [2]--[7]), such bias is positive for one level of $C$ and negative for the other (Corollary \ref{crlry:V2}). \add{For each level, the sign of bias is clear from the figure.}
\item In the \textit{same-sign effects} region (region [1]), such bias is always negative for at least one level of $C$ (Corollary \ref{crlry:V1}). \add{The sign of bias depends on the specific location within the region. For this region, there is a nice heuristic: bias conditioning on a level of $C$ is negative (positive) if $X$ and $Y$ interact negatively (positively) on the \textsc{rr} scale on that level of $C$, and is zero in the absence of such interaction.}
\item In the \textit{opposite-sign effects} regions (regions [8] and [9]), such bias is always positive for at least one level of $C$ (Corollary \ref{crlry:V1}). \add{Here the sign of bias also depends on the specific location within the region. Since the effects of $X$ and $Y$ are in opposite directions, there is no memorable heuristic, but one can always use the function $g(C)$ to determine the sign of bias.}
\end{itemize}
\add{Continuing the earlier comment about different representations, an alternative for this third case is to reverse-code either $X$ or $Y$ to obtain a new representation where the point is in the \textit{same-sign effects} region, making available the mentioned heuristic. This requires re-representation of all the relevant probabilities in the new notation.}

\medskip

\add{The examples in \cite{Berkson1946} belong in the \textit{same-sign effects} (region [1]) case, with $c=1$, as both diabetes ($X$) and cholecystitis ($Y$) increasing the probability of being selected into the hospital-based sample ($C=1$). While this paper is complex with many scenarios, we select (and adapt) two examples to illustrate the results shown in Fig. \ref{fig:3}. The two examples (see Table \ref{tab:BerksonExamples}) involve the same general population, but two different sampling cases, one with diabetes and cholecystitis interacting positively on the \textsc{rr} scale on sample selection, leading to positive bias (Example 1a), the other with negative interaction, resulting in negative bias (Example 1b). Example 1a is thus a point located in region [1] and in the blue-in-the-left-plot area; Example 1b is a point also in region [1] but in the orange-in-the-left-plot area.}

\begin{table}
\caption{Positive and negative \V-bias examples based on \cite{Berkson1946}}\label{tab:BerksonExamples}
\resizebox{\textwidth}{!}{%
\begin{tabular}{lrrrrlrrr}
\multicolumn{4}{l}{General population (from the fifth table in \citeauthor{Berkson1946})} 
\\
 & cholecystitis  & no cholecystitis & ~~~~~~~~~~~~ 
\\ \cline{2-3} 
\multicolumn{1}{l|}{diabetes} & \multicolumn{1}{r|}{3,000} & \multicolumn{1}{r|}{97,000} & 100,000 
\\ \cline{2-3} 
\multicolumn{1}{l|}{no diabetes} & \multicolumn{1}{r|}{297,000} & \multicolumn{1}{r|}{9,603,000} & 9,900,000 
\\ \cline{2-3} 
& 300,000 & 9,700,000 & 10,000,000
\\
\multicolumn{4}{l}{$\Rightarrow$ marginal \textsc{or} $=1$}
\\
\multicolumn{4}{l}{$\Rightarrow$ marginal \textsc{rd} of cholecystitis given diabetes status $=0$}
\\
\\
\multicolumn{4}{l}{\textbf{Example 1a}: Positive $X$-$Y$ interaction on $C$ on the \textsc{rr} scale} 
&& \multicolumn{4}{l}{\textbf{Example 1b}: Negative $X$-$Y$ interaction on $C$ on the \textsc{rr} scale}
\\
\multicolumn{4}{l}{$\{p_{C=1\mid00},p_{C=1\mid10},p_{C=1\mid01},p_{C=1\mid11}\}=$}
&&
\multicolumn{4}{l}{$\{p_{C=1\mid00},p_{C=1\mid10},p_{C=1\mid01},p_{C=1\mid11}\}=$}
\\
\multicolumn{4}{l}{~~~~$\{.0041,.0056,.0045,.0093\}$,~~~$g(1)=0.000013>0$.}
&& \multicolumn{4}{l}{~~~~$\{.005,.055,.030,.102\}$,~~~$g(1)=-0.0011<0$.}
\\[-.6em]
\\
\multicolumn{4}{l}{Hospital sample (from \citeauthor{Berkson1946}'s second table)} &&
\multicolumn{4}{l}{Hospital sample (adapted from \citeauthor{Berkson1946}'s seventh table)}
\\
 & cholecystitis  & no cholecystitis & ~~~~~~~~~~~~ &&& cholecystitis & no cholecystitis & ~~~~~~~~~~~~
\\ \cline{2-3} \cline{7-8}
\multicolumn{1}{l|}{diabetes} & \multicolumn{1}{r|}{28} & \multicolumn{1}{r|}{548} & 576 && 
\multicolumn{1}{l|}{diabetes} & \multicolumn{1}{r|}{306} & \multicolumn{1}{r|}{5,311} & 5,617
\\ \cline{2-3} \cline{7-8}
\multicolumn{1}{l|}{no diabetes} & \multicolumn{1}{r|}{1,326} & \multicolumn{1}{r|}{39,036} & 40,362
&&
\multicolumn{1}{l|}{no diabetes} & \multicolumn{1}{r|}{2,896} & \multicolumn{1}{r|}{48,015} & 50,911
\\ \cline{2-3} \cline{7-8}
& 1,354 & 39,584 & 40,938
&&& 3.202 & 53,326 & 56,528
\\
\multicolumn{4}{l}{$\Rightarrow$ conditional \textsc{or} $=1.504$}
&& \multicolumn{4}{l}{$\Rightarrow$ conditional \textsc{or} $=0.312$}
\\
\multicolumn{4}{l}{$\Rightarrow$ conditional \textsc{rd} of chole. given diab. status $=0.0068$}
&& \multicolumn{4}{l}{$\Rightarrow$ conditional \textsc{rd} of chole. given diab. status $=-.0024$}
\end{tabular}%
}
\end{table}
\medskip

Relating to the relevant literature, our analytical results are consistent with findings by \citet{jiang2016directions}, but sheds additional insights.
\citeauthor{jiang2016directions} examine the direction in which the treatment effect measured on the \textsc{or} scale is biased when sample selection is influenced by treatment and outcome.
This is \mbox{\Vx-bias$(C=1,\textsc{or})$} in our notation, where $C=1$ denotes sample selection.
They sign such bias in cases defined by $X$-$Y$ interaction or non-interaction on $C$ on the \textsc{rr}, \textsc{or} and \textsc{rd} scales, with focus on situations with no qualitative interaction (regions [1] and [8-9]).
Their results about how the sign of bias relates to $X$-$Y$ interaction on the \textsc{rr} scale match ours exactly.
Regarding $X$-$Y$ interaction on the \textsc{or} and \textsc{rd} scales, our results are more informative.
For an example, assume that the effects of $X$ and $Y$ on $C$ are positive, i.e., we are looking at region [1] in Fig. \ref{fig:2} and in the left plot of Fig. \ref{fig:3}, taking $c$ to be 1.
According to \citeauthor{jiang2016directions}, if there is non-positive $X$-$Y$ interaction on $C=1$ on the \textsc{or} scale (i.e., in the area to the right of the dotted curve including the dotted curve) or on the \textsc{rd} scale (i.e., in the area to the right of the diagonal line including the diagonal line), \Vx-bias conditioning on $C=1$ for the \textsc{or} effect scale is non-positive.
As can be seen in Fig. \ref{fig:2}, except for the two points where the non-interaction curves intersect, these two areas are contained in the area of negative $X$-$Y$ interaction on the \textsc{rr} scale, where this bias is strictly negative; it is only at those two special points (where only treatment or outcome, but not both, influences sample selection) that this bias is zero.

\add{As is clear from Fig. \ref{fig:3}, what matters for determining the sign of \V-bias (and of \Vx-bias on the \textsc{or} scale) is information about the interaction (or non-interaction) of $X$ and $Y$ on each conditioning level of $C$ on the \textsc{rr} scale. Information about interaction on the \textsc{or} or \textsc{rd} scale is suboptimal for this purpose.}

\medskip

\add{For the remainder of this discussion, we consider the \textit{same-sign efffects} case (region [1]) only. Within this region, the relative area of negative and positive bias depends on the positions of the corners $(p_{c\mid00},p_{c\mid00})$ and $(p_{c\mid11},p_{c\mid11})$. The closer $(p_{c\mid00},p_{c\mid00})$ is to $(0,0)$, the larger the area of negative bias conditioning on $C=c$ relative to the area of positive bias. The closer $(p_{c\mid11},p_{c\mid11})$ is to $(1,1)$, the larger the area of negative bias conditioning on $C=1-c$ relative to the area of positive bias. As $p_{c\mid00}\to 0$, the \textsc{rr}-scale non-interaction on $C=c$ curve approaches the west-and-south border of region [1]. As $p_{c\mid11}\to1$, the \textsc{rr}-scale non-interaction on $C=1-c$ curve approaches the north-and-east border of region [1].

Sufficient causes situations, where the effects of $X$ and $Y$ on $C$ are deterministic, are special cases that involve both of these limits.
Take, for example, the case where $C=c$ when $X=1,Y=1$ and $C=1-c$ otherwise (i.e., both causes are needed).
In this case, $p_{c\mid11}=1,p_{c\mid00}=0$, so region [1] is the whole unit square, and the two plots in Fig. \ref{fig:3} would be fully orange except the west-and-south border of the left plot and the north-and-east border of the right plot.
In addition, the point $(p_{c\mid10},p_{c\mid01})$ is exactly at the $(0,0)$ corner. 
Conditional on $C=c$, there is no variation in $X$ or $Y$, so the conditional covariance of $X$ and $Y$ is zero and the conditional \textsc{rd} and \textsc{or} of $Y$ comparing the two levels of $X$ are undefined (this can be read off of the formulas in Theorem \ref{thm:V}).
Yet \V-bias conditioning on $C=1-c$ is negative, with specific values $\displaystyle-\frac{p_{X=1}p_{X=0}p_{Y=1}p_{Y=0}}{(p_{X=1}p_{Y=0}+p_{X=0}p_{Y=1}+p_{X=0}p_{Y=0})^2}$ on the covariance scale, $-p_{Y=1}$ on the \textsc{rd} scale, and 0 on the \textsc{or} scale.
Mirroring this case is the case where either cause is sufficient, i.e., $C=c$ when $X=1$ and/or $Y=1$, and $C=1-c$ when $X=0,Y=0$. In this case, region [1] is also the whole unit square, but the $(p_{c\mid10},p_{c\mid01})$ point is at $(1,1)$. In addition, there are situations where the effects of $X$ and $Y$ on $C$ combine deterministic and stochastic elements; these may involve one or both of the two limits.
Table \ref{tab:examplesCole} shows several situations with varying deterministic and semi-deterministic effects, constructed based on \citeauthor{Cole2010a}'s (\citeyear{Cole2010a}) original example.}

\medskip

\begin{table}[h!]
\caption{\V-bias examples for deterministic and semi-deterministic causal structures based on \cite{Cole2010a}}\label{tab:examplesCole}
\resizebox{\textwidth}{!}{%
\begin{tabular}{lrrrrlrrr}
\multicolumn{4}{l}{All party attendants} 
\\
 & ~~~~~~~~~~~~~flu  & ~~~~~~~~~~no flu & ~~~~~~~~~~~~ 
\\ \cline{2-3} 
\multicolumn{1}{l|}{tainted food} & \multicolumn{1}{r|}{5} & \multicolumn{1}{r|}{45} & 50 
\\ \cline{2-3}
\multicolumn{1}{l|}{no tainted food} & \multicolumn{1}{r|}{5} & \multicolumn{1}{r|}{45} & 50 
\\ \cline{2-3} 
& 10 & 90 & 100
\\
\multicolumn{4}{l}{$\Rightarrow$ marginal \textsc{or} $=1$}
\\
\multicolumn{4}{l}{$\Rightarrow$ marginal \textsc{rd} of flu given food $=0$}
\\
\\
\multicolumn{4}{l}{\textbf{Example 2a}: Deterministic effects} &&
\multicolumn{4}{l}{\textbf{Example 2b}: Semi-deterministic effects}\\
\multicolumn{4}{l}{$\{p_{C=1\mid00},p_{C=1\mid10},p_{C=1\mid01},p_{C=1\mid11}\}=\{0,1,1,1\}$}
&& \multicolumn{4}{l}{$\{p_{C=1\mid00},p_{C=1\mid10},p_{C=1\mid01},p_{C=1\mid11}\}=\{0,.4,.6,1\}$}
\\[-.6em]
\\
\multicolumn{4}{l}{Those who developed fever}
&& \multicolumn{4}{l}{Those who developed fever}
\\
 & flu  & no flu & ~~~~~~~~~~~~ &&& ~~~~~~~~~~~~~flu & ~~~~~~~~~~no flu & ~~~~~~~~~~~~
\\ \cline{2-3} \cline{7-8}
\multicolumn{1}{l|}{tainted food} & \multicolumn{1}{r|}{5} & \multicolumn{1}{r|}{45} & 50 && 
\multicolumn{1}{l|}{tainted food} & \multicolumn{1}{r|}{5} & \multicolumn{1}{r|}{18} & 23
\\ \cline{2-3} \cline{7-8}
\multicolumn{1}{l|}{no tainted food} & \multicolumn{1}{r|}{5} & \multicolumn{1}{r|}{0} & 5 
&&
\multicolumn{1}{l|}{no tainted food} & \multicolumn{1}{r|}{3} & \multicolumn{1}{r|}{0} & 3
\\ \cline{2-3} \cline{7-8}
& 10 & 45 & 65
&&& 8 & 23 & 26
\\
\multicolumn{4}{l}{$\Rightarrow$ conditional \textsc{or} $=0$}
&& \multicolumn{4}{l}{$\Rightarrow$ conditional \textsc{or} $=0$}
\\
\multicolumn{4}{l}{$\Rightarrow$ conditional \textsc{rd} of flu given food $=-0.9$}
&& \multicolumn{4}{l}{$\Rightarrow$ conditional \textsc{rd} of flu given food $=-0.78$}
\\
\\
\multicolumn{4}{l}{\textbf{Example 2c}: Semi-deterministic effects} &&
\multicolumn{4}{l}{\textbf{Example 2d}: Semi-deterministic effects}\\
\multicolumn{4}{l}{$\{p_{C=1\mid00},p_{C=1\mid10},p_{C=1\mid01},p_{C=1\mid11}\}=\{0,.4,.6,.8\}$}
&& \multicolumn{4}{l}{$\{p_{C=1\mid00},p_{C=1\mid10},p_{C=1\mid01},p_{C=1\mid11}\}=\{.2,1,1,1\}$}
\\[-.6em]
\\
\multicolumn{4}{l}{Those who developed fever}
&& \multicolumn{4}{l}{Those who developed fever}
\\
 & flu  & no flu & ~~~~~~~~~~~~ &&& flu & no flu & ~~~~~~~~~~~~
\\ \cline{2-3} \cline{7-8}
\multicolumn{1}{l|}{tainted food} & \multicolumn{1}{r|}{4} & \multicolumn{1}{r|}{18} & 22 && 
\multicolumn{1}{l|}{tainted food} & \multicolumn{1}{r|}{5} & \multicolumn{1}{r|}{45} & 50
\\ \cline{2-3} \cline{7-8}
\multicolumn{1}{l|}{no tainted food} & \multicolumn{1}{r|}{3} & \multicolumn{1}{r|}{0} & 3
&&
\multicolumn{1}{l|}{no tainted food} & \multicolumn{1}{r|}{5} & \multicolumn{1}{r|}{9} & 14
\\ \cline{2-3} \cline{7-8}
& 7 & 18 & 25
&&& 10 & 54 & 64
\\
\multicolumn{4}{l}{$\Rightarrow$ conditional \textsc{or} $=0$}
&& \multicolumn{4}{l}{$\Rightarrow$ conditional \textsc{or} $=0.2$}
\\
\multicolumn{4}{l}{$\Rightarrow$ conditional \textsc{rd} of flu given food $=-0.82$}
&& \multicolumn{4}{l}{$\Rightarrow$ conditional \textsc{rd} of flu given food $=-0.26$}
\end{tabular}%
}\end{table}

\subsection{Bias due to conditioning on a specific level of the child of the collider in the \Y~structure}

\add{In the Berkson examples discussed above, we referenced the effects of diabetes ($X$) and cholecystitis ($Y$) on sample selection without considering any intermediate variables. Suppose that the selection of the sample is a random draw from the hospitalized patients with equal probabilities regardless of the reasons for which the patients are hospitalized. Then we can think of $X$ and $Y$ as affecting the probability of hospitalization (denoted $C$), which in turns is the sole cause of sample selection (denoted $D$), and we have a \Y-structure. Note that for a \Y-structure, $D$ is not influenced by $X$ and $Y$ other than through $C$, and $D$ does not share common causes with $X$ and $Y$.}

The following theorem provides formulas for \Y-bias conditioning on a level of $D$. For the covariance or \textsc{rd} effect scale, such bias is dependent on the marginal probabilities of $X$ and $Y$, and the conditional probabilities of $C$ and $D$; for the \textsc{or} effect scale, such bias is a function of the conditional probabilities of $C$ and $D$ only.

\begin{theorem}[$D$-specific \Y-bias theorem]\label{thm:Y}
\Y-bias conditioning on $D=d$, for $d\in\{0,1\}$, is given by the following expressions:
\begin{align*}
    \Y\textup{-bias}(D=d,\mathrm{cov})
    &=\frac{p_{X=1}p_{X=0}p_{Y=1}p_{Y=0}}{\{\mathrm{P}(D=d)\}^2}\cdot
    (p_{d\mid1}-p_{d\mid0})\cdot\left\{
    \begin{matrix*}[l]
        p_{d\mid1}(p_{C=1\mid00}p_{C=1\mid11}-p_{C=1\mid10}p_{C=1\mid01})-\\
        p_{d\mid0}(p_{C=0\mid00}p_{C=0\mid11}-p_{C=0\mid10}p_{C=0\mid01})
    \end{matrix*}
    \right\},\\[-.5em]
    \\
    \Y\textup{-bias}(D=d,\textsc{rd})&=
    \frac{p_{Y=1}p_{Y=0}}{
    \left\{
    \begin{matrix*}[l]
		p_{Y=1}(p_{C=1\mid11}p_{d\mid1}+p_{C=0\mid11}p_{d\mid0})+\\
		p_{Y=0}(p_{C=1\mid10}p_{d\mid1}+p_{C=0\mid10}p_{d\mid0})
    \end{matrix*}
    \right\}
    \cdot
    \left\{
    \begin{matrix*}[l]
		p_{Y=1}(p_{C=1\mid01}p_{d\mid1}+p_{C=0\mid01}p_{d\mid0})+\\
		p_{Y=0}(p_{C=1\mid00}p_{d\mid1}+p_{C=0\mid00}p_{d\mid0})
    \end{matrix*}
    \right\}
    }\times\\
    &~~~~~\times(p_{d\mid1}-p_{d\mid0})\cdot\left\{
    \begin{matrix*}[l]
        p_{d\mid1}(p_{C=1\mid00}p_{C=1\mid11}-p_{C=1\mid10}p_{C=1\mid01})-\\
        p_{d\mid0}(p_{C=0\mid00}p_{C=0\mid11}-p_{C=0\mid10}p_{C=0\mid01})
    \end{matrix*}
    \right\},\\[-.5em]
    \\
    \Y\textup{-bias}(D=d,\textsc{or})&=
    \frac{(p_{d\mid1}-p_{d\mid0})(p_{d\mid1}p_{C=1\mid00}p_{C=1\mid11}-p_{d\mid0}p_{C=0\mid00}p_{C=0\mid11})+p_{d\mid1}p_{d\mid0}}{(p_{d\mid1}-p_{d\mid0})(p_{d\mid1}p_{C=1\mid10}p_{C=1\mid01}-p_{d\mid0}p_{C=0\mid10}p_{C=0\mid01})+p_{d\mid1}p_{d\mid0}}.
\end{align*}
For completeness,
\begin{align*}
\P(D=d)&=p_{X=1}p_{Y=1}(p_{C=1\mid11}p_{d\mid1}+p_{C=0\mid11}p_{d\mid0})+p_{X=0}p_{Y=1}(p_{C=1\mid01}p_{d\mid1}+p_{C=0\mid01}p_{d\mid0})+\\
&~~~~~p_{X=1}p_{Y=0}(p_{C=1\mid10}p_{d\mid1}+p_{C=0\mid10}p_{d\mid0})+p_{X=0}p_{Y=0}(p_{C=1\mid00}p_{d\mid1}+p_{C=0\mid00}p_{d\mid0}).
\end{align*}
\end{theorem}

\medskip

Recall function $g(c)=p_{c\mid00}p_{c\mid11}-p_{c\mid10}p_{c\mid01}$ defined in section \ref{subsec:V}.
Theorem \ref{thm:Y} shows that \Y-bias conditioning on $D=d$ is of the same sign as $(p_{d\mid1}-p_{d\mid0})\cdot\{p_{d\mid1}g(1)-p_{d\mid0}g(0)\}$.
This means the sign of \Y-bias depends on the effect of $C$ on $D$, the effects of $X$ and $Y$ on $C$, and how these two types of effects relate to each other.

That $C$ is influenced by both $X$ and $Y$ means that $g(1)$ and $g(0)$ cannot be simultaneously zero, because $g(1)=g(0)=0$ implies that either $X$ or $Y$ has no effect on $C$.
The following situations are therefore mutually exclusive:
\begin{enumerate}
\item If $g(1)\geq0$ and $g(0)\leq0$, \Y-bias conditioning on $D=d$ is the same sign as the effect of $C$ on the probability of $D=d$.
\item If $g(1)\leq0$ and $g(0)\geq0$, the reverse is true.
\item If $g(1)$ and $g(0)$ are both non-positive or both non-negative, \Y-bias conditioning on $D=d$ is zero if the effect of $C$ on the probability of $D=d$ is $p_{d\mid1}/p_{d\mid0}=g(0)/g(1)$.
\begin{enumerate}
\item If $g(1)\leq0$ and $g(0)\leq0$, \Y-bias conditioning on $D=d$ is positive if $p_{d\mid1}/p_{d\mid0}$ is in between the two values $g(0)/g(1)$ and 1 (regardless of their order), and negative if $p_{d\mid1}/p_{d\mid0}$ is outside of this range.
\item If $g(1)\geq0$ and $g(0)\geq0$, the reverse is true.
\end{enumerate}
\end{enumerate}

\medskip

\add{We visualize the sign of \Y-bias in Fig. \ref{fig:4}. Building on the scheme we have for representing the effects of $X$ and $Y$ on $C$, this figure adds one design element that reflects the direction (but not magnitude) of the effect of $C$ on $D$: picking $d$ such that $\P(D=d\mid C=c)\geq\P(D=d\mid C=1-c)$.
The sign of bias information shown here is thus partial.
Yet the figure offers additional understanding that may otherwise be not obvious.}
\begin{itemize}
\item \add{In the \textit{qualitative interaction} regions (regions [2]--[7]), \Y-bias conditioning on $D=d$ is of the same sign as \V-bias conditioning on $C=c$, \Y-bias conditioning on $D=1-d$ is of the same sign as \V-bias conditioning on $C=1-c$.}
\item \add{In the \textit{same-sign effects} region (region [1]), \Y-bias is negative for at least one level of $D$. The sign of bias depends on the specific location within the region. A heuristic: the sign of \Y-bias conditioning on $D=d$ (i) is positive if $X$-$Y$ interaction on $C=c$ is non-negative (meaning positive interaction or no interaction) on the \textsc{rr} scale, (ii) is negative if $X$-$Y$ interaction on $C=c$ is non-positive on the \textsc{rd} scale, and (iii) depends additionally on the effect of $C$ on $D$ if $X$-$Y$ interaction on $C=c$ is negative on the \textsc{rr} scale but positive on the \textsc{rd} scale.
This statement also holds if we simultaneously replace $d$ with $1-d$ and $c$ with $1-c$.}
\item \add{In the \textit{opposite-sign effects} regions (regions [8] and [9]), \Y-bias is positive for at least one level of $D$. The sign of bias depends on the specific location within the region. Again, since the effects of $X$ and $Y$ on $C$ are in opposite directions, there is no convenient heuristic, but one can check the cases 1, 2, 3a and 3b listed above to determine the sign of bias. And, of course, an alternative is to reverse-code one of the two causes of $C$ to obtain a different representation where the point is in the \textit{same-sign effects} region.}
\end{itemize}

\begin{figure}
\caption{The sign of \Y-bias conditioning on the two levels of $D$}
\label{fig:4}
\centering
\includegraphics[width=.4\textwidth]{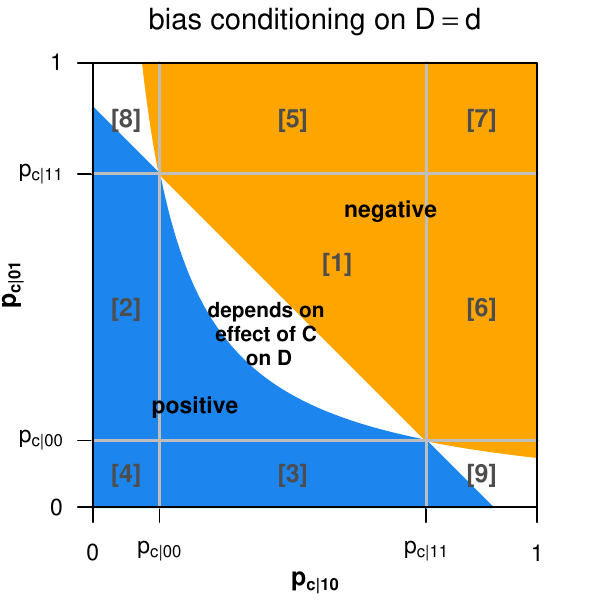}
\includegraphics[width=.4\textwidth]{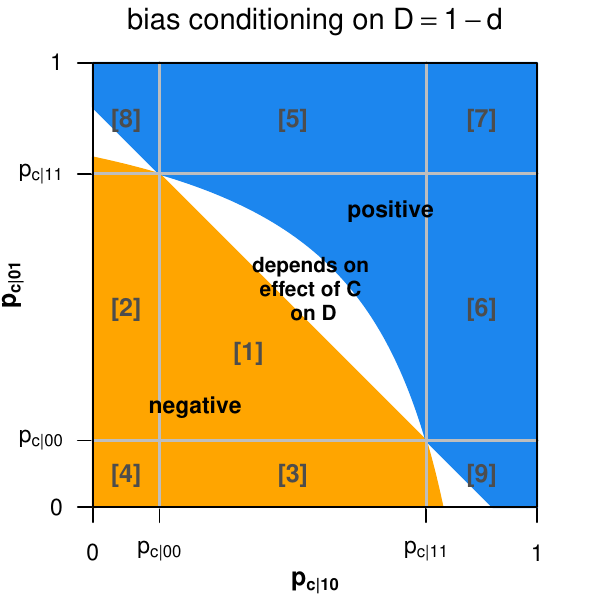}
\caption*{\footnotesize Here $c$ is the level of $C$ such that $p_{c\mid11}\geq p_{c\mid00}$, and $d$ is the level of $D$ such that $p_{d\mid C=c}>p_{d\mid C=1-c}$. Each of the two plots above are split into blue, orange and white areas by the intersection of the curve representing $X$-$Y$ non-interaction on the risk ratio scale on $C=c$ (in the left plot) and $C=1-c$ (in the right plot) and the line representing non-interaction on the risk difference scale. Except for the two intersection points, the curve/line segments separating the colored and white areas belong in the colored areas. \Y-bias conditioning on the specific level of $D$ is positive in the blue area, negative in the orange area, and depends on the effect of $C$ on $D$ (see text for more detail) in the white areas.}
\end{figure}

\add{A quick note: The pair $(c,d)$ in this schematic representation may take on any of the four pairs of values $(0,0),(0,1),(1,0),(1,1)$ depending on the effects of $X$ and $Y$ on $C$ and the effect of $C$ on $D$. If the pair in a specific situation proves inconvenient for cognitive processing or book-keeping, reverse-coding $C$ or $D$ or both is an option. Such recoding does not affect the picture, except for changing the numeric label(s) of $c$ and/or $d$.}

\smallskip

\add{This visualization shows that unlike for the \V~structure where only information on interaction on the \textsc{rr} is needed for signing \V-bias, here information on interaction on the \textsc{rd} scale is additionally helpful.

Why this is the case deserves some explanation.
To quickly see why, let's just focus on \Y-bias conditioning on $D=d$ in region [1].
To simplify the argument, we suppose $c=1$ (in this paragraph only) and refer to results for cases 1, 2 and 3a discussed earlier.
Split region [1] into two outer areas and two inner areas using the two \textsc{rr}-scale non-interaction curves and the diagonal \textsc{rd}-scale non-interaction line. Let the two outer areas include the two curves, and the inner area on the right side of the diagonal line include the line.
The bottom-left outer area belongs to case 1, where \Y-bias conditioning on $D=d$ there is the same sign as the effect of $C$ on $D=d$, which is positive, as shown in blue in the left plot of the figure.
Similarly, the top-right outer area belongs to case 2, where \Y-bias conditioning on $D=d$ is negative.
Now consider the inner area to the right of the diagonal line. With $g(0)<0,g(1)<0$, this area belongs in case 3a.
In this area $g(1)-g(0)=p_{C=1\mid00}+p_{C=1\mid11}-p_{C=1\mid10}-p_{C=1\mid01}\leq0$ because $X$ and $Y$ interact non-positively on $C$ on the \textsc{rd} scale.
It follows that $0<g(0)/g(1)\leq1$, implying that $p_{d\mid1}/p_{d\mid0}$ (which is greater than 1 because the effect of $C$ on $D=d$ is positive) is outside the range between $g(0)/g(1)$ and 1.
Based on case 3a, this means \Y-bias conditioning on $D=d$ is negative. 
Combining this and the top-right outer area, we have the negative bias part shown in orange. 
In the remaining area shown in white (which also belongs in case 3a), $g(0)/g(1)>1$, so without information on the actual magnitude of $p_{d\mid1}/p_{d\mid0}$, the sign of \Y-bias conditioning on $D=d$ cannot be determined.}

\medskip

\add{Coming back to the big picture, note that what is missing in each of the two plots in Fig. \ref{fig:4} is a curve that separates negative and positive bias, where bias is zero. This curve lies in the white space of the plot, and goes through the two points $(p_{c\mid00},p_{c\mid11})$ and $(p_{c\mid11},p_{c\mid00})$. Where it lies in the left and right plots depends respectively on the two \textsc{rr}s $\displaystyle\frac{\P(D=d\mid C=c)}{\P(D=d\mid C=1-c)}$ and $\displaystyle\frac{P(D=1-d\mid C=c)}{\P(D=1-d\mid C=1-c)}$ in the specific case.
One thing that can be said about the zero curves for \Y-bias is that they have less curvature than the zero curves for the related \V-bias.}

\medskip

\begin{corollary}[$D$-specific \Y-bias corollary]\label{crlry:Y}
We refer to collider bias in the \V~substructure embedded in the \Y~structure as `embedded \V-bias' and denote it as \V\textup{-bias-em}. For the covariance effect scale, \Y-bias relates to embedded \V-bias through the following formula:
\begin{align*}
    \Y\textup{-bias}(D=d&,\mathrm{cov})=
    \frac{(p_{d\mid1}-p_{d\mid0})}{\{\mathrm{P}(D=d)\}^2}\cdot
    \begin{bmatrix*}[l]
    p_{d\mid1}\{\mathrm{P}(C=1)\}^2\cdot\V\textup{-bias-em}(C=1,\mathrm{cov})-\\
    p_{d\mid0}\{\mathrm{P}(C=0)\}^2\cdot\V\textup{-bias-em}(C=0,\mathrm{cov})
    \end{bmatrix*}.
\end{align*}
\end{corollary}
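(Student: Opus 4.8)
The plan is to obtain the identity by direct substitution, chaining Theorem~\ref{thm:V} (applied to the embedded $\V$ substructure) into the covariance formula of Theorem~\ref{thm:Y}. The first and only conceptual step is to observe that in the $\Y$ structure the variable $D$ is influenced by $C$ alone and influences nothing, so marginalizing $D$ out does not change the joint law of $(X,Y,C)$; hence the $\V$ substructure $X\to C\leftarrow Y$ carries exactly the same marginal probabilities $p_{X=1},p_{X=0},p_{Y=1},p_{Y=0}$ and the same conditional probabilities $p_{c\mid xy}$ as the full $\Y$ structure, and Theorem~\ref{thm:V} applies to it verbatim. In particular, for $c\in\{0,1\}$,
\begin{align*}
\{\mathrm{P}(C=c)\}^2\cdot\V\textup{-bias-em}(C=c,\mathrm{cov})=p_{X=1}p_{X=0}p_{Y=1}p_{Y=0}\cdot(p_{c\mid00}p_{c\mid11}-p_{c\mid10}p_{c\mid01}).
\end{align*}

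Next I would plug the $c=1$ and $c=0$ instances of this expression into the bracketed difference on the right-hand side of the corollary. The common factor $p_{X=1}p_{X=0}p_{Y=1}p_{Y=0}$ pulls out of both terms, leaving the bracket equal to
\begin{align*}
p_{X=1}p_{X=0}p_{Y=1}p_{Y=0}\cdot\left\{p_{d\mid1}(p_{C=1\mid00}p_{C=1\mid11}-p_{C=1\mid10}p_{C=1\mid01})-p_{d\mid0}(p_{C=0\mid00}p_{C=0\mid11}-p_{C=0\mid10}p_{C=0\mid01})\right\}.
\end{align*}
Multiplying through by the prefactor $(p_{d\mid1}-p_{d\mid0})/\{\mathrm{P}(D=d)\}^2$ reproduces, term for term, the expression for $\Y\textup{-bias}(D=d,\mathrm{cov})$ derived in Theorem~\ref{thm:Y}, which finishes the argument.

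There is essentially no analytic difficulty: the corollary is a bookkeeping rearrangement of the covariance formula already proved. The only place that warrants a sentence of justification is the claim that ``embedded $\V$-bias'' is evaluated at the same numerical quantities appearing in Theorem~\ref{thm:Y} --- i.e., that deleting the pure descendant $D$ leaves the $(X,Y,C)$ marginal intact --- after which the identity follows immediately. For completeness one could also note that, since $X$ and $Y$ are marginally independent in the $\Y$ structure, both sides equal $\mathrm{cov}(X,Y\mid D=d)$, but this fact is not needed for the derivation.
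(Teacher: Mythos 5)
Your proposal is correct and follows essentially the same route as the paper: both proofs reduce to the identity $\{\mathrm{P}(C=c)\}^2\cdot\V\textup{-bias-em}(C=c,\mathrm{cov})=p_{X=1}p_{X=0}p_{Y=1}p_{Y=0}\,(p_{c\mid00}p_{c\mid11}-p_{c\mid10}p_{c\mid01})$ and then match the resulting expression against the covariance formula of Theorem~\ref{thm:Y}, the only cosmetic difference being that the paper rewrites the left-hand side by multiplying and dividing by $\{\mathrm{P}(C=c)\}^2$ while you substitute into the right-hand side. Your explicit remark that deleting the pure descendant $D$ leaves the $(X,Y,C)$ marginal intact is a point the paper leaves implicit, and it is the correct justification for applying Theorem~\ref{thm:V} to the embedded substructure.
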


This means that, \Y~bias conditioning on a level of $D$ for the covariance effect scale is a linear combination of embedded \V-bias (for the same effect scale) conditioning on one level of $C$ and negative embedded \V-bias conditioning on the other level of $C$.

\medskip

\add{Last but not least, note that the key quantity for signing \Y-bias, $(p_{d\mid1}-p_{d\mid0})\cdot\{p_{d\mid1}g(1)-p_{d\mid0}g(0)\}$, is equal to 
$\P(D=d\mid X=0,Y=0)\P(D=d\mid X=1,Y=1)-\P(D=d\mid X=1,Y=0)\P(D=d\mid X=0,Y=1),$
which is the same form as function $g(c)$, except replacing variable $C$ with $D$. This is no coincidence. The four conditional probabilities of $D$ in this expression represent the effects of $X$ and $Y$ on $D$. These conditional probabilities, combined with the marginal probabilities of $X$ and $Y$, are enough to inform of collider bias conditioning on $D$; how $X$ and $Y$ influence $D$ (here through $C$) provides no additional relevant information. This is no different from the \V~structure case in section \ref{subsec:V}, where the conditional probabilities of $C$ are sufficient as information on the effects of $X$ and $Y$ on $C$, and we do not need to know the specifics of how $X$ and $Y$ influence $C$. The key point is that if the effects of $X$ and $Y$ on $D$ are known, we can simply consider the triple $X\rightarrow D\leftarrow Y$, which is a causal DAG (including all common causes of any pair of included variables) and thus a \V~structure.
This provides another avenue for assessing \Y-bias conditioning on $D$ -- first deriving the reduced \V~structure with $D$ as the collider.}

\subsection{Bias due to conditioning on a specific level of the collider or its child in structures with \V~or \Y~substructures where the collider's parents are marginally independent}

\add{In this section, we consider collider bias in the more complex structures that have an embedded \V~or \Y~sub-structure. Real-world examples of these structures abound. An example of the \rightM~structure is a simplified version of the low birth weight ``paradox'' \citep{porta2015current}. Here $X$ is mother's smoking status, $Y$ is infant death, $C$ is birth weight, which is negatively affected by mother's smoking and also affected by unknown factors $B$ that both reduce birth weight and increase risk of death. It is observed that in the low birth weight stratum, mother's smoking is negatively associated with infant death. As described, this is a \rightM~structure. (Of course the real structure may be more complicated, as there may be direct effects of low birth weight and of mother's smoking on mortality).}

\add{Another example is healthy worker bias, elaborated in \citep{Hernan2004}. Here $X$ is exposure to a chemical at the workplace, $Y$ is death, $C$ is being at work, $B$ is true health status. $B$ affects both workplace participation and death. Presumably, $X$ may cause absenteeism but does not affect mortality. If all people at work are surveyed, we have a \rightM~structure. If a subsample of people are surveyed, differentiating being at work ($C$) and being sampled ($D$), we have a \rightlongM~structure. Alternatively, suppose that there is such a chemical (now denoted $A$) except that it is unknown, unmeasured, and unsuspected. However, it causes some respiratory problems that are considered the exposure of interest (our new $X$). In this case, we have an \M~or \longM~structure.}

The theorem below extends our results from the \V~and \Y~structures to these more complex structures.

\begin{theorem}[$C$- or $D$-specific collider bias extension theorem]\label{thm:stratum-extension}
Consider the six structures \rightM, \leftM, \M, \rightlongM, \leftlongM, and \longM.
We refer to \rightM-bias, \leftM-bias and \M-bias collectively as `extended \V-bias', refer to collider bias in the \V~sub-structure embedded in the \rightM, \leftM~and \M~structures as `embedded \V-bias', and denote these biases conditioning on $C=c$ by \V\textup{-bias-ext}$(C=c)$ and \mbox{\V\textup{-bias-em}$(C=c)$}, respectively. Similarly, we refer to \rightlongM-bias, \leftlongM-bias and \longM-bias collectively as `extended \Y-bias', refer to collider bias in the \Y-sub-structure embedded in the \rightlongM, \leftlongM~ and \longM~structures as `embedded \Y-bias', and use the corresponding notations \Y\textup{-bias-ext}$(D=d)$ and \Y\textup{-bias-em}$(D=d)$. Then for the covariance and \textsc{rd} effect scales, the extended biases relate to the embedded biases by the following formulas:
\begin{align*}
    &\V\textup{-bias-ext}(C=c,\mathrm{cov})=\textsc{rd}_\mathrm{left}\cdot\V\textup{-bias-em}(C=c,\mathrm{cov})\cdot\textsc{rd}_\mathrm{right},\\
    &\V\textup{-bias-ext}(C=c,\textsc{rd})=\textsc{rd}_\mathrm{left}\cdot\V\textup{-bias-em}(C=c,\textsc{rd})\cdot\textsc{rd}_\mathrm{right}\cdot\textsc{vr}(c),\\
    &\Y\textup{-bias-ext}(D=d,\mathrm{cov})=\textsc{rd}_\mathrm{left}\cdot\Y\textup{-bias-em}(D=d,\mathrm{cov})\cdot\textsc{rd}_\mathrm{right},\\
    &\Y\textup{-bias-ext}(D=d,\textsc{rd})=\textsc{rd}_\mathrm{left}\cdot\Y\textup{-bias-em}(D=d,\textsc{rd})\cdot\textsc{rd}_\mathrm{right}\cdot\textsc{vr}(d),
\end{align*}
where
\begin{itemize}
\item $\textsc{rd}_\mathrm{left}$ is 1 for the \rightM~and \rightlongM~structures, and is the \textsc{rd} representing the effect of $A$ on $X$ (i.e., $p_{X=1\mid1}-p_{X=1\mid0}$) in the \leftM, \M, \leftlongM~and \longM~structures;
\item $\textsc{rd}_\mathrm{right}$ is 1 for the \leftM~and \leftlongM~structures, and is the \textsc{rd} representing the effect of $B$ on $Y$ (i.e., $p_{Y=1\mid1}-p_{Y=1\mid0}$) in the \rightM, \M, \rightlongM~and \longM~structures;
\item $\textsc{vr}(c)$ is 1 for the \rightM~structure, and is $\mathrm{var}(A\mid C=c)/\mathrm{var}(X\mid C=c)$ in the \leftM~and \M~structures;
\item $\textsc{vr}(d)$ is 1 for the \rightlongM~structure, and is $\mathrm{var}(A\mid D=d)/\mathrm{var}(X\mid D=d)$ in the \leftlongM~and \longM~structures.
\end{itemize}
For completeness, with the \leftM~structure,
\begin{align*}
\frac{\mathrm{var}(A\mid C=c)}{\mathrm{var}(X\mid C=c)}
=
\frac{p_{A=1}(p_{Y=1}p_{c\mid11}+p_{Y=0}p_{c\mid10})\times p_{A=0}(p_{Y=1}p_{c\mid01}+p_{Y=0}p_{c\mid00})}{
    \left\{
    \begin{matrix*}[l]
        p_{X=1\mid1}p_{A=1}(p_{Y=1}p_{c\mid11}+p_{Y=0}p_{c\mid10})+\\
        p_{X=1\mid0}p_{A=0}(p_{Y=1}p_{c\mid01}+p_{Y=0}p_{c\mid00}) 
    \end{matrix*}
    \right\}\times
    \left\{
    \begin{matrix*}[l]
        p_{X=0\mid1}p_{A=1}(p_{Y=1}p_{c\mid11}+p_{Y=0}p_{c\mid10})+\\
        p_{X=0\mid0}p_{A=0}(p_{Y=1}p_{c\mid01}+p_{Y=0}p_{c\mid00})
    \end{matrix*}
    \right\}
    };
\end{align*}
the same formula applies to the \M~structure, except $Y$ is replaced with $B$. With the \leftlongM~structure,
\begin{align*}
\frac{\mathrm{var}(A\mid D=d)}{\mathrm{var}(X\mid D=d)}
=
\frac{
    p_{A=1}
    \left\{
        \begin{matrix*}[l]
            p_{d\mid1}(p_{C=1\mid11} p_{Y=1}+p_{C=1\mid10} p_{Y=0})+\\
            p_{d\mid0}(p_{C=0\mid11}p_{Y=1}+p_{C=0\mid10}p_{Y=0})
        \end{matrix*}
    \right\}
    \times
    p_{A=0}
    \left\{
        \begin{matrix*}[l]
            p_{d\mid1}(p_{C=1\mid01} p_{Y=1}+p_{C=1\mid00} p_{Y=0})+\\
            p_{d\mid0}(p_{C=0\mid01}p_{Y=1}+p_{C=0\mid00}p_{Y=0})
        \end{matrix*}
    \right\}
    }{
    \begin{bmatrix*}[l]
        p_{X=1\mid1}p_{A=1}\times\\
        \left\{
        \begin{matrix*}[l]
            p_{d\mid1}(p_{C=1\mid11} p_{Y=1}+p_{C=1\mid10} p_{Y=0})+\\
            p_{d\mid0}(p_{C=0\mid11}p_{Y=1}+p_{C=0\mid10}p_{Y=0})
        \end{matrix*}
        \right\}+\\
        p_{X=1\mid0}p_{A=0}\times\\
        \left\{
        \begin{matrix*}[l]
            p_{d\mid1}(p_{C=1\mid01} p_{Y=1}+p_{C=1\mid00} p_{Y=0})+\\
            p_{d\mid0}(p_{C=0\mid01}p_{Y=1}+p_{C=0\mid00}p_{Y=0})
        \end{matrix*}
        \right\}
    \end{bmatrix*}
    \times
    \begin{bmatrix*}[l]
        p_{X=0\mid1}p_{A=1}\times\\
        \left\{
        \begin{matrix*}[l]
            p_{d\mid1}(p_{C=1\mid11} p_{Y=1}+p_{C=1\mid10} p_{Y=0})+\\
            p_{d\mid0}(p_{C=0\mid11}p_{Y=1}+p_{C=0\mid10}p_{Y=0})
        \end{matrix*}
        \right\}+\\
        p_{X=0\mid0}p_{A=0}\times\\
        \left\{
        \begin{matrix*}[l]
            p_{d\mid1}(p_{C=1\mid01}p_{Y=1}+p_{C=1\mid00}p_{Y=0})+\\
            p_{d\mid0}(p_{C=0\mid01}p_{Y=1}+p_{C=0\mid00}p_{Y=0})
        \end{matrix*}
        \right\}
    \end{bmatrix*}
    };
\end{align*}
the same formula applies to the \longM~structure, except $Y$ is replaced with $B$.
\end{theorem}

\medskip

Theorem \ref{thm:stratum-extension} shows that extending a \V~or \Y~structure to the left (to \leftM~or \leftlongM) and to the right (to \rightM~or \rightlongM) results in symmetric changes for collider bias on the covariance scale.
The changes to collider bias for the \textsc{rd} effect scale are not symmetric: extending to the right changes collider bias only by a factor of the \textsc{rd} representing the right-extension effect, but extending to the left changes collider bias by a factor that combines the \textsc{rd} representing the left-extension effect and a ratio between two conditional variances of $A$ and of $X$. This asymmetric result is due to the fact that the \textsc{rd} is asymmetric with respect to $X$ and $Y$.

A key insight from Theorem \ref{thm:stratum-extension} is that the sign of collider bias conditioning on a specific level of a collider or its child in a structure with an embedded \V~or \Y~substructure equals the product of the sign of the embedded \V- or \Y-bias and the sign(s) of the extension path(s).
This result holds generally, regardless of the metric used to represent collider bias, provided the collider's parents are marginally independent.
For these structures, we do not present the formulas for collider bias for the \textsc{or} effect scale -- they are neither elegant nor enlightening.

\section{Bias due to linear regression adjustment for a collider or its child}\label{sec:lm}

\subsection{Bias due to linear regression adjustment for the collider in the \V~structure}

Now we turn our attention to collider bias due to linear regression.
Suppose, with the \V~structure, an analysis is conducted by fitting a linear model for $Y$ with $X$ and $C$ as predictors.
The coefficient of $X$ represents the association of $X$ and $Y$ adjusted for $C$.
With a binary $Y$, it is interpreted as a risk difference of $Y$ comparing the two levels of $X$, adjusted for $C$.
Based on a linear model result pointed out by \citet{Angrist1999} and \citet*[][page 142]{Morgan2007}, this coefficient is a weighted average of the two $C$-stratum-specific \textsc{rd}s (see Theorem \ref{thm:V}), where the weight of each is proportional to the product of the conditional variance of $X$ in the relevant $C$ stratum and the size of the stratum, $\mathrm{var}(X\mid C=c)\mathrm{P}(C=c)$.
This weighted average reduces to the formula in Theorem \ref{thm:Vlm}.

\begin{theorem}[Linear regression \V-bias theorem]\label{thm:Vlm}
\textup{\V}-bias due to linear regression adjustment for $C$ is given by
\begin{align*}
    \textup{\V-bias}(\textsc{lm})=~&-
    \left\{
    \begin{matrix*}[l]
		p_{X=1}(p_{C=1\mid11}-p_{C=1\mid10})+\\
		p_{X=0}(p_{C=1\mid01}-p_{C=1\mid00})
    \end{matrix*}
    \right\}
    \cdot
    \left\{
    \begin{matrix*}[l]
		p_{Y=1}(p_{C=1\mid11}-p_{C=1\mid01})+\\
		p_{Y=0}(p_{C=1\mid10}-p_{C=1\mid00})
    \end{matrix*}
    \right\}
    \times
    \\
    &\frac{p_{Y=1}p_{Y=0}}{
    \left\{
    \begin{matrix*}[l]
		p_{X=1}(p_{C=1\mid11}p_{Y=1}+p_{C=1\mid10}p_{Y=0})(p_{C=0\mid11}p_{Y=1}+p_{C=0\mid10}p_{Y=0})+\\
		p_{X=0}(p_{C=1\mid01}p_{Y=1}+p_{C=1\mid00}p_{Y=0})(p_{C=0\mid01}p_{Y=1}+p_{C=0\mid00}p_{Y=0})
    \end{matrix*}
    \right\}
    }.
\end{align*}
\end{theorem}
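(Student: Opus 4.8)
The plan is to start from the characterization, quoted in the paragraph preceding the theorem, that the coefficient of $X$ in the linear regression of $Y$ on $X$ and $C$ is the variance-weighted average of the two $C$-stratum-specific risk differences,
\begin{align*}
\V\text{-bias}(\textsc{lm})=\frac{\sum_{c\in\{0,1\}}\mathrm{var}(X\mid C=c)\mathrm{P}(C=c)\cdot\V\text{-bias}(C=c,\textsc{rd})}{\sum_{c\in\{0,1\}}\mathrm{var}(X\mid C=c)\mathrm{P}(C=c)},
\end{align*}
where I have already used that the marginal \textsc{rd} is $0$ (so bias equals the adjusted association) and that each conditional \textsc{rd} equals $\mathrm{cov}(X,Y\mid C=c)/\mathrm{var}(X\mid C=c)$ by Lemma~\ref{lm:2}. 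Substituting this last identity collapses the numerator to $\sum_c \mathrm{cov}(X,Y\mid C=c)\mathrm{P}(C=c)$, i.e.\ to $\sum_c \{\mathrm{P}(C=c)\}^2\cdot\V\text{-bias}(C=c,\mathrm{cov})$, and then Theorem~\ref{thm:V} gives a closed form for each summand.

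Next I would compute the two pieces separately. For the numerator, Theorem~\ref{thm:V} tells us $\{\mathrm{P}(C=c)\}^2\cdot\V\text{-bias}(C=c,\mathrm{cov})=p_{X=1}p_{X=0}p_{Y=1}p_{Y=0}\cdot g(c)$ with $g(c)=p_{c\mid00}p_{c\mid11}-p_{c\mid10}p_{c\mid01}$, so the numerator is $p_{X=1}p_{X=0}p_{Y=1}p_{Y=0}\,[g(0)+g(1)]$. The algebraic heart is to show $g(0)+g(1)$ factors as minus the product of the two bracketed ``main-effect'' expressions in the statement, namely
\begin{align*}
g(0)+g(1)=-\bigl\{p_{X=1}(p_{C=1\mid11}-p_{C=1\mid10})+p_{X=0}(p_{C=1\mid01}-p_{C=1\mid00})\bigr\}\cdot\bigl\{p_{Y=1}(p_{C=1\mid11}-p_{C=1\mid01})+p_{Y=0}(p_{C=1\mid10}-p_{C=1\mid00})\bigr\}\big/(p_{X=1}p_{X=0});
\end{align*}
wait — more carefully, since $g$ is invariant under $c\mapsto 1-c$ only up to sign bookkeeping, I would write $p_{c\mid ij}$ for $c=1$ in terms of $p_{C=1\mid ij}$ and for $c=0$ as $1-p_{C=1\mid ij}$, expand $g(0)+g(1)$, watch the degree-two terms in the $p_{C=1\mid ij}$ cancel, and collect the remaining linear terms. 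The cleanest route is to use the general identity that for the \V~structure $\sum_c\mathrm{cov}(X,Y\mid C=c)\mathrm{P}(C=c)=\mathrm{cov}(X,Y)-\mathrm{cov}\bigl(\mathrm{E}[X\mid C],\mathrm{E}[Y\mid C]\bigr)=-\mathrm{cov}\bigl(\mathrm{E}[X\mid C],\mathrm{E}[Y\mid C]\bigr)$ by the law of total covariance and marginal independence of $X,Y$; then $\mathrm{E}[X\mid C=c]$ and $\mathrm{E}[Y\mid C=c]$ are explicitly the linear-in-$p_{C=1\mid ij}$ quantities appearing in the two brackets (after a Bayes-rule inversion), and the covariance of two functions of the binary variable $C$ is just $\mathrm{P}(C=1)\mathrm{P}(C=0)$ times the product of their increments from $C=0$ to $C=1$, which is exactly the product of the two bracketed expressions divided by $p_{X=1}p_{X=0}$ and $p_{Y=1}p_{Y=0}$ respectively — after which the stray factors reorganize into the stated numerator.

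For the denominator I would expand $\sum_c\mathrm{var}(X\mid C=c)\mathrm{P}(C=c)=\sum_c\mathrm{P}(X=1\mid C=c)\mathrm{P}(X=0\mid C=c)\mathrm{P}(C=c)$, convert each $\mathrm{P}(X=x\mid C=c)\mathrm{P}(C=c)$ to $\mathrm{P}(X=x)\mathrm{P}(C=c\mid X=x)$, and use $\mathrm{P}(C=c\mid X=x)=p_{Y=1}p_{C=c\mid x1}+p_{Y=0}p_{C=c\mid x0}$ from summing over $Y$; this directly produces $p_{X=1}(p_{C=1\mid11}p_{Y=1}+p_{C=1\mid10}p_{Y=0})(p_{C=0\mid11}p_{Y=1}+p_{C=0\mid10}p_{Y=0})+p_{X=0}(\cdots)$, matching the denominator in the statement. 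Combining numerator over denominator and cancelling the common $p_{X=1}p_{X=0}$ yields the claimed formula, with the overall minus sign carried through from the total-covariance identity. The main obstacle is purely bookkeeping: verifying that $g(0)+g(1)$ collapses to the negative product of the two main-effect brackets without sign errors, and tracking how the $p_{X=1}p_{X=0}$ and $p_{Y=1}p_{Y=0}$ prefactors from Theorem~\ref{thm:V} partially cancel against those in the variance weights — there is no conceptual difficulty once the law-of-total-covariance reformulation is in hand, but the term-matching is delicate enough that I would do it symbolically and double-check against a special case (e.g.\ no $X$-$C$ interaction) where the bias should vanish.
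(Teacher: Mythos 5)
Your main route is sound but genuinely different from the paper's, and one of your intermediate claims is wrong. The paper applies Lemma~\ref{lm:3} to write the stratum weights in joint-probability form, cancels them against the denominators of the stratum-specific risk differences to get a numerator proportional to $\mathrm{P}(C=0)g(1)+\mathrm{P}(C=1)g(0)$, verifies by a half-page term-by-term expansion that this equals minus the product of the two bracketed main-effect terms, and uses Lemma~\ref{lm:4} for the denominator. Your law-of-total-covariance argument replaces that expansion with a conceptual identity: since $\mathrm{cov}(X,Y)=0$, the numerator $\sum_c\mathrm{P}(C=c)\,\mathrm{cov}(X,Y\mid C=c)$ equals $-\mathrm{cov}\bigl(\mathrm{E}[X\mid C],\mathrm{E}[Y\mid C]\bigr)=-\mathrm{P}(C=1)\mathrm{P}(C=0)\cdot\frac{\mathrm{cov}(X,C)}{\mathrm{var}(C)}\cdot\frac{\mathrm{cov}(Y,C)}{\mathrm{var}(C)}$, and the two brackets are exactly the marginal risk differences $\mathrm{P}(C=1\mid X=1)-\mathrm{P}(C=1\mid X=0)$ and $\mathrm{P}(C=1\mid Y=1)-\mathrm{P}(C=1\mid Y=0)$, so the minus sign and the factorization come for free; this is arguably cleaner than the paper's computation. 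Two repairs are needed, however. First, your opening identification is an algebra slip: $\mathrm{cov}(X,Y\mid C=c)\,\mathrm{P}(C=c)$ equals $\{\mathrm{P}(C=c)\}^2\,\V\text{-bias}(C=c,\mathrm{cov})/\mathrm{P}(C=c)$, not $\{\mathrm{P}(C=c)\}^2\,\V\text{-bias}(C=c,\mathrm{cov})$, so by Theorem~\ref{thm:V} the numerator is proportional to $g(1)/\mathrm{P}(C=1)+g(0)/\mathrm{P}(C=0)$, not to $g(0)+g(1)$. The quantity $g(0)+g(1)$ cannot factor as the claimed product for a structural reason: it involves only the $p_{C=1\mid ij}$, while the target product genuinely depends on $p_{X=1}$ and $p_{Y=1}$ (take $p_{C=1\mid 11}=1$ and the other three conditionals $0$: then $g(0)+g(1)=-1$ but the product of brackets is $-p_{X=1}p_{Y=1}$). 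You pivot away from this branch, but your stated reason (sign bookkeeping under $c\mapsto 1-c$) is not the actual problem. Second, the denominator $\sum_c\mathrm{var}(X\mid C=c)\mathrm{P}(C=c)$ does not ``directly'' Bayes-invert into $\sum_x p_{X=x}\mathrm{P}(C=1\mid X=x)\mathrm{P}(C=0\mid X=x)$ --- there are two conditionals on $C$ but only one $\mathrm{P}(C=c)$ available to invert; the needed step is exactly Lemma~\ref{lm:4} (equivalently, $\mathrm{var}(C)\,\mathrm{E}[\mathrm{var}(X\mid C)]=\mathrm{var}(X)\,\mathrm{E}[\mathrm{var}(C\mid X)]$), and it is what supplies the factor $p_{X=1}p_{X=0}/\{\mathrm{P}(C=1)\mathrm{P}(C=0)\}$ that cancels the stray $1/\{\mathrm{P}(C=1)\mathrm{P}(C=0)\}$ in your numerator together with the leftover $p_{X=1}p_{X=0}$. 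With those two corrections your argument closes and delivers the stated formula.
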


\medskip

The last term in the formula in Theorem \ref{thm:Vlm} is always positive, therefore the sign of \mbox{\V-bias(\textsc{lm})} is opposite the sign of the product of the other two terms, $\left\{
    \begin{matrix*}[l]
		p_{X=1}(p_{C=1\mid11}-p_{C=1\mid10})+\\
		p_{X=0}(p_{C=1\mid01}-p_{C=1\mid00})
    \end{matrix*}
    \right\}$
and
    $\left\{
    \begin{matrix*}[l]
		p_{Y=1}(p_{C=1\mid11}-p_{C=1\mid01})+\\
		p_{Y=0}(p_{C=1\mid10}-p_{C=1\mid00})
    \end{matrix*}
    \right\}$, 
which depend on both the conditional probabilities of $C$ and the marginal probabilities of $X$ and $Y$.
In two cases in the corollary below, the sign of \V-bias(\textsc{lm}) does not depend on the distribution of $X$ and $Y$.

\begin{corollary}[Linear regression \V-bias corollary]\label{crlry:V3}
In the \V~structure, if $X$ has positive effects on $C$ at both levels of $Y$ and $Y$ has positive effects on $C$ at both levels of $X$, or alternatively, if $X$ has negative effects on $C$ at both levels of $Y$ and $Y$ has negative effects on $C$ at both levels of $X$, then $\textup{\V-bias}(\textsc{lm})$ is negative.
On the other hand, if $X$ has positive effects on $C$ at both levels of $Y$ and $Y$ has negative effects on $C$ at both levels of $X$, or vice versa, then $\textup{\V-bias}(\textsc{lm})$ is positive.
\end{corollary}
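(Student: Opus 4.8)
The plan is to read the sign of $\V\textup{-bias}(\textsc{lm})$ directly off the factored formula in Theorem \ref{thm:Vlm}. Write that formula as $\V\textup{-bias}(\textsc{lm}) = -\,T_1\,T_2\,T_3$, where $T_1 = p_{X=1}(p_{C=1\mid11}-p_{C=1\mid10})+p_{X=0}(p_{C=1\mid01}-p_{C=1\mid00})$, $T_2 = p_{Y=1}(p_{C=1\mid11}-p_{C=1\mid01})+p_{Y=0}(p_{C=1\mid10}-p_{C=1\mid00})$, and $T_3$ is the final fraction. As observed in the text immediately following Theorem \ref{thm:Vlm}, $T_3$ is always positive. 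Hence $\mathrm{sign}\{\V\textup{-bias}(\textsc{lm})\} = -\,\mathrm{sign}(T_1)\cdot\mathrm{sign}(T_2)$ whenever $T_1,T_2\neq 0$, and the whole corollary reduces to pinning down the signs of $T_1$ and $T_2$ under the stated hypotheses.

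The key observation is that $T_1$ and $T_2$ are convex combinations of conditional effects on $C$. Indeed, $p_{C=1\mid11}-p_{C=1\mid10}$ and $p_{C=1\mid01}-p_{C=1\mid00}$ are exactly the effects of $Y$ on $C$ at $X=1$ and at $X=0$, respectively, and $T_1$ averages them with the strictly positive weights $p_{X=1},p_{X=0}$ that sum to $1$. Likewise $p_{C=1\mid11}-p_{C=1\mid01}$ and $p_{C=1\mid10}-p_{C=1\mid00}$ are the effects of $X$ on $C$ at $Y=1$ and at $Y=0$, and $T_2$ averages them with the strictly positive weights $p_{Y=1},p_{Y=0}$. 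Since a convex combination of two numbers that are both positive is positive, and of two numbers that are both negative is negative, the signs of $T_1$ and $T_2$ are forced by the hypotheses.

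Applying this to the four cases finishes the argument. If $X$ has positive effects on $C$ at both levels of $Y$ then $T_2>0$; if in addition $Y$ has positive effects on $C$ at both levels of $X$ then $T_1>0$, so $\V\textup{-bias}(\textsc{lm}) = -\,T_1T_2T_3 < 0$. The all-negative variant gives $T_1<0$ and $T_2<0$, hence $T_1T_2>0$ and again $\V\textup{-bias}(\textsc{lm})<0$. If instead $X$ has positive effects on $C$ at both levels of $Y$ while $Y$ has negative effects on $C$ at both levels of $X$ (or vice versa), then $T_1$ and $T_2$ have opposite signs, so $T_1T_2<0$ and $\V\textup{-bias}(\textsc{lm})>0$.

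There is no genuine obstacle once Theorem \ref{thm:Vlm} is in hand; the corollary is essentially an immediate sign reading. The only point requiring care is matching the bracketed differences appearing in Theorem \ref{thm:Vlm} to the conditional-effect language of the corollary statement — that is, checking which pair of adjacent conditional probabilities of $C$ each difference compares — and noting that the marginal probabilities $p_{X=1},p_{X=0},p_{Y=1},p_{Y=0}$ are strictly positive, so that the sign of a weighted average of same-signed quantities is unambiguous. A short remark will also record that the two cases are precisely the ones in which the prevalences of $X$ and $Y$ drop out of the sign determination, mirroring the corresponding phenomenon in Corollary \ref{crlry:V1}.
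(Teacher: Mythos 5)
Your proposal is correct and follows essentially the same route as the paper's own proof: both read the sign of $\V$-bias$(\textsc{lm})$ off the factorization in Theorem \ref{thm:Vlm}, note that the final factor is positive, and observe that the hypotheses force the signs of the two bracketed terms (your $T_1,T_2$), which the paper records as the four strict inequalities on the conditional probabilities of $C$. Your framing of $T_1$ and $T_2$ as convex combinations of the conditional effects of $Y$ and $X$ on $C$ is a slightly more explicit way of stating the same observation, and your case labeling correctly matches the paper's indexing convention for $p_{C=1\mid xy}$.
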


\begin{figure}[h]
\caption{The sign of \V-bias due to linear regression adjustment for $C$}
\centering
\label{fig:5}
\begin{subfigure}[]{.45\textwidth}
\includegraphics[width=\textwidth]{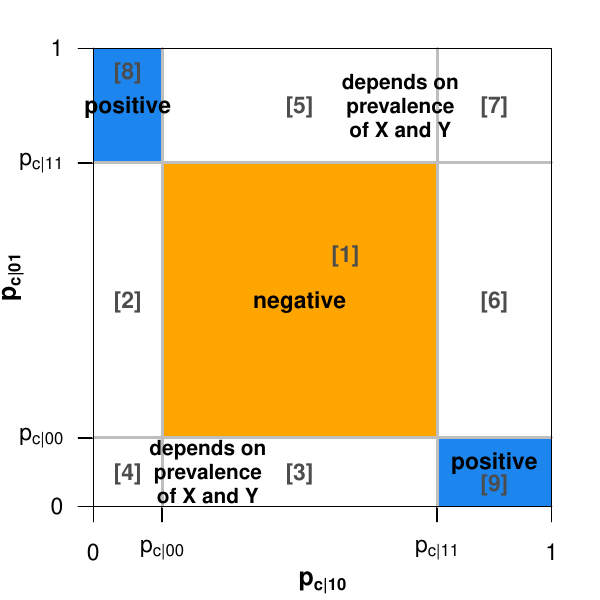}
\end{subfigure}
\begin{subfigure}[]{.3\textwidth}
\includegraphics[width=\textwidth]{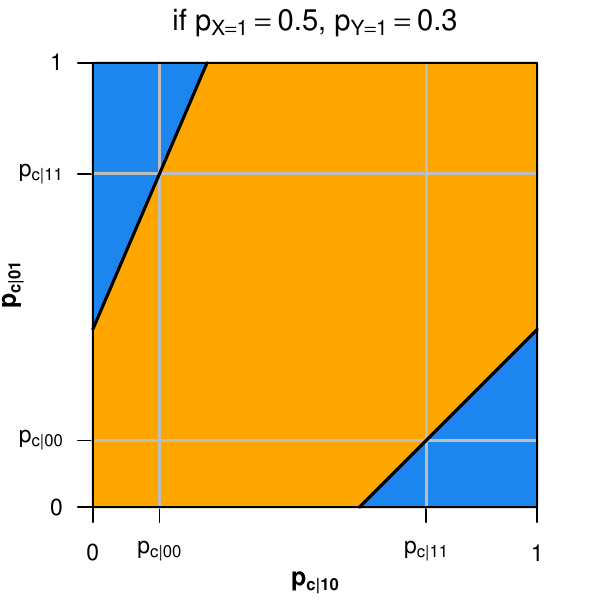}\\
\\
\includegraphics[width=\textwidth]{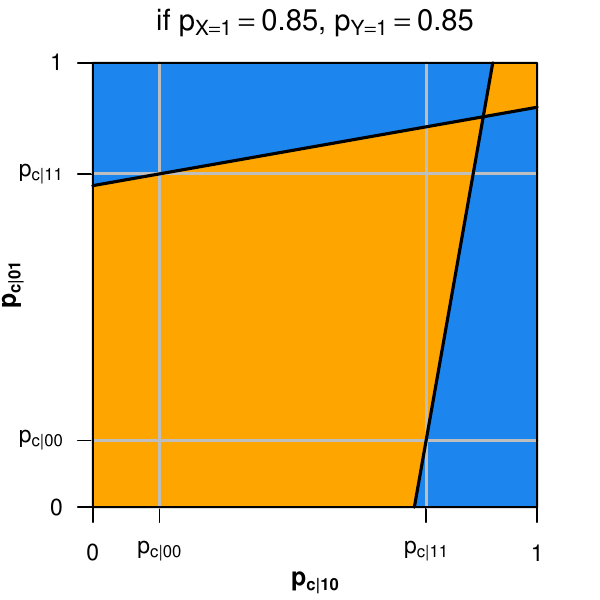}
\end{subfigure}
\caption*{\footnotesize Here $c$ is the level of $C$ such that $p_{c\mid11}>p_{c\mid00}$. There are two straight lines where \mbox{\V-bias(\textsc{lm})} is zero: one going through the $(p_{c\mid00},p_{c\mid11})$ point with slope $=(1-p_{Y=1})/p_{Y=1}$, the other going through the $(p_{c\mid11},p_{c\mid00})$ point with slope $=p_{X=1}/(1-p_{X=1})$. These two lines split the square into orange area(s) (where \V-bias(\textsc{lm}) is negative) and blue areas (where \V-bias(\textsc{lm}) is positive).}
\end{figure}

These results are shown in Fig. \ref{fig:5}.
\begin{itemize}
\item \add{The larger square on the left shows that the sign of \V-bias(\textsc{lm}) is (i) negative in the \textit{same-sign effects} region and (ii) positive in the \textit{opposite-sign effects} regions (Corollary \ref{crlry:V3}), and (iii) depends additionally on the distribution of $X$ and $Y$ in the \textit{qualitative interaction} regions.}
\item \add{For the \textit{qualitative interaction} regions, the two smaller squares on the right give two examples. In each example, there are two straight lines where \V-bias(\textsc{lm}) is zero: one going through the $(p_{c\mid00},p_{c\mid11})$ point with slope equal to the inverse odds of $Y$ (the collection of points where $X$ and $C$ are marginally independent), the other going through the $(p_{c\mid11},p_{c\mid00})$ point with slope equal to the odds of $X$ (the collection of points where $Y$ and $C$ are marginally independent). These two lines divide up the \textit{qualitative interaction} regions into areas with negative and positive \V-bias(\textsc{lm}) as shown in the figure.}
\end{itemize}

\subsection{Bias due to linear regression adjustment for the collider or its child in structures where the causes of the collider are marginally independent}

Theorem \ref{thm:general} below extends Theorem \ref{thm:Vlm} to cover all structures in Fig. \ref{fig:1} except \Vx.

\begin{theorem}[Linear regression collider bias general theorem]\label{thm:general}
Consider the eight structures \V, \M, \leftM, \rightM, \Y, \longM, \leftlongM~and \rightlongM.
Collider bias due to linear regression adjustment for $C$ in the \V, \M, \leftM~and \rightM~structures, and collider bias due to linear regression adjustment for $D$ in the \Y, \longM, \leftlongM~and \rightlongM~structures, can be expressed using one formula:
$$h(c)\cdot\textsc{rd}_\mathrm{left}\cdot\textsc{rd}_\mathrm{right}\cdot\textsc{rd}^2_\mathrm{child}\cdot\textsc{var}_\mathrm{left}\cdot\textsc{var}_\mathrm{right}\cdot\frac{1}{\phi(\mathrm{structure})},$$
where
\begin{itemize}
\item $h(c)=-
    \left\{
    \begin{matrix*}[l]
		p_{X=1}(p_{C=1\mid11}-p_{C=1\mid10})+\\
		p_{X=0}(p_{C=1\mid01}-p_{C=1\mid00})
    \end{matrix*}
    \right\}
    \cdot
    \left\{
    \begin{matrix*}[l]
		p_{Y=1}(p_{C=1\mid11}-p_{C=1\mid01})+\\
		p_{Y=0}(p_{C=1\mid10}-p_{C=1\mid00})
    \end{matrix*}
    \right\}$
for the structures in which $X$ and $Y$ are the causes of $C$ (\V~and \Y), and is the same function for the other structures except changing $X$ to $A$ if the left-side cause of $C$ is $A$ (\leftM, \leftlongM~and \longM) and changing $Y$ to $B$ if the right-side cause of $C$ is $B$ (\rightM, \rightlongM~and \longM);
\item $\textsc{rd}_\mathrm{left}$ is 1 for the structures without $A$ (\V, \rightM, \Y~and \rightlongM), and is the risk difference representing the effect of $A$ on $X$ (i.e., $p_{X=1\mid1}-p_{X=1\mid0}$) for the structures with $A$ (\M, \leftM, \longM~and \leftlongM);
\item $\textsc{rd}_\mathrm{right}$ is 1 for structures without $B$ (\V, \leftM, \Y~and \leftlongM), and is the risk difference representing the effect of $B$ on $Y$ (i.e., $p_{Y=1\mid1}-p_{Y=1\mid0}$) for structures with $B$ (\M, \rightM, \longM~and \rightlongM);
\item $\textsc{rd}_\mathrm{child}$ is 1 for structures without $D$ (\V, \M, \leftM~and \rightM), and is the risk difference representing the effect of $C$ on $D$ (i.e., $p_{D=1\mid1}-p_{D=1\mid0}$) for structures with $D$ (\Y, \longM, \leftlongM~and \rightlongM);
\item $\textsc{var}_\mathrm{left}$ is the variance of the left-side cause of $C$, $\textsc{var}_\mathrm{left}=p_{X=1}p_{X=0}$ if this cause is $X$ (\V, \rightM, \Y~and \rightlongM), $=p_{A=1}p_{A=0}$ if this cause is $A$ (\M, \leftM, \longM~and \leftlongM);
\item $\textsc{var}_\mathrm{right}$ is the variance of the right-side cause of $C$,
$\textsc{var}_\mathrm{right}=p_{Y=1}p_{Y=0}$ if this cause if $X$ (\V, \leftM, \Y~and \leftlongM), $=p_{B=1}p_{B=0}$ if this cause is $B$ (\M, \rightM, \longM~and \rightlongM);
\item $\phi(\textup{structure})=
\mathrm{P}(C=0)\mathrm{P}(C=1,X=1)\mathrm{P}(C=1,X=0)+
\mathrm{P}(C=1)\mathrm{P}(C=0,X=1)\mathrm{P}(C=0,X=0)$ for the \V, \M, \leftM~and \rightM~structures, and is the same function for the other structures except changing $C$ to $D$. The detailed expressions of this parameter for all the structures are provided in the Appendix.
\end{itemize}
\end{theorem}

Since $\phi(\mathrm{structure})$, $\textsc{rd}^2_\mathrm{child}$, $\textsc{var}_\mathrm{left}$ and $\textsc{var}_\mathrm{right}$ are all positive, the sign of collider bias due to linear regression adjustment for $C$ or $D$ in each of these structures is the product of (i) the sign of the embedded \V-bias due to linear regression for $C$ (equivalently, the sign of $h(c)$) and (ii) the sign(s) of the effect(s) of one or both of the causes of the collider on $X$ and/or $Y$, if $X$ and/or $Y$ are not the causes of the collider.
This is similar to the result about the sign of collider bias conditioning on a specific level of $C$ or $D$.
The difference is that the sign of \Y-bias due to linear regression adjustment is the same as the sign as the embedded \V-bias, and is not dependent on the effect of $C$ on $D$.

For the \V~structure, the result in Theorem \ref{thm:general} reduces to the result in Theorem \ref{thm:Vlm}.

\section{Discussion}\label{sec:discussion}

We have derived analytic results for collider bias due to conditioning on a specific level of, and due to linear regression adjustment for, a collider or a child of a collider in several structures of binary variables.
These results substantially extend the literature on collider bias.
The settings we focus on in this paper are represented here by simple causal DAGs, but encompass a broad class of causal DAGs where the variables of interest ($X$ and $Y$) are marginally independent ancestors of the collider, or are descendants of marginally independent ancestors of the collider.
For example, adding intermediate variables on any of the paths in the causal DAGs in Fig. \ref{fig:1} does not change the results, and replacing an arrow with a common cause between some pairs of variables can be treated as relabeling.
The results presented in this paper thus serve as the basis for understanding collider bias in a range of more complicated structures that may be encountered in practice.

Our paper assesses collider bias due to linear regression adjustment. Future research should evaluate collider bias due to logistic regression adjustment, which is commonly used for a binary outcome. In addition, future work could build on the current results to study collider bias in situations where $X$ and/or $Y$ are categorical variables. For example, based on the basic properties of covariance, it can be shown that with a binary $Y$ and an ordinal $X$, if collider bias (conditioning on a specific level of $C$ or $D$) is non-negative between $Y$ and all dichotomized versions of $X$, and is positive for some versions, then collider bias between $Y$ and $X$ is positive.

Our paper mostly focuses on collider bias in settings where $X$ and $Y$ are marginally independent. We relate our findings to only one situation where $X$ has a causal effect on $Y$, the \Vx~structure, to show that more insight can be gained about the sign of collider bias in that specific situation, adding to recently published results by \citet{jiang2016directions}. The current results should be extended by future work adding an effect of $X$ on $Y$ to the other structures, including \leftM, \rightM, \M, \Y, \leftlongM, \rightlongM~and \longM.

Last but not least, the structures addressed in this paper involve collider bias only, and are not affected by confounding bias.
Future work should investigate situations that involve both collider bias and confounding bias, which may be more realistic.

\section*{Acknowledgements}

TQN's work on this project was partially supported by NIDA grant T32-DA007292 (PI R. M. Johnson). The authors acknowledge helpful comments from two Referees and the Associate Editor.



\bibliographystyle{abbrvnat}
\bibliography{paper-ref}

\newpage
\section*{Appendix}\label{appx}

We go over some useful lemmas first, and then prove the theorems and corollaries presented in the paper.

\begin{lemma}\label{lm:1}
$E,F,G$ are binary variables, and $0<\mathrm{P}(G=1)<1$. For $g\in\{0,1\}$, $\mathrm{cov}(E,F\mid G=g)$ is equal to
\begin{align*}
	\frac{1}{\{\mathrm{P}(G=g)\}^2}\cdot
	\left\{
	\begin{matrix*}[l]
		\mathrm{P}(E=1,F=1,G=g)\mathrm{P}(E=0,F=0,G=g)-\\
		\mathrm{P}(E=1,F=0,G=g)\mathrm{P}(E=0,F=1,G=g)
	\end{matrix*}
	\right\}.
\end{align*}
\end{lemma}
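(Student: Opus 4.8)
\textbf{Proof proposal for Lemma \ref{lm:1}.}

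The plan is to reduce everything to the elementary identity $\mathrm{cov}(E,F\mid G=g)=\mathrm{E}(EF\mid G=g)-\mathrm{E}(E\mid G=g)\,\mathrm{E}(F\mid G=g)$ and then exploit the fact that $E$ and $F$ are binary, so each of these conditional expectations is a conditional probability. First I would write $\mathrm{cov}(E,F\mid G=g)=\mathrm{P}(E=1,F=1\mid G=g)-\mathrm{P}(E=1\mid G=g)\,\mathrm{P}(F=1\mid G=g)$, which is valid because $0<\mathrm{P}(G=g)<1$ makes the conditioning event have positive probability for both $g=0$ and $g=1$.

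Next I would rewrite each conditional probability as a joint probability divided by $\mathrm{P}(G=g)$, and expand the within-stratum marginals over the unconditioned variable, writing $\mathrm{P}(E=1,G=g)=\mathrm{P}(E=1,F=1,G=g)+\mathrm{P}(E=1,F=0,G=g)$ and likewise $\mathrm{P}(F=1,G=g)=\mathrm{P}(E=1,F=1,G=g)+\mathrm{P}(E=0,F=1,G=g)$. This exhibits $\mathrm{cov}(E,F\mid G=g)$ as a difference of a term with denominator $\mathrm{P}(G=g)$ and a term with denominator $\{\mathrm{P}(G=g)\}^2$, both with numerators that are polynomials in the four joint probabilities $\mathrm{P}(E=e,F=f,G=g)$, $e,f\in\{0,1\}$.

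The key bookkeeping step is then to place both terms over the common denominator $\{\mathrm{P}(G=g)\}^2$ using $\mathrm{P}(G=g)=\sum_{e,f\in\{0,1\}}\mathrm{P}(E=e,F=f,G=g)$ to rewrite the numerator of the first term. Expanding the resulting polynomial, the three products involving $\mathrm{P}(E=1,F=1,G=g)$ that arise from $\mathrm{P}(E=1,F=1,G=g)\,\mathrm{P}(G=g)$ cancel exactly against the corresponding terms of $\mathrm{P}(E=1,G=g)\,\mathrm{P}(F=1,G=g)$, leaving precisely $\mathrm{P}(E=1,F=1,G=g)\mathrm{P}(E=0,F=0,G=g)-\mathrm{P}(E=1,F=0,G=g)\mathrm{P}(E=0,F=1,G=g)$ in the numerator, which is the claimed expression.

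There is no substantive obstacle here; the work is entirely routine algebra, and the only point to be careful about is the appearance of $\mathrm{P}(G=g)$ in the denominators, which is what the hypothesis $0<\mathrm{P}(G=1)<1$ is there to control. I would also remark, for orientation, that taking $\mathrm{P}(G=g)=1$ (so that the joint probabilities become the joint probabilities of $E$ and $F$ alone) specializes this to the familiar unconditional identity $\mathrm{cov}(E,F)=\mathrm{P}(E=1,F=1)\mathrm{P}(E=0,F=0)-\mathrm{P}(E=1,F=0)\mathrm{P}(E=0,F=1)$, so the lemma is simply the stratum-wise version of that fact, carrying the normalizing constant $\{\mathrm{P}(G=g)\}^2$.
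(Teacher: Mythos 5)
Your proposal is correct and follows essentially the same route as the paper's proof: reduce $\mathrm{cov}(E,F\mid G=g)$ to $\mathrm{P}(E=1,F=1\mid G=g)-\mathrm{P}(E=1\mid G=g)\mathrm{P}(F=1\mid G=g)$, expand the within-stratum marginals into sums of the four joint probabilities, put everything over $\{\mathrm{P}(G=g)\}^2$ using $\mathrm{P}(G=g)=\sum_{e,f}\mathrm{P}(E=e,F=f,G=g)$, and cancel. The cancellation you describe is exactly the one carried out in the paper.
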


\begin{proof}
\begin{align*}
	\mathrm{cov}(E,F\mid G=g)
	&=E(EF\mid G=g)-E(E\mid G=g)E(F\mid G=g)
	\\
	&=\mathrm{P}(E=1,F=1\mid G=g)-\mathrm{P}(E=1\mid G=g)\mathrm{P}(F=1\mid G=g)
	\\
	&=\frac{\mathrm{P}(E=1,F=1,G=g)}{\mathrm{P}(G=g)}-\\
	&~~~~~\frac{\mathrm{P}(E=1,F=1,G=g)+\mathrm{P}(E=1,F=0,G=g)}{\mathrm{P}(G=g)}\times\\
	&~~~~~\frac{\mathrm{P}(E=1,F=1,G=g)+\mathrm{P}(E=0,F=1,G=g)}{\mathrm{P}(G=g)}
	\\
	&=\frac{1}{\{\mathrm{P}(G=g)\}^2}\mathrm{P}(E=1,F=1,G=g)\mathrm{P}(G=g)-\\
	&~~~~~\frac{1}{\{\mathrm{P}(G=g)\}^2}\{\mathrm{P}(E=1,F=1,G=g)+\mathrm{P}(E=1,F=0,G=g)\}\times\\
	&~~~~~\{\mathrm{P}(E=1,F=1,G=g)+\mathrm{P}(E=0,F=1,G=g)\}.
\end{align*}
Since $\mathrm{P}(G=g)$ is equal to
\begin{align*}
\mathrm{P}(E=1,F=1,G=g)+\mathrm{P}(E=1,F=0,G=g)+
\mathrm{P}(E=0,F=1,G=g)+\mathrm{P}(E=0,F=0,G=g),
\end{align*}
\begin{align*}
\mathrm{cov}(E,F\mid G=g)
&=\frac{1}{\{\mathrm{P}(G=g)\}^2}
\left\{
\begin{matrix*}[l]
\mathrm{P}(E=1,F=1,G=g)\mathrm{P}(E=1,F=1,G=g)+\\
\mathrm{P}(E=1,F=1,G=g)\mathrm{P}(E=1,F=0,G=g)+\\
\mathrm{P}(E=1,F=1,G=g)\mathrm{P}(E=0,F=1,G=g)+\\
\mathrm{P}(E=1,F=1,G=g)\mathrm{P}(E=0,F=0,G=g)
\end{matrix*}
\right\}-\\
&~~~~~
\frac{1}{\{\mathrm{P}(G=g)\}^2}
\left\{
\begin{matrix*}[l]
\mathrm{P}(E=1,F=1,G=g)\mathrm{P}(E=1,F=1,G=g)+\\
\mathrm{P}(E=1,F=1,G=g)\mathrm{P}(E=0,F=1,G=g)+\\
\mathrm{P}(E=1,F=0,G=g)\mathrm{P}(E=1,F=1,G=g)+\\
\mathrm{P}(E=1,F=0,G=g)\mathrm{P}(E=0,F=1,G=g)
\end{matrix*}
\right\}
\\
&=\frac{1}{\{\mathrm{P}(G=g)\}^2}
\left\{
\begin{matrix*}[l]
\mathrm{P}(E=1,F=1,G=g)\mathrm{P}(E=0,F=0,G=g)-\\
\mathrm{P}(E=1,F=0,G=g)\mathrm{P}(E=0,F=1,G=g)
\end{matrix*}
\right\}
\end{align*}
~
\end{proof}

\begin{lemma}\label{lm:2}
For binary variables $E,F,G$, the following is true:
$$\mathrm{P}(E=1\mid F=1,G=g)-\mathrm{P}(E=1\mid F=0,G=g)=\frac{\mathrm{cov}(E,F\mid G=g)}{\mathrm{var}(F\mid G=g)}.$$
\end{lemma}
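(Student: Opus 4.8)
The plan is to argue entirely within the stratum $G=g$, using the fact that a binary variable has a trivially simple variance and that conditioning on $F$ splits the cross-moment $\mathrm{E}(EF\mid G=g)$ into just one nonzero term. First I would fix $g\in\{0,1\}$ and assume $0<\mathrm{P}(F=1\mid G=g)<1$, which is needed for the right-hand side to be defined, and then introduce shorthand for the three quantities that appear: $p=\mathrm{P}(F=1\mid G=g)$, $a=\mathrm{P}(E=1\mid F=1,G=g)$, and $b=\mathrm{P}(E=1\mid F=0,G=g)$. With this notation the left-hand side of the claimed identity is exactly $a-b$, so the goal is to show $\mathrm{cov}(E,F\mid G=g)=p(1-p)(a-b)$, since $\mathrm{var}(F\mid G=g)=p(1-p)$.

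Next I would record the conditional moments. Because $F$ is binary, $\mathrm{E}(F\mid G=g)=p$ and $\mathrm{var}(F\mid G=g)=p(1-p)$. Because $EF=E$ on $\{F=1\}$ and $EF=0$ on $\{F=0\}$, we have $\mathrm{E}(EF\mid G=g)=\mathrm{P}(E=1,F=1\mid G=g)=ap$. And by the law of total probability within the stratum $G=g$, $\mathrm{E}(E\mid G=g)=\mathrm{P}(E=1\mid G=g)=ap+b(1-p)$. The computation is then one line: $\mathrm{cov}(E,F\mid G=g)=\mathrm{E}(EF\mid G=g)-\mathrm{E}(E\mid G=g)\,\mathrm{E}(F\mid G=g)=ap-\{ap+b(1-p)\}p=p(1-p)(a-b)$. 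Dividing by $\mathrm{var}(F\mid G=g)=p(1-p)$ yields $a-b$, which is $\mathrm{P}(E=1\mid F=1,G=g)-\mathrm{P}(E=1\mid F=0,G=g)$, as claimed.

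There is essentially no obstacle here; the argument is pure bookkeeping once the right notation is in place, and the only subtlety is the degenerate case $\mathrm{P}(F=1\mid G=g)\in\{0,1\}$, in which $F$ is constant in the stratum and the statement should be read as vacuous, so I would flag the nondegeneracy assumption at the outset. An alternative, equally short route is to note that this is just the identity "the least-squares slope of $E$ regressed on the indicator $F$ within the stratum $G=g$ equals both $\mathrm{cov}(E,F\mid G=g)/\mathrm{var}(F\mid G=g)$ and the difference between the two $F$-group means of $E$," but the direct moment calculation above is the most self-contained and is the one I would write out.
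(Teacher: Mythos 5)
Your proof is correct, but it takes a different route from the paper's. The paper starts from the left-hand side: it writes each conditional probability as a ratio of joint probabilities, puts the difference over the common denominator $\mathrm{P}(F=1,G=g)\mathrm{P}(F=0,G=g)$, and then recognizes the resulting numerator, $\mathrm{P}(E=1,F=1,G=g)\mathrm{P}(E=0,F=0,G=g)-\mathrm{P}(E=1,F=0,G=g)\mathrm{P}(E=0,F=1,G=g)$, as exactly $\{\mathrm{P}(G=g)\}^2\,\mathrm{cov}(E,F\mid G=g)$ by invoking Lemma 1. You instead work from the right-hand side, parametrizing the stratum by $p=\mathrm{P}(F=1\mid G=g)$, $a=\mathrm{P}(E=1\mid F=1,G=g)$, $b=\mathrm{P}(E=1\mid F=0,G=g)$ and computing $\mathrm{cov}(E,F\mid G=g)=\mathrm{E}(EF\mid G=g)-\mathrm{E}(E\mid G=g)\mathrm{E}(F\mid G=g)=p(1-p)(a-b)$ directly from the moments. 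Your version is self-contained (it does not lean on Lemma 1) and arguably more transparent, since the identity $\mathrm{cov}=p(1-p)(a-b)$ makes the regression-slope interpretation immediate; the paper's version buys economy within its own architecture, reusing the determinant-style covariance formula that it needs anyway for Theorems 1 and 2. You are also right to flag that $0<\mathrm{P}(F=1\mid G=g)<1$ is needed for the right-hand side to be defined --- the paper states the analogous nondegeneracy condition only in Lemma 1 and leaves it implicit here.
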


\begin{proof}
\begin{align*}
	\mathrm{P}&(E=1\mid F=1,G=g)-\mathrm{P}(E=1\mid F=0,G=g)=
	\\
	&=\frac{\mathrm{P}(E=1,F=1,G=g)}{\mathrm{P}(F=1,G=g)}-\frac{\mathrm{P}(E=1,F=0,G=g)}{\mathrm{P}(F=0,G=g)}
	\\
	&=\frac{
	\begin{bmatrix*}[l]
		\mathrm{P}(E=1,F=1,G=g)\{\mathrm{P}(E=1,F=0,G=g)+\mathrm{P}(E=0,F=0,G=g)\}-\\
		\mathrm{P}(E=1,F=0,G=g)\{\mathrm{P}(E=1,F=1,G=g)+\mathrm{P}(E=0,F=1,G=g)\}
	\end{bmatrix*}
	}{\mathrm{P}(F=1,G=g)\mathrm{P}(F=0,G=g)}
	\\
	&=\frac{
	\left\{
	\begin{matrix*}[l]
		\mathrm{P}(E=1,F=1,G=g)\mathrm{P}(E=0,F=0,G=g)-\\
		\mathrm{P}(E=1,F=0,G=g)\mathrm{P}(E=0,F=1,G=g)
	\end{matrix*}
	\right\}
	}{\{\mathrm{P}(G=g)\}^2\mathrm{P}(F=1\mid G=g)\mathrm{P}(F=0\mid G=g)}
	\\
	&=\mathrm{cov}(E,F\mid G=g)\times \frac{1}{\mathrm{P}(F=1\mid G=g)\mathrm{P}(F=0\mid G=g)}~~~\text{(by Lemma \ref{lm:1})}
	\\
	&=\frac{\mathrm{cov}(E,F\mid G=g)}{\mathrm{var}(F\mid G=g)}.
\end{align*}
\end{proof}

\begin{lemma}\label{lm:3}
\citet{Angrist1999} and \citet*[][page 142]{Morgan2007} pointed out that when linear regression is used to adjust an association (between predictor variable $X$ and dependent variable $Y$) for a covariate ($G$), the adjusted association is equivalent to the weighted average of the $G$-stratum-specific $X$--$Y$ associations, where the weight for stratum $G=g$ is proportion to $\mathrm{var}(X\mid G=g)\mathrm{P}(G=g)$. If $X$ and $G$ are binary, such weight can be expressed as:
$$w_{G=g}=\frac{\mathrm{P}(G=1-g)\mathrm{P}(X=1,G=g)\mathrm{P}(X=0,G=g)}{
\left\{
\begin{matrix*}[l]
	\mathrm{P}(G=0)\mathrm{P}(X=1,G=1)\mathrm{P}(X=0,G=1)+\\	
	\mathrm{P}(G=1)\mathrm{P}(X=1,G=0)\mathrm{P}(X=0,G=0)
\end{matrix*}
\right\}
}.$$
\end{lemma}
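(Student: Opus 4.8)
The substantive content of this statement---that the coefficient of $X$ in the least-squares regression of $Y$ on $X$ and $G$ is the convex combination $\sum_{g\in\{0,1\}} w_{G=g}\,\beta_g$ of the two stratum-specific $X$--$Y$ slopes $\beta_g$, with $w_{G=g}$ proportional to $\mathrm{var}(X\mid G=g)\,\mathrm{P}(G=g)$---is exactly the result of \citet{Angrist1999} and \citet{Morgan2007} that the lemma cites, and I would invoke it directly: for binary $G$ the regressor set $\{X,G\}$ is already saturated in $G$, so that characterization applies with no further work. What then remains is purely algebraic: to rewrite the normalized weight in the displayed closed form.

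First I would use that a binary variable has variance equal to the product of its two cell probabilities, so that $\mathrm{var}(X\mid G=g)=\mathrm{P}(X=1\mid G=g)\,\mathrm{P}(X=0\mid G=g)$; expressing each conditional probability as a joint probability over a marginal then gives
\[
\mathrm{var}(X\mid G=g)\,\mathrm{P}(G=g)=\frac{\mathrm{P}(X=1,G=g)\,\mathrm{P}(X=0,G=g)}{\mathrm{P}(G=g)}.
\]
Since the weights matter only up to a common positive constant, I would next normalize them to sum to one, dividing the quantity above by the sum of its values at $g=0$ and $g=1$. Finally, multiplying the numerator and denominator of that ratio by $\mathrm{P}(G=0)\,\mathrm{P}(G=1)$ clears the inner denominators: the factor $1/\mathrm{P}(G=g)$ in the numerator becomes $\mathrm{P}(G=1-g)$, and the denominator becomes $\mathrm{P}(G=1)\,\mathrm{P}(X=1,G=0)\,\mathrm{P}(X=0,G=0)+\mathrm{P}(G=0)\,\mathrm{P}(X=1,G=1)\,\mathrm{P}(X=0,G=1)$, which is precisely the asserted formula.

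I do not expect a genuine obstacle here; the only points that need care are the non-degeneracy conditions that keep everything well defined---$0<\mathrm{P}(G=g)<1$ and $0<\mathrm{P}(X=1\mid G=g)<1$ for $g\in\{0,1\}$, so that both strata are nonempty and the slopes $\beta_g$ exist---and the observation that the $w_{G=g}$ so obtained are automatically nonnegative and sum to one, as weights should. Should a self-contained derivation be preferred in place of citing Angrist--Morgan, I would instead obtain the weighted-average identity from the Frisch--Waugh--Lovell (partialling-out) theorem applied to the residual of $X$ after its regression on $(1,G)$ and then read off the coefficient; but since the lemma attributes the weighted-average fact to prior work, invoking it is the natural route.
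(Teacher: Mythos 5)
Your proposal is correct and follows essentially the same route as the paper's own proof: express $\mathrm{var}(X\mid G=g)\mathrm{P}(G=g)$ as $\mathrm{P}(X=1,G=g)\mathrm{P}(X=0,G=g)/\mathrm{P}(G=g)$ via the Bernoulli variance identity, normalize over the two strata, and multiply numerator and denominator by $\mathrm{P}(G=0)\mathrm{P}(G=1)$ to clear the inner denominators. The extra remarks on non-degeneracy and the Frisch--Waugh--Lovell alternative are fine but not needed.
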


\begin{proof}
\begin{align*}
	\mathrm{var}(X\mid G=g)\mathrm{P}(G=g)
	&=\mathrm{P}(X=1\mid G=g)\mathrm{P}(X=0\mid G=g)\mathrm{P}(G=g)
	=\frac{\mathrm{P}(X=1,G=g)\mathrm{P}(X=0,G=g)}{\mathrm{P}(G=g)}
\end{align*}
\begin{align*}
	w_{G=g}
	&=\frac{\mathrm{var}(X\mid G=g)\mathrm{P}(G=g)}{\mathrm{var}(X\mid G=1)\mathrm{P}(G=1)+\mathrm{var}(X\mid G=0)\mathrm{P}(G=0)}
	\\
	&=\frac{\mathrm{P}(X=1,G=g)\mathrm{P}(X=0,G=g)/\mathrm{P}(G=g)}{
	\left\{
	\begin{matrix*}[l]
		\mathrm{P}(X=1,G=1)\mathrm{P}(X=0,G=1)/\mathrm{P}(G=1)+\\
		\mathrm{P}(X=1,G=0)\mathrm{P}(X=0,G=0)/\mathrm{P}(G=0)
	\end{matrix*}
	\right\}
	}
	\\
	&=\frac{\mathrm{P}(X=1,G=g)\mathrm{P}(X=0,G=g)/\mathrm{P}(G=g)}{
	\left\{
	\begin{matrix*}[l]
		\mathrm{P}(X=1,G=1)\mathrm{P}(X=0,G=1)/\mathrm{P}(G=1)+\\
		\mathrm{P}(X=1,G=0)\mathrm{P}(X=0,G=0)/\mathrm{P}(G=0)
	\end{matrix*}
	\right\}
	}\times
	\frac{\mathrm{P}(G=1)\mathrm{P}(G=0)}{\mathrm{P}(G=1)\mathrm{P}(G=0)}
	\\
	&=\frac{\mathrm{P}(G=1-g)\mathrm{P}(X=1,G=g)\mathrm{P}(X=0,G=g)}{
	\left\{
	\begin{matrix*}[l]
		\mathrm{P}(G=0)\mathrm{P}(X=1,G=1)\mathrm{P}(X=0,G=1)+\\
		\mathrm{P}(G=1)\mathrm{P}(X=1,G=0)\mathrm{P}(X=0,G=0)
	\end{matrix*}
	\right\}
	}.
\end{align*}
\end{proof}

\begin{lemma}\label{lm:4}
For binary variables $F$ and $G$, the following is true:
\begin{align*}
	\mathrm{P}(G=0)\mathrm{P}(G=1,F=1)\mathrm{P}(G=1,F=0)+
    \mathrm{P}(G=1)\mathrm{P}(G=0,F=1)\mathrm{P}(G=0,F=0)=
	\\
	\mathrm{P}(F=0)\mathrm{P}(F=1,G=1)\mathrm{P}(F=1,G=0)+
    \mathrm{P}(F=1)\mathrm{P}(F=0,G=1)\mathrm{P}(F=0,G=0).
\end{align*}
\end{lemma}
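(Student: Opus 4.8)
The plan is to prove this purely algebraic identity by reducing everything to the four joint cell probabilities of the $2\times 2$ table and expanding. Set $p_{ij}=\mathrm{P}(F=i,G=j)$ for $i,j\in\{0,1\}$. Then the one-dimensional marginals collapse to $\mathrm{P}(G=0)=p_{00}+p_{10}$, $\mathrm{P}(G=1)=p_{01}+p_{11}$, $\mathrm{P}(F=0)=p_{00}+p_{01}$, $\mathrm{P}(F=1)=p_{10}+p_{11}$, while the bivariate probabilities appearing in the statement are just single cells, e.g. $\mathrm{P}(G=1,F=1)=\mathrm{P}(F=1,G=1)=p_{11}$, $\mathrm{P}(G=1,F=0)=p_{01}$, $\mathrm{P}(F=1,G=0)=p_{10}$, and so on.

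Substituting, the left-hand side becomes $(p_{00}+p_{10})\,p_{01}p_{11}+(p_{01}+p_{11})\,p_{00}p_{10}$ and the right-hand side becomes $(p_{00}+p_{01})\,p_{10}p_{11}+(p_{10}+p_{11})\,p_{00}p_{01}$. Expanding each of the four products, I expect both sides to collapse to the single symmetric expression $p_{00}p_{01}p_{10}+p_{00}p_{01}p_{11}+p_{00}p_{10}p_{11}+p_{01}p_{10}p_{11}$, namely the third elementary symmetric polynomial in $p_{00},p_{01},p_{10},p_{11}$ (the sum over all ways of deleting one of the four cells). Since that polynomial is invariant under every permutation of its four arguments, and interchanging the roles of $F$ and $G$ merely transposes $p_{01}$ and $p_{10}$, the asserted equality follows immediately once both sides have been identified with it. A cleaner way to phrase the same argument: the left-hand side is manifestly symmetric under $(p_{01},p_{11})\leftrightarrow(p_{10},p_{00})$-type relabelings that correspond to swapping the two values of $G$, the right-hand side under the analogous swap of the two values of $F$; computing one side in the symmetric form above and observing the form is unchanged under transposing $p_{01}\leftrightarrow p_{10}$ gives the other.

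There is no real obstacle here beyond bookkeeping the eight monomials correctly; the only point worth flagging is that nothing in the proof uses that $F$ and $G$ are binary-\emph{valued} beyond the table being $2\times 2$, so the identity in fact holds for arbitrary nonnegative (indeed arbitrary real) weights on a $2\times 2$ array, with $F,G$ playing interchangeable roles. In the intended application this lemma is what rewrites the common denominator of the regression weights from Lemma \ref{lm:3} in the form needed to reach the displayed expression in Theorem \ref{thm:Vlm}, so I would end the proof simply by remarking that this $F\leftrightarrow G$ symmetry is precisely the fact used downstream.
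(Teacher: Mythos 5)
Your proof is correct and is essentially the paper's own argument: both expand the marginals $\mathrm{P}(G=g)$ (resp.\ $\mathrm{P}(F=f)$) into sums of the joint cell probabilities and observe that each side reduces to the same sum of the four products of three distinct cells. Your identification of that common value as the third elementary symmetric polynomial in the four cells is just a cleaner packaging of the same bookkeeping.
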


\begin{proof}
\begin{align*}
    &\mathrm{P}(G=0)\mathrm{P}(G=1,F=1)\mathrm{P}(G=1,F=0)+\mathrm{P}(G=1)\mathrm{P}(G=0,F=1)\mathrm{P}(G=0,F=0)
	\\
    &=\{\mathrm{P}(G=0,F=1)+\mathrm{P}(G=0,F=0)\}\mathrm{P}(G=1,F=1)\mathrm{P}(G=1,F=0)+\\
    &~~~~~~\{\mathrm{P}(G=1,F=1)+\mathrm{P}(G=1,F=0)\}\mathrm{P}(G=0,F=1)\mathrm{P}(G=0,F=0)
    \\
    &=\mathrm{P}(G=0,F=1)\mathrm{P}(G=1,F=1)\mathrm{P}(G=1,F=0)+\mathrm{P}(G=0,F=0)\mathrm{P}(G=1,F=1)\mathrm{P}(G=1,F=0)+\\
    &~~~~~~\mathrm{P}(G=1,F=1)\mathrm{P}(G=0,F=1)\mathrm{P}(G=0,F=0)+\mathrm{P}(G=1,F=0)\mathrm{P}(G=0,F=1)\mathrm{P}(G=0,F=0)
    \\
    &=\mathrm{P}(G=0,F=1)\mathrm{P}(G=1,F=1)[\mathrm{P}(G=1,F=0)+\mathrm{P}(G=0,F=0)]+\\
    &~~~~~~\mathrm{P}(G=0,F=0)\mathrm{P}(G=1,F=0)[\mathrm{P}(G=1,F=1)+\mathrm{P}(G=0,F=1)]
    \\
    &=\mathrm{P}(F=0)\mathrm{P}(F=1,G=1)\mathrm{P}(F=1,G=0)+\mathrm{P}(F=1)\mathrm{P}(F=0,G=1)\mathrm{P}(F=0,G=0).
\end{align*}
\end{proof}

\begin{lemma}\label{lm:5}

If $a,b,a'$ and $b'$ are real numbers that satisfy $0<a'<a<b<b'$ and $a'b'=ab$ then $a'+b'>a+b$.

\end{lemma}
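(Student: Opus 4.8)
The statement is the elementary fact that, among pairs of positive reals with a fixed product, the sum grows as the pair spreads apart. The plan is to reduce everything to this monotonicity and then verify it by a one-line algebraic factorization rather than by calculus.

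First I would set $P := ab = a'b' > 0$ and write $b = P/a$ and $b' = P/a'$, so that both $a,b$ and $a',b'$ are determined by the smaller coordinate together with $P$. The quantity of interest is then
\[
(a'+b') - (a+b) = (a'-a) + \left(\frac{P}{a'} - \frac{P}{a}\right) = (a'-a) - \frac{P(a'-a)}{a a'} = (a'-a)\left(1 - \frac{P}{a a'}\right).
\]
Now both factors are negative: $a'-a < 0$ by hypothesis, and $1 - P/(a a') < 0$ because $a a' < a b = P$ — here I use the hypotheses $a' < b$ (which follows from $a' < a < b$) and $a > 0$. Hence the product is strictly positive, which is exactly $a'+b' > a+b$.

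The only real content is the observation that $a a' < P$, i.e. that it is $a' < b$ (together with positivity) that drives the conclusion; the chain $a' < a < b < b'$ is more than enough to supply this. So there is essentially no obstacle — the proof is a short computation once $b,b'$ are eliminated in favour of $P$. If a more conceptual phrasing is preferred, one can instead note that $f(x) = x + P/x$ is strictly decreasing on $(0,\sqrt{P}\,]$, that $a' < a \le \sqrt P$ since $a < b$ with $ab = P$ forces $a < \sqrt P$, and that $f(a') = a' + b'$ while $f(a) = a + b$; but the factorization above is cleaner and avoids introducing $\sqrt P$.
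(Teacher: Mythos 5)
Your proof is correct. It takes the same basic strategy as the paper's — use the constant product to eliminate the larger elements and factor the difference of sums into two terms of known sign — but with a different parametrization. The paper sets $c=\sqrt{ab}$ and writes $a=c/r$, $b=cr$ with $r=c/a>1$ (and likewise $r'=c/a'>r$), arriving at $(a'+b')-(a+b)=c(r'-r)(1+r'r)/(r'r)>0$. You instead keep $P=ab$ and the smaller coordinates, obtaining $(a'+b')-(a+b)=(a'-a)\bigl(1-P/(aa')\bigr)$, a product of two negative factors. Your version is a little cleaner: it avoids introducing the square root and the ratios $r,r'$, and it isolates exactly which hypotheses are doing the work (only $a'<a$ and $a'<b$ with positivity are needed; $a<b<b'$ is not used beyond supplying $a'<b$). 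The paper's parametrization buys a more symmetric picture of the two pairs as spread around the common geometric mean, which is conceptually suggestive but not logically necessary. Both arguments are complete and rigorous.
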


\begin{proof}

Consider the two numbers $a,b$ with the number $c=\sqrt{ab}$. \add{Because $0<a<b$,} it is clear that $a<c<b$. Let $r$ denote $c/a$. It follows that $r>1$, $a=c/r$, $b=cr$ and $a+b=c\left(\frac{1}{r}+r\right)$. The relationship of $c$ with the pair $a',b'$ is similar: $c=\sqrt{a'b'}$ and $a'<c<b'$. Let $r'$ denote $c/a'$. It follows that $r'>r>1$, $a'=c/r'$, $b'=cr'$ and $a'+b'=c\left(\frac{1}{r'}+r'\right)$.
$$
(a'+b')-(a+b)=c\left(\frac{1}{r'}+r'\right)-c\left(\frac{1}{r}+r\right)=\frac{c(r'-r){(r'r-1)}}{r'r}>0,
$$
therefore $a'+b'>a+b$.
\end{proof}


\bigskip

\begin{proof}[\bf Proof of Theorem \ref{thm:V}]
\begin{align*}
    \V\textup{-bias}(C=c,\mathrm{cov})&=\mathrm{cov}(X,Y\mid C=c)\\
    &=\frac{1}{\{\mathrm{P}(C=c)\}^2}\cdot
    \left\{
    \begin{matrix*}[l]
		\mathrm{P}(X=1,Y=1,C=c)\mathrm{P}(X=0,Y=0,C=c)-\\
		\mathrm{P}(X=1,Y=0,C=c)\mathrm{P}(X=0,Y=1,C=c)
    \end{matrix*}
    \right\}~~~\text{(by Lemma \ref{lm:1})}\\
    &=\frac{1}{\{\mathrm{P}(C=c)\}^2}\cdot
    \left\{
    \begin{matrix*}[l]
		p_{X=1}p_{Y=1}p_{c\mid11}p_{X=0}p_{Y=0}p_{c\mid00}-\\
		p_{X=1}p_{Y=0}p_{c\mid10}p_{X=0}p_{Y=1}p_{c\mid01}
    \end{matrix*}
    \right\}
	\\
    &=\frac{p_{X=1}p_{X=0}p_{Y=1}p_{Y=0}}{\{\mathrm{P}(C=c)\}^2}\cdot
    (p_{c\mid11}p_{c\mid00}-p_{c\mid10}p_{c\mid01}).
    \\
    \V\textup{-bias}(C=c,\textsc{rd})
    &=\frac{\mathrm{cov}(X,Y\mid C=c)}{\mathrm{var}(X\mid C=c)}~~\text{(by Lemma \ref{lm:2})}
	\\
    &=\frac{p_{Y=1}p_{Y=0}p_{X=1}p_{X=0}\cdot(p_{c\mid11}p_{c\mid00}-p_{c\mid10}p_{c\mid01})}{\{\mathrm{P}(C=c)\}^2\mathrm{P}(X=1\mid C=c)\mathrm{P}(X=0\mid C=c)}
	\\
    &=\frac{p_{Y=1}p_{Y=0}}{\mathrm{P}(C=c\mid X=1)\mathrm{P}(C=c\mid X=0)}\cdot(p_{c\mid11}p_{c\mid00}-p_{c\mid10}p_{c\mid01})
	\\
    &=\frac{p_{Y=1}p_{Y=0}}{(p_{Y=1}p_{c\mid11}+p_{Y=0}p_{c\mid10})(p_{Y=1}p_{c\mid01}+p_{Y=0}p_{c\mid00})}\cdot(p_{c\mid11}p_{c\mid00}-p_{c\mid10}p_{c\mid01}).
    \\
    \V\textup{-bias}(C=c,\textsc{or})&=\frac{\mathrm{P}(Y=1\mid X=1,C=c)}{\mathrm{P}(Y=0\mid X=1,C=c)}\cdot\frac{\mathrm{P}(Y=0\mid X=0,C=c)}{\mathrm{P}(Y=1\mid X=0,C=c)}
	\\
    &=\frac{\mathrm{P}(Y=1,C=c\mid X=1)}{\mathrm{P}(Y=0,C=c\mid X=1)}\cdot\frac{\mathrm{P}(Y=0,C=c\mid X=0)}{\mathrm{P}(Y=1,C=c\mid X=0)}\\
    &=\frac{p_{c\mid11}p_{c\mid00}}{p_{c\mid10}p_{c\mid01}}.
\end{align*}
\end{proof}

\begin{proof}[\bf Proof of Corollary \ref{crlry:V1}]
\hfill

Consider the first scenario where $X$ has positive effects on $C$ at both levels of $Y$ and $Y$ has positive effects on $C$ at both levels of $X$. We need to show that \V-bias is negative conditioning on at least one level of $C$. The above-mentioned positive effects mean
$$
0\leq p_{C=1\mid00}<\{p_{C=1\mid10},p_{C=1\mid01}\}<p_{C=1\mid11}\leq 1.
$$
(The curly brackets around the pair $p_{C=1\mid10},p_{C=1\mid01}$ means that both of these probabilities are between the other two probabilities, without any information about their own order.)
In the special case where $p_{C=1\mid00}=0$, we have $g(1)<0$, which means \V-bias conditioning on $C=1$ is negative. We now consider the narrower condition
$$
0<p_{C=1\mid00}<\{p_{C=1\mid10},p_{C=1\mid01}\}<p_{C=1\mid11}\leq 1.
$$
\V-bias is negative conditioning on at least one level of $C$ means that if \V-bias is non-negative given one level of $C$, it must be negative conditioning on the other level, which translates to: if $g(1)\geq 0$ then $g(0)<0$, and if $g(0)\geq 0$ then $g(1)<0$. We only need to show proof for one of these two statements (say the former one); the proof for the other is its mirror image.

First, assume that $g(1)=0$, i.e., $p_{C=1\mid00}p_{C=1\mid11}=p_{C=1\mid10}p_{C=1\mid01}$. Referring to Lemma \ref{lm:5}, of these four conditional probabilities of $C$, we can think of $p_{C=1\mid00}$ as $a'$ and $p_{C=1\mid11}$ as $b'$, and the other two probabilities as $a,b$ in between them. This implies the inequality
$$
p_{C=1\mid00}+p_{C=1\mid11}>p_{C=1\mid10}+p_{C=1\mid01}.
$$
\begin{align*}
    p_{C=0\mid00}p_{C=0\mid11}
    &=(1-p_{C=1\mid00})(1-p_{C=1\mid11})\\
    &=1-(p_{C=1\mid00}+p_{C=1\mid11})+p_{C=1\mid00}p_{C=1\mid11}\\
    &=1-(p_{C=1\mid00}+p_{C=1\mid11})+p_{C=1\mid10}p_{C=1\mid01}~~~\text{(the original assumption)}\\
    &<1-(p_{C=1\mid10}+p_{C=1\mid01})+p_{C=1\mid10}p_{C=1\mid01}~~~\text{(the inequality above)}\\
    &=(1-p_{C=1\mid10})(1-p_{C=1\mid10})\\
    &=p_{C=0\mid10}p_{C=0\mid10},
\end{align*}
which means $g(0)<0$.

Second, assume instead that $g(1)>0$, i.e., $p_{C=1\mid00}p_{C=1\mid11}>p_{C=1\mid10}p_{C=1\mid01}$. Consider $p^*$ such  that $p^*p_{C=1\mid11}=p_{C=1\mid10}p_{C=1\mid01}$.
It follows that $0<p^*<p_{C=1\mid00}$. Now consider
$$
0<p*<\{p_{C=1\mid10},p_{C=1\mid01}\}<p_{C=1\mid11}~~~\text{and}~~~p^*p_{C=1\mid11}=p_{C=1\mid10}p_{C=1\mid01}.
$$
Using similar reasoning based on Lemma \ref{lm:5} as above, we arrive at the inequality
$$
(1-p^*)p_{C=0\mid11}<p_{C=0\mid10}p_{C=0\mid10}.
$$
On the other hand, $p^*<p_{C=1\mid00}$ implies that $p_{C=0\mid00}<(1-p^*)$.
Combining this with the inequality above, we have
$$
p_{C=0\mid00}p_{C=0\mid11}<p_{C=0\mid10}p_{C=0\mid10},
$$
which means $g(0)<0$. This completes the proof for the broad scenario where $X$ has positive effects on $C$ at both levels of $Y$ and $Y$ has positive effects on $C$ at both levels of $X$.

For the scenario where $X$ has negative effects on $C$ at both levels of $Y$ and $Y$ has negative effects on $C$ at both levels of $X$, we need only reverse-code both $X$ and $Y$ to arrive at the former scenario. Reverse-coding both variables does not change their sign of their association, so the result for the sign of \V-bias being negative for at least one level of $C$ also applies in this scenario.

In the third scenario where of $X$ and $Y$, one variable has positive effects while the other has negative effects on $C$, we reverse code the variable that has negative effects on $C$ to arrive at the first scenario. Reverse coding only one variable flips the sign of their association, so in this scenario \V-bias is positive for at least one level of $C$.
\end{proof}

\bigskip

\begin{proof}[\bf Proof of Corollary \ref{crlry:V2}]
\hfill

Qualitative interaction between $X$ and $Y$ on $C$ covers situations where one (or both) of the variables $X$ and $Y$ has the property that its effects on $C$ conditioning on the two levels of the other variable are of opposite signs.
It turns out we need only one of the variables $X$ and $Y$ to have this property, for \V-bias to be positive for one level of $C$ and negative for the other.
Without loss of generality, assume $X$ has this property.
Also without loss of generality, assume $X$ has a positive effect on $C$ when $Y=1$ and a negative effect on $C$ when $Y=0$ (if we switch the signs of these effects, similar reasoning applies).
This means
\begin{align*}
    p_{C=1\mid01}&<p_{C=1\mid11},\\
    p_{C=1\mid10}&<p_{C=1\mid00}.
\end{align*}
Combining these, we have $p_{C=1\mid01}p_{C=1\mid10}<p_{C=1\mid11}p_{C=1\mid00}$, which means $g(1)>0$.
The condition can also be re-expressed as
\begin{align*}
    p_{C=0\mid01}&>p_{C=0\mid11},\\
    p_{C=0\mid10}&>p_{C=0\mid00}.
\end{align*}
Combining these, we have $p_{C=0\mid01}p_{C=0\mid10}>p_{C=0\mid11}p_{C=0\mid00}$, which means $g(0)<0$.
That $g(1)$ and $g(0)$ are of opposite signs means that \V-bias takes on opposite signs conditioning on the two levels of $C$.
\end{proof}

\bigskip

\begin{proof}[\bf Proof of Theorem \ref{thm:Y}]
\hfill

By Lemma \ref{lm:1}, $\Y\textup{-bias}(D=d,\mathrm{cov})$ is equal to
$$\frac{1}{\{\mathrm{P}(D=d)\}^2}
\left\{
\begin{matrix*}[l]
\mathrm{P}(X=1,Y=1,D=d)\mathrm{P}(X=0,Y=0,D=d)-\\
\mathrm{P}(X=1,Y=0,D=d)\mathrm{P}(X=0,Y=1,D=d)
\end{matrix*}
\right\}.
$$
We expand the second term in this product as
\begin{align*}
&\begin{matrix*}[l]
\{\mathrm{P}(X=1,Y=1,C=1,D=d)+\mathrm{P}(X=1,Y=1,C=0,D=d)\}\times\\
\{\mathrm{P}(X=0,Y=0,C=1,D=d)+\mathrm{P}(X=0,Y=0,C=0,D=d)\}-\\
\{\mathrm{P}(X=1,Y=0,C=1,D=d)+\mathrm{P}(X=1,Y=0,C=0,D=d)\}\times\\
\{\mathrm{P}(X=0,Y=1,C=1,D=d)+\mathrm{P}(X=0,Y=1,C=0,D=d)\}
\end{matrix*}
\\
&=p_{X=1}p_{Y=1}p_{X=0}p_{Y=0}
\left\{
\begin{matrix*}[l](p_{C=1\mid11}p_{d\mid1}+p_{C=0\mid11}p_{d\mid0})
(p_{C=1\mid00}p_{d\mid1}+p_{C=0\mid00}p_{d\mid0})-\\
(p_{C=1\mid10}p_{d\mid1}+p_{C=0\mid10}p_{d\mid0})
(p_{C=1\mid01}p_{d\mid1}+p_{C=0\mid01}p_{d\mid0})
\end{matrix*}
\right\}
\\
&=p_{X=1}p_{Y=1}p_{X=0}p_{Y=0}
\left\{
\begin{matrix*}[l]
p_{d\mid1}^2(p_{C=1\mid11}p_{C=1\mid00}-p_{C=1\mid10}p_{C=1\mid01})+\\
p_{d\mid0}^2(p_{C=0\mid11}p_{C=0\mid00}-p_{C=0\mid10}p_{C=0\mid01})+\\
p_{d\mid1}p_{d\mid0}
\begin{pmatrix*}[l]
p_{C=1\mid11}p_{C=0\mid00}+p_{C=0\mid11}p_{C=1\mid00}-\\
p_{C=1\mid10}p_{C=0\mid01}-p_{C=0\mid10}p_{C=1\mid01}
\end{pmatrix*}
\end{matrix*}
\right\}
\\
&=p_{X=1}p_{Y=1}p_{X=0}p_{Y=0}
\left\{
\begin{matrix*}[l]
p^2_{d\mid1}(p_{C=1\mid11}p_{C=1\mid00}-p_{C=1\mid10}p_{C=1\mid01})+\\
p^2_{d\mid0}(p_{C=0\mid11}p_{C=0\mid00}-p_{C=0\mid10}p_{C=0\mid01})+\\
p_{d\mid1}p_{d\mid0}
\begin{pmatrix*}[l]
p_{C=1\mid11}-p_{C=1\mid11}p_{C=1\mid00}+p_{C=0\mid11}-p_{C=0\mid11}p_{C=0\mid00}-\\
p_{C=1\mid10}+p_{C=1\mid10}p_{C=1\mid01}-p_{C=0\mid10}+p_{C=0\mid10}p_{C=0\mid01}
\end{pmatrix*}
\end{matrix*}
\right\}
\\
&=p_{X=1}p_{Y=1}p_{X=0}p_{Y=0}
\left\{
\begin{matrix*}[l]
p^2_{d\mid1}(p_{C=1\mid11}p_{C=1\mid00}-p_{C=1\mid10}p_{C=1\mid01})+\\
p^2_{d\mid0}(p_{C=0\mid11}p_{C=0\mid00}-p_{C=0\mid10}p_{C=0\mid01})+\\
p_{d\mid1}p_{d\mid0}
\begin{pmatrix*}[l]
-p_{C=1\mid11}p_{C=1\mid00}-p_{C=0\mid11}p_{C=0\mid00}+\\
p_{C=1\mid10}p_{C=1\mid01}+p_{C=0\mid10}p_{C=0\mid01}
\end{pmatrix*}
\end{matrix*}
\right\}
\\
&=p_{X=1}p_{X=0}p_{Y=1}p_{Y=0}\cdot
(p_{d\mid1}-p_{d\mid0})\cdot
\left\{
\begin{matrix*}[l]
p_{d\mid1}(p_{C=1\mid11}p_{C=1\mid00}-p_{C=1\mid10}p_{C=1\mid01})-\\
p_{d\mid0}(p_{C=0\mid11}p_{C=0\mid00}-p_{C=0\mid10}p_{C=0\mid01})
\end{matrix*}
\right\}.
\end{align*}
Therefore,
\begin{align*}
    \Y\textup{-bias}(D=d,\mathrm{cov})
    &=\frac{p_{X=1}p_{X=0}p_{Y=1}p_{Y=0}}{\{\mathrm{P}(D=d)\}^2}\cdot
    (p_{d\mid1}-p_{d\mid0})\cdot\left\{
    \begin{matrix*}[l]
        p_{d\mid1}(p_{C=1\mid00}p_{C=1\mid11}-p_{C=1\mid10}p_{C=1\mid01})-\\
        p_{d\mid0}(p_{C=0\mid00}p_{C=0\mid11}-p_{C=0\mid10}p_{C=0\mid01})
    \end{matrix*}
    \right\}.
    \end{align*}
By Lemma \ref{lm:2},
\begin{align*}
\Y\textup{-bias}(D=d,\textsc{rd})
=\frac{\Y\textup{-bias}(D=d,\mathrm{cov})}{\mathrm{var}(X\mid D=d)}
=\frac{\Y\textup{-bias}(D=d,\mathrm{cov})\{\mathrm{P}(D=d)\}^2}{\mathrm{P}(X=1\mid D=d)\mathrm{P}(X=0\mid D=d)\{\mathrm{P}(D=d)\}^2}.
\end{align*}
The denominator in this expression is equal to
\begin{align*}
&p_{X=1}p_{X=0}\cdot\mathrm{P}(D=d\mid X=1)\mathrm{P}(D=d\mid X=0)\\
&=p_{X=1}p_{X=0}\cdot
\left\{
    \begin{matrix*}[l]
		p_{Y=1}(p_{C=1\mid11}p_{d\mid1}+p_{C=0\mid11}p_{d\mid0})+\\
		p_{Y=0}(p_{C=1\mid10}p_{d\mid1}+p_{C=0\mid10}p_{d\mid0})
    \end{matrix*}
    \right\}
    \cdot
    \left\{
    \begin{matrix*}[l]
		p_{Y=1}(p_{C=1\mid01}p_{d\mid1}+p_{C=0\mid01}p_{d\mid0})+\\
		p_{Y=0}(p_{C=1\mid00}p_{d\mid1}+p_{C=0\mid00}p_{d\mid0})
    \end{matrix*}
    \right\}.
\end{align*}
Therefore, 
\begin{align*}
\Y\textup{-bias}(D=d,\textsc{rd})=
    \frac{p_{Y=1}p_{Y=0}\cdot
    (p_{d\mid1}-p_{d\mid0})\cdot\left\{
    \begin{matrix*}[l]
        p_{d\mid1}(p_{C=1\mid00}p_{C=1\mid11}-p_{C=1\mid10}p_{C=1\mid01})-\\
        p_{d\mid0}(p_{C=0\mid00}p_{C=0\mid11}-p_{C=0\mid10}p_{C=0\mid01})
    \end{matrix*}
    \right\}
    }{
    \left\{
    \begin{matrix*}[l]
		p_{Y=1}(p_{C=1\mid11}p_{d\mid1}+p_{C=0\mid11}p_{d\mid0})+\\
		p_{Y=0}(p_{C=1\mid10}p_{d\mid1}+p_{C=0\mid10}p_{d\mid0})
    \end{matrix*}
    \right\}
    \cdot
    \left\{
    \begin{matrix*}[l]
		p_{Y=1}(p_{C=1\mid01}p_{d\mid1}+p_{C=0\mid01}p_{d\mid0})+\\
		p_{Y=0}(p_{C=1\mid00}p_{d\mid1}+p_{C=0\mid00}p_{d\mid0})
    \end{matrix*}
    \right\}
    }.
\end{align*}

\begin{align*}
\Y\textup{-bias}(D=d,\textsc{or})
&=\frac{\mathrm{P}(Y=1\mid X=1,D=d)\mathrm{P}(Y=0\mid X=0,D=d)}{\mathrm{P}(Y=1\mid X=0,D=d)\mathrm{P}(Y=0\mid X=1,D=d)}
\\
&=\frac{\mathrm{P}(D=d\mid X=1,Y=1)\mathrm{P}(D=d\mid X=0,Y=0)}{\mathrm{P}(D=d\mid X=1,Y=0)\mathrm{P}(D=d\mid X=0,Y=1)}
\\
&=\frac{(p_{d\mid1}p_{C=1\mid11}+p_{d\mid0}p_{C=0\mid11})(p_{d\mid1}p_{C=1\mid00}+p_{d\mid0}p_{C=0\mid00})}{(p_{d\mid1}p_{C=1\mid10}+p_{d\mid0}p_{C=0\mid10})(p_{d\mid1}p_{C=1\mid01}+p_{d\mid0}p_{C=0\mid01})}
\\
&=\frac{\{(p_{d\mid1}-p_{d\mid0})p_{C=1\mid11}+p_{d\mid0}\}\{(p_{d\mid1}-p_{d\mid0})p_{C=1\mid00}+p_{d\mid0}\}}{\{(p_{d\mid1}-p_{d\mid0})p_{C=1\mid10}+p_{d\mid0}\}\{(p_{d\mid1}-p_{d\mid0})p_{C=1\mid01}+p_{d\mid0}\}}
\\
&=\frac{(p_{d\mid1}-p_{d\mid0})^2p_{C=1\mid11}p_{C=1\mid00}+(p_{d\mid1}-p_{d\mid0})p_{d\mid0}(p_{C=1\mid11}+p_{C=1\mid00})+p_{d\mid0}^2}
{(p_{d\mid1}-p_{d\mid0})^2p_{C=1\mid10}p_{C=1\mid01}+(p_{d\mid1}-p_{d\mid0})p_{d\mid0}(p_{C=1\mid10}+p_{C=1\mid01})+p_{d\mid0}^2}
\\
&=\frac{(p_{d\mid1}-p_{d\mid0})\{p_{d\mid1}p_{C=1\mid11}p_{C=1\mid00}+p_{d\mid0}(1-p_{C=0\mid11}p_{C=0\mid00})\}+p_{d\mid0}^2}
{(p_{d\mid1}-p_{d\mid0})\{p_{d\mid1}p_{C=1\mid10}p_{C=1\mid01}+p_{d\mid0}(1-p_{C=0\mid10}p_{C=0\mid01})\}+p_{d\mid0}^2}
\\
&=\frac{(p_{d\mid1}-p_{d\mid0})(p_{d\mid1}p_{C=1\mid11}p_{C=1\mid00}-p_{d\mid0}p_{C=0\mid11}p_{C=0\mid00})+p_{d\mid1}p_{d\mid0}}
{(p_{d\mid1}-p_{d\mid0})(p_{d\mid1}p_{C=1\mid10}p_{C=1\mid01}-p_{d\mid0}p_{C=0\mid10}p_{C=0\mid01})+p_{d\mid1}p_{d\mid0}}
\end{align*}
\end{proof}

\begin{proof}[\bf Proof of Corollary \ref{crlry:Y}]

\begin{align*}
    \mathrm{cov}(X,Y\mid D=d)\
    &=\frac{p_{X=1}p_{X=0}p_{Y=1}p_{Y=0}}{\{\mathrm{P}(D=d)\}^2}\times(p_{d\mid1}-p_{d\mid0})\times\{p_{d\mid1}g(1)-p_{d\mid0}g(0)\}\\
    &=\frac{(p_{d\mid1}-p_{d\mid0})}{\{\mathrm{P}(D=d)\}^2}\times
    \begin{bmatrix*}[l]
        p_{d\mid1}\{\mathrm{P}(C=1)\}^2\frac{p_{X=1}p_{X=0}p_{Y=1}p_{Y=0}g(1)}{\{\mathrm{P}(C=1)\}^2}-\\
        p_{d\mid0}\{\mathrm{P}(C=0)\}^2\frac{p_{X=1}p_{X=0}p_{Y=1}p_{Y=0}g(0)}{\{\mathrm{P}(C=0)\}^2}
    \end{bmatrix*}\\
    &=\frac{(p_{d\mid1}-p_{d\mid0})}{\{\mathrm{P}(D=d)\}^2}\times
    \begin{bmatrix*}[l]
    p_{d\mid1}\{\mathrm{P}(C=1)\}^2\mathrm{cov}(X,Y\mid C=1)-\\
    p_{d\mid0}\{\mathrm{P}(C=0)\}^2\mathrm{cov}(X,Y\mid C=0)
    \end{bmatrix*}.
\end{align*}
~
\end{proof}

\bigskip

\begin{proof}[\bf Proof of Theorem \ref{thm:stratum-extension}]
\hfill

First, consider the \rightM~structure. By Lemma \ref{lm:1}, \rightM\textup{-bias}$(C=c,\mathrm{cov})$ is equal to
$$
\frac{1}{\{\mathrm{P}(C=c)\}^2}\cdot
    \left\{
    \begin{matrix*}[l]
    \mathrm{P}(X=1,Y=1,C=c)\mathrm{P}(X=0,Y=0,C=c)-\\
    \mathrm{P}(X=1,Y=0,C=c)\mathrm{P}(X=0,Y=1,C=c)
    \end{matrix*}
    \right\}.
$$
We expand the second term in this product as
\begin{align*}
    &\{\mathrm{P}(X=1,B=1,Y=1,C=c)+\mathrm{P}(X=1,B=0,Y=1,C=c)\}\times\\
    &\{\mathrm{P}(X=0,B=1,Y=0,C=c)+\mathrm{P}(X=0,B=0,Y=0,C=c)\}-\\
    &\{\mathrm{P}(X=1,B=1,Y=0,C=c)+\mathrm{P}(X=1,B=0,Y=0,C=c)\}\times\\
    &\{\mathrm{P}(X=0,B=1,Y=1,C=c)+\mathrm{P}(X=0,B=0,Y=1,C=c)\}
    \\
    =&\{p_{Y=1\mid1}\mathrm{P}(X=1,B=1,C=c)+p_{Y=1\mid0}\mathrm{P}(X=1,B=0,C=c)\}\times\\
    &\{p_{Y=0\mid1}\mathrm{P}(X=0,B=1,C=c)+p_{Y=0\mid0}\mathrm{P}(X=0,B=0,C=c)\}-\\
    &\{p_{Y=0\mid1}\mathrm{P}(X=1,B=1,C=c)+p_{Y=0\mid0}\mathrm{P}(X=1,B=0,C=c)\}\times\\
    &\{p_{Y=1\mid1}\mathrm{P}(X=0,B=1,C=c)+p_{Y=1\mid0}\mathrm{P}(X=0,B=0,C=c)\}
    \\
    =&p_{Y=1\mid1}p_{Y=0\mid0}\mathrm{P}(X=1,B=1,C=c)\mathrm{P}(X=0,B=0,C=c)+\\
    &p_{Y=1\mid0}p_{Y=0\mid1}\mathrm{P}(X=1,B=0,C=c)\mathrm{P}(X=0,B=1,C=c)+\\
    &p_{Y=1\mid0}p_{Y=0\mid1}\mathrm{P}(X=1,B=1,C=c)\mathrm{P}(X=0,B=0,C=c)-\\
    &p_{Y=1\mid1}p_{Y=0\mid0}\mathrm{P}(X=1,B=0,C=c)\mathrm{P}(X=0,B=1,C=c)
    \\
    =&
    \left\{
    \begin{matrix*}[l]
    \mathrm{P}(X=1,B=1,C=c)\mathrm{P}(X=0,B=0,C=c)-\\
    \mathrm{P}(X=1,B=0,C=c)\mathrm{P}(X=0,B=1,C=c)
    \end{matrix*}
    \right\}
    \cdot(p_{Y=1\mid1}p_{Y=0\mid0}-p_{Y=1\mid0}p_{Y=0\mid1})
    \\
    =&
    \left\{
    \begin{matrix*}[l]
    \mathrm{P}(X=1,B=1,C=c)\mathrm{P}(X=0,B=0,C=c)-\\
    \mathrm{P}(X=1,B=0,C=c)\mathrm{P}(X=0,B=1,C=c)
    \end{matrix*}
    \right\}
    (p_{Y=1\mid1}-p_{Y=1\mid0}).
\end{align*}
It follows that
\begin{align*}
    \rightM\textup{-bias}(C=c,\mathrm{cov})
    &=\frac{1}{\{\mathrm{P}(C=c)\}^2}\cdot
    \left\{
    \begin{matrix*}[l]
    \mathrm{P}(X=1,B=1,C=c)\mathrm{P}(X=0,B=0,C=c)-\\
    \mathrm{P}(X=1,B=0,C=c)\mathrm{P}(X=0,B=1,C=c)
    \end{matrix*}
    \right\}
    \cdot\textsc{rd}_\textup{right}
    \\
    &=\V\textup{-bias-em}(C=c,\mathrm{cov})\cdot\textsc{rd}_\textup{right}~~~\text{(by Lemma \ref{lm:1})}.\\
    \rightM\textup{-bias}(C=c,\mathrm{rd})
    &=\frac{\mathrm{cov}(X,Y\mid C=c)}{\mathrm{var}(X\mid C=c)}~~~\text{(by Lemma \ref{lm:2})}
    \\
    &=\frac{\V\textup{-bias-em}(C=c,\mathrm{cov})\cdot\textsc{rd}_\textup{right}}{\mathrm{var}(X\mid C=c)}~~~\text{(based on result immediately above)}
    \\
    &=\V\textup{-bias-em}(C=c,\mathrm{rd})\cdot\textsc{rd}_\textup{right}~~~\text{(by Lemma \ref{lm:2})}.
\end{align*}
Next, consider the \leftM~structure. The proof for
$$\leftM\textup{-bias}(C=c,\mathrm{cov})=\textsc{rd}_\textup{left}\cdot\V\textup{-bias-em}(C=c,\mathrm{cov})$$
is similar to the proof for \rightM-bias$(C=c,\mathrm{cov})$, because the \leftM~structure is a mirror image of the \rightM~structure, and covariance is symmetric.
We derive \leftM-bias as conditional \textsc{rd}:
\begin{align*}
    \leftM\textup{-bias}(C=c,\textsc{rd})
    &=\frac{\leftM\textup{-bias}(C=c,\mathrm{cov})}{\mathrm{var}(X\mid C=c)}~~~\text{(by Lemma \ref{lm:2})}
    \\
    &=\frac{\textsc{rd}_\textup{left}\cdot\V\textup{-bias-em}(C=c,\mathrm{cov})}{\mathrm{var}(X\mid C=c)}
    \\
    &=\textsc{rd}_\textup{left}\cdot\frac{\V\textup{-bias-em}(C=c,\mathrm{cov})}{\mathrm{var}(A\mid C=c)}\cdot\frac{\mathrm{var}(A\mid C=c)}{\mathrm{var}(X\mid C=c)}
    \\
    &=\textsc{rd}_\textup{left}\cdot\V\textup{-bias-em}(C=c,\textsc{rd})\cdot\textsc{vr}(c)~~~\text{(by Lemma \ref{lm:2})}.
\end{align*}

The proof for \M-bias is a trivial extension of the proofs for \rightM- and \leftM-bias.
The proofs for \rightlongM-, \leftlongM- and \longM-bias are almost exactly the same as the proofs for \rightM-, \leftM- and \M-bias, respectively, except replacing $C=c$ with $D=d$.
\end{proof}

\bigskip

\begin{proof}[\bf Proof of Theorem \ref{thm:Vlm}]
\hfill

By Lemma 3, the weights that average \V-bias$(C=1,\textsc{rd})$ and \V-bias$(C=0,\textsc{rd})$ to \V-bias(\textsc{lm}) are $w_{C=1}$ and $w_{C=0}$ with the form
$$
w_{C=c}
=\frac{\mathrm{P}(C=1-c)\mathrm{P}(C=c,X=1)\mathrm{P}(C=c,X=0)}{
	\left\{
    \begin{matrix*}[l]
        \mathrm{P}(C=0)\mathrm{P}(C=1,X=1)\mathrm{P}(C=1,X=0)+\\
        \mathrm{P}(C=1)\mathrm{P}(C=0,X=1)\mathrm{P}(C=0,X=0)
    \end{matrix*}
    \right\}
}.
$$
We can rewrite \V-bias$(C=c,\textsc{rd})$ from Theorem \ref{thm:V} as
$$\V\text{-bias}(C=c,\textsc{rd})    =\frac{p_{X=1}p_{X=0}p_{Y=1}p_{Y=0}g(c)}{\mathrm{P}(C=c,X=1)\mathrm{P}(C=c,X=0)}.$$
Combining these with the weights, we have
$$
\V\text{-bias}(\textsc{lm})=
\frac{p_{X=1}p_{X=0}p_{Y=1}p_{Y=0}\{\mathrm{P}(C=0)g(1)+\mathrm{P}(C=1)g(0)\}}{
	\left\{
    \begin{matrix*}[l]
        \mathrm{P}(C=0)\mathrm{P}(C=1,X=1)\mathrm{P}(C=1,X=0)+\\
        \mathrm{P}(C=1)\mathrm{P}(C=0,X=1)\mathrm{P}(C=0,X=0)
    \end{matrix*}
    \right\}
}.
$$
We tackle the numerator and denominator separately. The numerator includes the term
\begin{align*}
    \mathrm{P}&(C=0)g(1)+\mathrm{P}(C=1)g(0)
    =g(1)-\mathrm{P}(C=1)[g(1)-g(0)]\\
    =&(p_{C=1\mid11}p_{C=1\mid00}-p_{C=1\mid10}p_{C=1\mid01})-
    \begin{pmatrix*}[l]
    p_{X=1}p_{Y=1}p_{C=1\mid11}+\\
    p_{X=1}p_{Y=0}p_{C=1\mid10}+\\
    p_{X=0}p_{Y=1}p_{C=1\mid01}+\\
    p_{X=0}p_{Y=0}p_{C=1\mid00}
    \end{pmatrix*}
    \cdot(p_{C=1\mid11}+p_{C=1\mid00}-p_{C=1\mid10}-p_{C=1\mid01})
    \\
    =&-p^2_{C=1\mid00}(1-p_{X=1})(1-p_{Y=1})-p^2_{C=1\mid11}p_{X=1}p_{Y=1}+\\
    &p^2_{C=1\mid10}p_{X=1}(1-p_{Y=1})+p^2_{C=1\mid01}(1-p_{X=1})p_{Y=1}+\\
    &p_{C=1\mid00}p_{C=1\mid11}(p_{X=1}+p_{Y=1}-2p_{X=1}p_{Y=1})+\\
    &p_{C=1\mid10}p_{C=1\mid01}(-1+p_{X=1}+p_{Y=1}-2p_{X=1}p_{Y=1})+\\
    &p_{C=1\mid00}p_{C=1\mid10}(1-2p_{X=1})(1-p_{Y=1})+p_{C=1\mid00}p_{C=1\mid01}(1-p_{X=1})(1-2p_{Y=1})+\\
    &p_{C=1\mid10}p_{C=1\mid11}p_{X=1}(2p_{Y=1}-1)+p_{C=1\mid01}p_{C=1\mid11}(2p_{X=1}-1)p_{Y=1}
    \\
    =&-\left\{
    \begin{matrix*}[l]
    p_{X=1}(p_{C=1\mid11}-p_{C=1\mid10})+\\
    p_{X=0}(p_{C=1\mid01}-p_{C=1\mid00})
    \end{matrix*}
    \right\}
    \cdot
    \left\{
    \begin{matrix*}[l]
    p_{Y=1}(p_{C=1\mid11}-p_{C=1\mid01})+\\
    p_{Y=0}(p_{C=1\mid10}-p_{C=1\mid00})
    \end{matrix*}
    \right\}.
\end{align*}
By Lemma 4, the denominator is equal to
\begin{align*}
   &\mathrm{P}(X=0)\mathrm{P}(X=1,C=1)\mathrm{P}(X=1,C=0)+
    \mathrm{P}(X=1)\mathrm{P}(X=0,C=1)\mathrm{P}(X=0,C=0)
    \\
    &=p^2_{X=1}p_{X=0}\mathrm{P}(C=1\mid X=1)\mathrm{P}(C=0\mid X=1)+
    p_{X=1}p^2_{X=0}\mathrm{P}(C=1\mid X=0)\mathrm{P}(C=0\mid X=0)
    \\
    &=p_{X=1}p_{X=0}\left\{
    \begin{matrix*}[l]
        p_{X=1}(p_{Y=1}p_{C=1\mid11}+p_{Y=0}p_{C=1\mid10})(p_{Y=1}p_{C=0\mid11}+p_{Y=0}p_{C=0\mid10})+\\
        p_{X=0}(p_{Y=1}p_{C=1\mid01}+p_{Y=0}p_{C=1\mid00})(p_{Y=1}p_{C=0\mid01}+p_{Y=0}p_{C=0\mid00})
    \end{matrix*}
    \right\}.
\end{align*}
Putting these results together, we have the result in Theorem \ref{thm:Vlm}.
\end{proof}

\begin{proof}[Proof of Corollary \ref{crlry:V3}]

Consider the scenario where $X$ has positive effects on $C$ at both levels of $Y$ and $Y$ has positive effects on $C$ at both levels of $X$. This means
\begin{align*}
    p_{C=1\mid11}-p_{C=1\mid10}>0,\\
    p_{C=1\mid01}-p_{C=1\mid00}>0,\\
    p_{C=1\mid11}-p_{C=1\mid01}>0,\\
    p_{C=1\mid10}-p_{C=1\mid00}>0.
\end{align*}
It follows that
\begin{align*}
    h(c)=-
    \left\{
    \begin{matrix*}[l]
		p_{X=1}(p_{C=1\mid11}-p_{C=1\mid10})+\\
		p_{X=0}(p_{C=1\mid01}-p_{C=1\mid00})
    \end{matrix*}
    \right\}
    \times
    \left\{
    \begin{matrix*}[l]
		p_{Y=1}(p_{C=1\mid11}-p_{C=1\mid01})+\\
		p_{Y=0}(p_{C=1\mid10}-p_{C=1\mid00})
    \end{matrix*}
    \right\}
    <0,
\end{align*}
therefore \V-bias(\textsc{lm}) is negative.

The proofs for the other scenarios are similar.
\end{proof}

\bigskip

\begin{proof}[\bf Proof of Theorem \ref{thm:general}]
\hfill

We first prove Theorem \ref{thm:general} for structures \V, \Y, \leftM, \leftlongM, and then extend to \rightM, \M, \rightlongM, \longM. To be concise, we will stop at the general form of $\phi(\textup{structure})$, except for $\phi(\V)$, which has been derived in the proof of Theorem \ref{thm:Vlm}. We include the derivation of all other $\phi(\textup{structure})$ detailed expressions in a stand-alone section following this proof.\\

\noindent\underline{\V~structure}: Based on the proof of Theorem \ref{thm:Vlm},
\begin{align*}
	\phi(\V)&=\mathrm{P}(C=0)\mathrm{P}(C=1,X=1)\mathrm{P}(C=1,X=0)+
    \mathrm{P}(C=1)\mathrm{P}(C=0,X=1)\mathrm{P}(C=0,X=0)
    \\
    &=p_{X=1}p_{X=0}\cdot
	\left\{
    \begin{matrix*}[l]
        p_{X=1}(p_{Y=1}p_{C=1\mid11}+p_{Y=0}p_{C=1\mid10})(p_{Y=1}p_{C=0\mid11}+p_{Y=0}p_{C=0\mid10})+\\
        p_{X=0}(p_{Y=1}p_{C=1\mid01}+p_{Y=0}p_{C=1\mid00})(p_{Y=1}p_{C=0\mid01}+p_{Y=0}p_{C=0\mid00})
    \end{matrix*}
    \right\}.
\end{align*}
As $\textsc{var}_\text{left}=p_{X=1}p_{X=0}$, $\textsc{var}_\text{right}=p_{Y=1}p_{Y=0}$, the result from Theorem \ref{thm:Vlm} translates to
$$
\V\text{-bias}(\textsc{lm})=h(c)\cdot\textsc{var}_\text{left}\cdot\textsc{var}_\text{right}\cdot 1/\phi(\V).
$$

\noindent\underline{\Y~structure}: By Lemma \ref{lm:3} and the definition of $\phi$ given in the Appendix, the weights that average \Y-bias$(D=1,\textsc{rd})$ and \Y-bias$(D=0,\textsc{rd})$ to \Y-bias(\textsc{lm}) take the form
\begin{align*}
w_{D=d}
&=\mathrm{P}(D=1-d)\mathrm{P}(D=d,X=1)\mathrm{P}(D=d,X=0)/\phi(\Y).
\end{align*}
The proof of Theorem \ref{thm:Y} shows that the denominator in the \Y-bias$(D=d,\textsc{rd})$ formula in Theorem \ref{thm:Y} is equal to $\mathrm{P}(D=d\mid X=1)\mathrm{P}(D=d\mid X=0)$.
We thus rewrite the \Y-bias$(D=d,\textsc{rd})$ formula as
\begin{align*}
\Y\text{-bias}(D=d,\textsc{rd})    
&=\frac{p_{X=1}p_{X=0}p_{Y=1}p_{Y=0}\cdot(p_{d\mid1}-p_{d\mid0})\cdot\{p_{d\mid1}g(1)-p_{d\mid0}g(0)\}}{\mathrm{P}(D=d,X=1)\mathrm{P}(D=d,X=0)}\\
&=\frac{\textsc{var}_\textup{left}\cdot\textsc{var}_\textup{right}\cdot(p_{d\mid1}-p_{d\mid0})\cdot\{p_{d\mid1}g(1)-p_{d\mid0}g(0)\}}{\mathrm{P}(D=d,X=1)\mathrm{P}(D=d,X=0)}.
\end{align*}
Combining these with the weights, we have
\begin{align*}
\Y\text{-bias}(\textsc{lm})&=
\frac{\textsc{var}_\textup{left}\cdot\textsc{var}_\textup{right}\cdot(p_{D=1\mid1}-p_{D=1\mid0})}{\phi(\Y)}\times\\
&~~~~~[
\mathrm{P}(D=0)\{p_{D=1\mid1}g(1)-p_{D=1\mid0}g(0)\}+
\mathrm{P}(D=1)\{p_{D=0\mid0}g(0)-p_{D=0\mid1}g(1)\}
]
\end{align*}
where
\begin{align*}
    &\mathrm{P}(D=0)\{p_{D=1\mid1}g(1)-p_{D=1\mid0}g(0)\}+\mathrm{P}(D=1)\{p_{D=0\mid0}g(0)-p_{D=0\mid1}g(1)\}=\\
    &=\{p_{D=1\mid1}g(1)-p_{D=1\mid0}g(0)\}+\mathrm{P}(D=1)\{g(0)-g(1)\}\\
    &=p_{D=1\mid1}g(1)-p_{D=1\mid0}g(0)+(\mathrm{P}(C=1)p_{D=1\mid1}+\mathrm{P}(C=0)p_{D=1\mid0})\{g(0)-g(1)\}\\
    &=(p_{D=1\mid1}-p_{D=1\mid0})\{\mathrm{P}(C=0)g(1)+\mathrm{P}(C=1)g(0)\}\\
    &=(p_{D=1\mid1}-p_{D=1\mid0})h(c).
\end{align*}
Therefore,
\begin{align*}
\Y\text{-bias}(\textsc{lm})
&=\frac{\textsc{var}_\textup{left}\cdot\textsc{var}_\textup{right}\cdot(p_{D=1\mid1}-p_{D=1\mid0})^2\cdot h(c)}{\phi(\Y)}
=h(c)\cdot\textsc{rd}_\textup{child}^2\cdot\textsc{var}_\textup{left}\cdot\textsc{var}_\textup{right}\cdot 1/\phi(\Y).
\end{align*}

\noindent\underline{\leftM~structure}: By Lemma \ref{lm:3}, the weights that average \leftM-bias$(C=1,\textsc{rd})$ and \leftM-bias$(C=0,\textsc{rd})$ to \leftM-bias(\textsc{lm}) take the form
\begin{align*}
w_{C=c}
=\frac{\mathrm{P}(C=1-c)\mathrm{P}(C=c,X=1)\mathrm{P}(C=c,X=0)}{\phi(\leftM)}.
\end{align*}
We rewrite \leftM-bias$(C=c,\textsc{rd})$ from Theorem \ref{thm:stratum-extension},
\begin{align*}
    \leftM\text{-bias}(C=c,\textsc{rd})
    &=\textsc{rd}_\text{left}\cdot\V\text{-bias-em}(C=c,\textsc{rd})\cdot\textsc{vr}(c)\\
    &=\textsc{rd}_\text{left}\cdot\frac{p_{A=1}p_{A=0}p_{Y=1}p_{Y=0}g(c)}{\mathrm{P}(C=c,A=1)\mathrm{P}(C=c,A=0)}\cdot\frac{\mathrm{P}(A=1\mid C=c)\mathrm{P}(A=0\mid C=c)}{\mathrm{P}(X=1\mid C=c)\mathrm{P}(X=0\mid C=c)}\\
    &=\textsc{rd}_\text{left}\cdot\frac{\textsc{var}_\text{left}\cdot\textsc{var}_\text{right}\cdot g(c)}{\mathrm{P}(C=c,X=1)\mathrm{P}(C=c,X=0)}.
\end{align*}
Combining these with the weights, we have
\begin{align*}
    \leftM\text{-bias}(\textsc{lm})
    &=\frac{\textsc{rd}_\text{left}\cdot\textsc{var}_\text{left}\cdot\textsc{var}_\text{right}\cdot\{\mathrm{P}(C=0)g(1)+\mathrm{P}(C=1)g(0)\}}{\phi(\leftM)}\\
    &=h(c)\cdot\textsc{rd}_\text{left}\cdot\textsc{var}_\text{left}\cdot\textsc{var}_\text{right}\cdot 1/\phi(\leftM).
\end{align*}

\noindent\underline{\leftlongM~structure}: By Lemma \ref{lm:3}, the weights that average \leftlongM-bias$(D=1,\textsc{rd})$ and \leftlongM-bias$(D=0,\textsc{rd})$ to \leftlongM-bias(\textsc{lm}) take the form
$$
w_{D=d}=\mathrm{P}(D=1-d)\mathrm{P}(D=d,X=1)\mathrm{P}(D=d,X=0)/\phi(\leftlongM).
$$
We rewrite \leftlongM-bias$(D=d,\textsc{rd})$ from Theorem \ref{thm:stratum-extension},
\begin{align*}
    \leftlongM\text{-bias}(D=d,\textsc{rd})
    &=\textsc{rd}_\text{left}\cdot\Y\text{-bias-em}(D=d,\textsc{rd})\cdot\textsc{vr}(d)\\
    &=\textsc{rd}_\text{left}\cdot\frac{p_{A=1}p_{A=0}p_{Y=1}p_{Y=0}(p_{d\mid1}-p_{d\mid0})\{p_{d\mid1}g(1)-p_{d\mid0}g(0)\}}{\mathrm{P}(D=d,A=1)\mathrm{P}(D=d,A=0)}\times\\
    &~~~~\frac{\mathrm{P}(A=1\mid D=d)\mathrm{P}(A=0\mid D=d)}{\mathrm{P}(X=1\mid D=d)\mathrm{P}(X=0\mid D=d)}\\
    &=\frac{\textsc{rd}_\text{left}\cdot\textsc{var}_\text{left}\cdot\textsc{var}_\text{right}\cdot(p_{d\mid1}-p_{d\mid0})\cdot\{p_{d\mid1}g(1)-p_{d\mid0}g(0)\}}{\mathrm{P}(D=d,A=1)\mathrm{P}(D=d,A=0)},
\end{align*}
where the last step is obtained by reasoning in a similar manner as in the proof for the \leftM~structure. Combining these with the weights, we have
\begin{align*}
    \leftlongM\text{-bias}(\textsc{lm})
    &=h(c)\cdot\textsc{rd}_\text{left}\cdot\textsc{var}_\text{left}\cdot\textsc{var}_\text{right}\cdot\textsc{rd}^2_\text{child}\cdot 1/\phi(\leftlongM).
\end{align*}

\noindent\underline{\rightM, \M, \rightlongM~and \longM~structures}: The proof for these structures is a simple extension of the results for the \V, \leftM, \Y~and \leftlongM~structures, respectively. We show it for the \rightM~structure as an example:
Theorem \ref{thm:stratum-extension} shows that conditioning on a level of $C$, \rightM-bias is equivalent to the embedded \V-bias times $\textsc{rd}_\text{right}$. The weights that average the $C$-stratum-specific \rightM-bias to \rightM-bias(\textsc{lm}) are the same weights used for \V-bias-em(\textsc{lm}), as they involve the same variables $X$ and $C$. Also, by definition, $\phi(\rightM)$ is the same as $\phi(\V)$ from the embedded \V~structure. That means
$$
\rightM\text{-bias}(\textsc{lm})=\V\text{-bias-em}(\textsc{lm})\cdot\textsc{rd}_\text{right}=h(c)\cdot\textsc{var}_\text{left}\cdot\textsc{var}_\text{right}\cdot\textsc{rd}_\text{right}\cdot\frac{1}{\phi(\rightM)}.
$$
Similar reasoning gives
\begin{align*}
\M\text{-bias}(\textsc{lm})
&=\leftM\text{-bias-em}(\textsc{lm})\cdot\textsc{rd}_\text{right}
=h(c)\cdot\textsc{var}_\text{left}\cdot\textsc{var}_\text{right}\cdot\textsc{rd}_\text{left}\cdot\textsc{rd}_\text{right}\cdot 1/\phi(\M),\\
\rightlongM\text{-bias}(\textsc{lm})
&=\Y\text{-bias-em}(\textsc{lm})\cdot\textsc{rd}_\text{right}
=h(c)\cdot\textsc{var}_\text{left}\cdot\textsc{var}_\text{right}\cdot\textsc{rd}_\text{right}\cdot\textsc{rd}^2_\text{child}\cdot 1/\phi(\rightlongM),\\
    \longM\text{-bias}(\textsc{lm})
    &=\leftlongM\text{-bias-em}(\textsc{lm})\cdot\textsc{rd}_\text{right}
    =h(c)\cdot\textsc{var}_\text{left}\cdot\textsc{var}_\text{right}\cdot\textsc{rd}_\text{left}\cdot\textsc{rd}_\text{right}\cdot\textsc{rd}^2_\text{child}\cdot 1/\phi(\longM),
\end{align*}
where \leftM-bias-em(\textsc{lm}) refers to the bias of the \leftM~structure embedded in the \M~structure, and \leftlongM-bias-em(\textsc{lm}) refers to the bias of the \leftlongM~structure embedded in the \M~structure.
\end{proof}

\bigskip

\begin{proof}[\bf Detailed expressions for $\phi(\textup{structure})$ and their derivation]
\hfill

We first list these expressions before showing their derivation.

\begin{align*}
\phi(\V)&=p_{X=1}p_{X=0}\left\{
    \begin{matrix*}[l]
        p_{X=1}(p_{Y=1}p_{C=1\mid11}+p_{Y=0}p_{C=1\mid10})(p_{Y=1}p_{C=0\mid11}+p_{Y=0}p_{C=0\mid10})+\\
        p_{X=0}(p_{Y=1}p_{C=1\mid01}+p_{Y=0}p_{C=1\mid00})(p_{Y=1}p_{C=0\mid01}+p_{Y=0}p_{C=0\mid00})
    \end{matrix*}
    \right\},
\\
\\
\phi(\Y)&=p_{X=1}p_{X=0}\times
    \left\{
    \begin{matrix}
		p_{X=1}
		\times
		\begin{pmatrix*}[l]
			p_{Y=1}p_{C=1\mid11}p_{D=1\mid1}+\\
			p_{Y=1}p_{C=0\mid11}p_{D=1\mid0}+\\
			p_{Y=0}p_{C=1\mid10}p_{D=1\mid1}+\\
			p_{Y=0}p_{C=0\mid10}p_{D=1\mid0}
		\end{pmatrix*}
		\times
		\begin{pmatrix*}[l]
			p_{Y=1}p_{C=1\mid11}p_{D=0\mid1}+\\
			p_{Y=1}p_{C=0\mid11}p_{D=0\mid0}+\\
			p_{Y=0}p_{C=1\mid10}p_{D=0\mid1}+\\
			p_{Y=0}p_{C=0\mid10}p_{D=0\mid0}
		\end{pmatrix*}
		+\\
		p_{X=0}
		\times
		\begin{pmatrix*}[l]
			p_{Y=1}p_{C=1\mid01}p_{D=1\mid1}+\\
			p_{Y=1}p_{C=0\mid01}p_{D=1\mid0}+\\
			p_{Y=0}p_{C=1\mid00}p_{D=1\mid1}+\\
			p_{Y=0}p_{C=0\mid00}p_{D=1\mid0}
		\end{pmatrix*}
		\times
		\begin{pmatrix*}[l]
			p_{Y=1}p_{C=1\mid01}p_{D=0\mid1}+\\
			p_{Y=1}p_{C=0\mid01}p_{D=0\mid0}+\\
			p_{Y=0}p_{C=1\mid00}p_{D=0\mid1}+\\
			p_{Y=0}p_{C=0\mid00}p_{D=0\mid0}
		\end{pmatrix*}
    \end{matrix}
    \right\},
\\
\\
   \phi(\leftM)&=
    \mathrm{P}(X=1)\mathrm{P}(X=0)\mathrm{P}(C=1)\mathrm{P}(C=0)-\\
    &~~~~~~~~~~~~~~p_{A=1}^2p_{A=0}^2(p_{X=1\mid1}-p_{X=1\mid0})^2
    \{\mathrm{P}(C=1\mid A=1)-\mathrm{P}(C=1\mid A=0)\}^2
    \\
    &=\begin{pmatrix*}[l]
		p_{A=1}p_{X=1\mid1}+\\
		p_{A=0}p_{X=1\mid0}
    \end{pmatrix*}
    \times
    \begin{pmatrix*}[l]
		p_{A=1}p_{X=0\mid1}+\\
		p_{A=0}p_{X=0\mid0}
    \end{pmatrix*}
    \times
    \begin{pmatrix*}[l]
		p_{A=1}p_{Y=1}p_{C=1\mid11}+\\
		p_{A=1}p_{Y=0}p_{C=1\mid10}+\\
		p_{A=0}p_{Y=1}p_{C=1\mid01}+\\
		p_{A=0}p_{Y=0}p_{C=1\mid00}
    \end{pmatrix*}
    \times
    \begin{pmatrix*}[l]
		p_{A=1}p_{Y=1}p_{C=0\mid11}+\\
		p_{A=1}p_{Y=0}p_{C=0\mid10}+\\
		p_{A=0}p_{Y=1}p_{C=0\mid01}+\\
		p_{A=0}p_{Y=0}p_{C=0\mid00}
    \end{pmatrix*}
    -
    \\
    &~~~~~~p_{A=1}^2p_{A=0}^2(p_{X=1\mid1}-p_{X=1\mid0})^2
    \{p_{Y=1}(p_{C=1\mid11}-p_{C=1\mid01})+p_{Y=0}(p_{C=1\mid10}-p_{C=1\mid00})\}^2,
\\
\\
    \phi(\leftlongM)&=
    \mathrm{P}(X=1)\mathrm{P}(X=0)\mathrm{P}(D=1)\mathrm{P}(D=0)-\\
    &~~~~~p_{A=1}^2p_{A=0}^2(p_{X=1\mid1}-p_{X=1\mid0})^2
    \{\mathrm{P}(C=1\mid A=1)-\mathrm{P}(C=1\mid A=0)\}^2(p_{D=1\mid1}-p_{D=1\mid0})^2
    \\
    &=\begin{pmatrix*}[l]
		p_{A=1}p_{X=1\mid1}+\\
		p_{A=0}p_{X=1\mid0}
    \end{pmatrix*}
    \times
    \begin{pmatrix*}[l]
		p_{A=1}p_{X=0\mid1}+\\
		p_{A=0}p_{X=0\mid0}
    \end{pmatrix*}
    \times
    \\
    &~~~~~\left\{
    p_{D=1\mid1}
    \begin{pmatrix*}[l]
		p_{A=1}p_{Y=1}p_{C=1\mid11}+\\
		p_{A=1}p_{Y=0}p_{C=1\mid10}+\\
		p_{A=0}p_{Y=1}p_{C=1\mid01}+\\
		p_{A=0}p_{Y=0}p_{C=1\mid00}
    \end{pmatrix*}
    +
    p_{D=1\mid0}
    \begin{pmatrix*}[l]
		p_{A=1}p_{Y=1}p_{C=0\mid11}+\\
		p_{A=1}p_{Y=0}p_{C=0\mid10}+\\
		p_{A=0}p_{Y=1}p_{C=0\mid01}+\\
		p_{A=0}p_{Y=0}p_{C=0\mid00}
    \end{pmatrix*}
    \right\}
    \times
    \\
    &~~~~~\left\{
    p_{D=0\mid1}
    \begin{pmatrix*}[l]
		p_{A=1}p_{Y=1}p_{C=1\mid11}+\\
		p_{A=1}p_{Y=0}p_{C=1\mid10}+\\
		p_{A=0}p_{Y=1}p_{C=1\mid01}+\\
		p_{A=0}p_{Y=0}p_{C=1\mid00}
    \end{pmatrix*}
    +
    p_{D=0\mid0}
    \begin{pmatrix*}[l]
		p_{A=1}p_{Y=1}p_{C=0\mid11}+\\
		p_{A=1}p_{Y=0}p_{C=0\mid10}+\\
		p_{A=0}p_{Y=1}p_{C=0\mid01}+\\
		p_{A=0}p_{Y=0}p_{C=0\mid00}
    \end{pmatrix*}
    \right\}
    -
    \\
    &~~p_{A=1}^2p_{A=0}^2(p_{X=1\mid1}-p_{X=1\mid0})^2
    \left\{
    \begin{matrix*}[l]
		p_{Y=1}(p_{C=1\mid11}-p_{C=1\mid01})+\\
		p_{Y=0}(p_{C=1\mid10}-p_{C=1\mid00})
    \end{matrix*}
    \right\}^2
    (p_{D=1\mid1}-p_{D=1\mid0})^2.
\end{align*}
The expressions for $\phi(\rightM),\phi(\rightlongM),\phi(\M)$ and $\phi(\longM)$ are the same as $\phi(\V),\phi(\Y),\phi(\leftM)$ and $\phi(\leftlongM)$, respectively, except $Y$ is replaced by $B$.

\bigskip

Of the four distinct expressions above, $\phi(\V)$ has essentially been derived in the proof of Theorem \ref{thm:Vlm}. We now derive the other three expressions.

\bigskip

Derivation of $\phi(\Y)$  is simple. 
With the \Y~structure, by Lemma \ref{lm:4},
\begin{align*}
	\phi(\Y)
	&=\mathrm{P}(D=0)\mathrm{P}(D=1,X=1)\mathrm{P}(D=1,X=0)+\mathrm{P}(D=1)\mathrm{P}(D=0,X=1)\mathrm{P}(D=0,X=0)
    \\
    &=
    \mathrm{P}(X=0)\mathrm{P}(X=1,D=1)\mathrm{P}(X=1,D=0)+\mathrm{P}(X=1)\mathrm{P}(X=0,D=1)\mathrm{P}(X=0,D=0)
    ~~~\text{(by Lemma \ref{lm:4})}\\
    &=
    p^2_{X=1}p_{X=0}\mathrm{P}(D=1\mid X=1)\mathrm{P}(D=0\mid X=1)+p_{X=1}p^2_{X=0}\mathrm{P}(D=1\mid X=0)\mathrm{P}(D=0\mid X=0)
    \\
    &=p_{X=1}p_{X=0}\times
    \left\{
    \begin{matrix}
		p_{X=1}
		\times
		\begin{pmatrix*}[l]
			p_{Y=1}p_{C=1\mid11}p_{D=1\mid1}+\\
			p_{Y=1}p_{C=0\mid11}p_{D=1\mid0}+\\
			p_{Y=0}p_{C=1\mid10}p_{D=1\mid1}+\\
			p_{Y=0}p_{C=0\mid10}p_{D=1\mid0}
		\end{pmatrix*}
		\times
		\begin{pmatrix*}[l]
			p_{Y=1}p_{C=1\mid11}p_{D=0\mid1}+\\
			p_{Y=1}p_{C=0\mid11}p_{D=0\mid0}+\\
			p_{Y=0}p_{C=1\mid10}p_{D=0\mid1}+\\
			p_{Y=0}p_{C=0\mid10}p_{D=0\mid0}
		\end{pmatrix*}
		+\\
		p_{X=0}
		\times
		\begin{pmatrix*}[l]
			p_{Y=1}p_{C=1\mid01}p_{D=1\mid1}+\\
			p_{Y=1}p_{C=0\mid01}p_{D=1\mid0}+\\
			p_{Y=0}p_{C=1\mid00}p_{D=1\mid1}+\\
			p_{Y=0}p_{C=0\mid00}p_{D=1\mid0}
		\end{pmatrix*}
		\times
		\begin{pmatrix*}[l]
			p_{Y=1}p_{C=1\mid01}p_{D=0\mid1}+\\
			p_{Y=1}p_{C=0\mid01}p_{D=0\mid0}+\\
			p_{Y=0}p_{C=1\mid00}p_{D=0\mid1}+\\
			p_{Y=0}p_{C=0\mid00}p_{D=0\mid0}
		\end{pmatrix*}
    \end{matrix}
    \right\}.
\end{align*}

\bigskip

Derivation of $\phi(\leftM)$ is more complicated.
With the \leftM~structure,
\begin{align*}
    \phi(\leftM)
    &=\mathrm{P}(C=0)\mathrm{P}(C=1,X=1)\mathrm{P}(C=1,X=0)+\mathrm{P}(C=1)\mathrm{P}(C=0,X=1)\mathrm{P}(C=0,X=0)
    \\
    &=\mathrm{P}(C=0)\{p_{A=1}p_{X=1\mid1}\mathrm{P}(C=1\mid A=1)+p_{A=0}p_{X=1\mid0}\mathrm{P}(C=1\mid A=0)\}\times\\
    &~~~~~~~~~~~~~~~~~~~~~~~\{p_{A=1}p_{X=0\mid1}\mathrm{P}(C=1\mid A=1)+p_{A=0}p_{X=0\mid0}\mathrm{P}(C=1\mid A=0)\}+\\
    &~~~~~\mathrm{P}(C=1)\{p_{A=1}p_{X=1\mid1}\mathrm{P}(C=0\mid A=1)+p_{A=0}p_{X=1\mid0}\mathrm{P}(C=0\mid A=0)\}\times\\
    &~~~~~~~~~~~~~~~~~~~~~~~\{p_{A=1}p_{X=0\mid1}\mathrm{P}(C=0\mid A=1)+p_{A=0}p_{X=0\mid0}\mathrm{P}(C=0\mid A=0)\}.
\end{align*}
To simplify notation, we will abbreviate any probabilities not already abbreviated, e.g., $\mathrm{P}(C=c)$ is abbreviated as $P_{C=c}$, and $\mathrm{P}(C=c\mid A=a)$ is abbreviated as $P_{C=c\mid A=a}$. We use the upper case $P$ to differentiate this notation from the lower case $p$ used only to abbreviate marginal probabilities of an exogenous variable or conditional probabilities of an endogenous variable conditioning on all its parents. We continue working with the expression above.

\begin{align*}
	\phi(\leftM)    
    &=P_{C=0}\{p_{A=1}p_{X=1\mid1}P_{C=1\mid A=1}+p_{A=0}p_{X=1\mid0}P_{C=1\mid A=0}\}\cdot
    \{p_{A=1}p_{X=0\mid1}P_{C=1\mid A=1}+p_{A=0}p_{X=0\mid0}P_{C=1\mid A=0}\}+
    \\
    &~~~~~P_{C=1}\{p_{A=1}p_{X=1\mid1}P_{C=0\mid A=1}+p_{A=0}p_{X=1\mid0}P_{C=0\mid A=0}\}\cdot
    \{p_{A=1}p_{X=0\mid1}P_{C=0\mid A=1}+p_{A=0}p_{X=0\mid0}P_{C=0\mid A=0}\}
    \\
    &=(1-P_{C=1})\left\{
        \begin{matrix*}[l]
            p^2_{A=1}p_{X=1\mid1}p_{X=0\mid1}P^2_{C=1\mid A=1}+p^2_{X=0}p_{X=1\mid0}p_{X=0\mid0}P^2_{C=1\mid A=0}+\\
            p_{A=1}p_{A=0}(p_{X=1\mid1}p_{X=0\mid0}+p_{X=1\mid0}p_{X=0\mid1})P_{C=1\mid A=1}P_{C=1\mid A=0}
        \end{matrix*}
    \right\}+\\
    &~~~~~P_{C=1}\left\{
        \begin{matrix*}[l]
            p^2_{A=1}p_{X=1\mid1}p_{X=0\mid1}(1-2P_{C=1\mid A=1}+P^2_{C=1\mid A=1})+\\
            p^2_{A=0}p_{X=1\mid0}p_{X=0\mid0}(1-2P_{C=1\mid A=0}+P^2_{C=1\mid A=0})+\\
            p_{A=1}p_{A=0}(p_{X=1\mid1}p_{X=0\mid0}+p_{X=1\mid0}p_{X=0\mid1})
            (1-P_{C=1\mid A=1}-P_{C=1\mid A=0}+P_{C=1\mid A=1}P_{C=1\mid A=0})
        \end{matrix*}
    \right\}
    \\
    &=\left\{
        \begin{matrix*}[l]
            p^2_{A=1}p_{X=1\mid1}p_{X=0\mid1}P^2_{C=1\mid A=1}+
            p^2_{A=0}p_{X=1\mid0}p_{X=0\mid0}P^2_{C=1\mid A=0}+\\
            p_{A=1}p_{A=0}(p_{X=1\mid1}p_{X=0\mid0}+p_{X=1\mid0}p_{X=0\mid1})P_{C=1\mid A=1}P_{C=1\mid A=0}
        \end{matrix*}
    \right\}+\\
    &~~~~~P_{C=1}\left\{
        \begin{matrix*}[l]
            p^2_{A=1}p_{X=1\mid1}p_{X=0\mid1}+p^2_{A=0}p_{X=1\mid0}p_{X=0\mid0}+\\
            p_{A=1}p_{A=0}(p_{X=1\mid1}p_{X=0\mid0}+p_{X=1\mid0}p_{X=0\mid1})
        \end{matrix*}
    \right\}-\\
    &~~~~~P_{C=1}\left\{
        \begin{matrix*}[l]
            2p^2_{A=1}p_{X=1\mid1}p_{X=0\mid1}P^2_{C=1\mid A=1}+2p^2_{A=0}p_{X=1\mid0}p_{X=0\mid0}P^2_{C=1\mid A=0}+\\
            p_{A=1}p_{A=0}(p_{X=1\mid1}p_{X=0\mid0}+p_{X=1\mid0}p_{X=0\mid1})(P_{C=1\mid A=1}+P_{C=1\mid A=0})
        \end{matrix*}
    \right\}
    \displaybreak\\
    &=\left\{
        \begin{matrix*}[l]
            P^2_{C=1\mid A=1}p^2_{A=1}p_{X=1\mid1}p_{X=0\mid1}+
            P^2_{C=1\mid A=0}p^2_{A=0}p_{X=1\mid0}p_{X=0\mid0}+\\
            P_{C=1\mid A=1}P_{C=1\mid A=0}p_{A=1}p_{A=0}(p_{X=1\mid1}p_{X=0\mid0}+p_{X=1\mid0}p_{X=0\mid1})
        \end{matrix*}
    \right\}+\\
    &~~~~~P_{C=1}(p_{A=1}p_{X=0\mid1}+p_{A=0}p_{X=0\mid0})(p_{A=1}p_{X=1\mid1}+p_{A=0}p_{X=1\mid0})-\\
    &~~~~~P_{C=1}
        \begin{bmatrix*}[l]
            p_{A=1}P_{C=1\mid A=1}\{2p_{A=1}p_{X=1\mid1}p_{X=0\mid1}+p_{A=0}(p_{X=1\mid1}p_{X=0\mid0}+p_{X=1\mid0}p_{X=0\mid1})\}+\\
            p_{A=0}P_{C=1\mid A=0}\{2p_{A=0}p_{X=1\mid0}p_{X=0\mid0}+p_{A=1}(p_{X=1\mid1}p_{X=0\mid0}+p_{X=1\mid0}p_{X=0\mid1})\}
        \end{bmatrix*}
    \\
    &=\left\{
        \begin{matrix*}[l]
            P^2_{C=1\mid A=1}p^2_{A=1}p_{X=1\mid1}p_{X=0\mid1}+
            P^2_{C=1\mid A=0}p^2_{A=0}p_{X=1\mid0}p_{X=0\mid0}+\\
            P_{C=1\mid A=1}P_{C=1\mid A=0}p_{A=1}p_{A=0}(p_{X=1\mid1}p_{X=0\mid0}+p_{X=1\mid0}p_{X=0\mid1})
        \end{matrix*}
    \right\}+\\
    &~~~~~P_{C=1}P_{X=1}P_{X=0}-\\
    &~~~~~P_{C=1}
        \begin{bmatrix*}[l]
            p_{A=1}P_{C=1\mid A=1}\{p_{X=1\mid1}(p_{A=1}p_{X=0\mid1}+p_{A=0}p_{X=0\mid0})+p_{X=0\mid1}(p_{A=1}p_{X=1\mid1}+p_{A=0}p_{X=1\mid0})\}+\\
            p_{A=0}P_{C=1\mid A=0}\{p_{X=0\mid0}(p_{A=0}p_{X=1\mid0}+p_{A=1}p_{X=1\mid1})+p_{X=1\mid0}(p_{A=0}p_{X=0\mid0}+p_{A=1}p_{X=0\mid1})\}
        \end{bmatrix*}
    \\
    &=\left\{
        \begin{matrix*}[l]
            P^2_{C=1\mid A=1}p^2_{A=1}p_{X=1\mid1}p_{X=0\mid1}+
            P^2_{C=1\mid A=0}p^2_{A=0}p_{X=1\mid0}p_{X=0\mid0}+\\
            P_{C=1\mid A=1}P_{C=1\mid A=0}p_{A=1}p_{A=0}(p_{X=1\mid1}p_{X=0\mid0}+p_{X=1\mid0}p_{X=0\mid1})
        \end{matrix*}
    \right\}+\\
    &~~~~~P_{C=1}P_{X=1}P_{X=0}-\\
    &~~~~~
        \begin{pmatrix*}[l]
            p_{A=1}P_{C=1\mid A=1}+\\
            p_{A=0}P_{C=1\mid A=0}
        \end{pmatrix*}
        \left\{
        \begin{matrix*}[l]
            p_{A=1}P_{C=1\mid A=1}(p_{X=1\mid1}P_{X=0}+p_{X=0\mid1}P_{X=1})+\\
            p_{A=0}P_{C=1\mid A=0}(p_{X=0\mid0}P_{X=1}+p_{X=1\mid0}P_{X=0})
        \end{matrix*}
        \right\}\\
    &=\left\{
        \begin{matrix*}[l]
            P^2_{C=1\mid A=1}p^2_{A=1}p_{X=1\mid1}p_{X=0\mid1}+
            P^2_{C=1\mid A=0}p^2_{A=0}p_{X=1\mid0}p_{X=0\mid0}+\\
            P_{C=1\mid A=1}P_{C=1\mid A=0}p_{A=1}p_{A=0}(p_{X=1\mid1}p_{X=0\mid0}+p_{X=1\mid0}p_{X=0\mid1})
        \end{matrix*}
    \right\}+\\
    &~~~~~P_{C=1}P_{X=1}P_{X=0}-\\
    &~~~~~
        \begin{bmatrix*}[l]
            P^2_{C=1\mid A=1}p^2_{A=1}(p_{X=1\mid1}P_{X=0}+p_{X=0\mid1}P_{X=1})+
            P^2_{C=1\mid A=0}p^2_{A=0}(p_{X=1\mid0}P_{X=0}+p_{X=0\mid0}P_{X=1})+\\
            P_{C=1\mid A=1}P_{C=1\mid A=0}p_{A=1}p_{A=0}\{(p_{X=1\mid1}+p_{X=1\mid0})P_{X=0}+(1-p_X=1\mid1-p_{X=1\mid0})P_{X=1}\}
        \end{bmatrix*}
    \\
    &=P_{C=1}P_{X=1}P_{X=0}+\\
    &~~~~~
        \begin{bmatrix*}[l]
            P^2_{C=1\mid A=1}p^2_{A=1}(p_{X=1\mid1}p_{X=0\mid1}-p_{X=1\mid1}P_{X=0}-p_{X=0\mid1}P_{X=1})+\\
            P^2_{C=1\mid A=0}p^2_{A=0}(p_{X=1\mid0}p_{X=0\mid0}-p_{X=1\mid0}P_{X=0}-p_{X=0\mid0}P_{X=1})+\\
            P_{C=1\mid A=1}P_{C=1\mid A=0}p_{A=1}p_{A=0}\times\\
            \{p_{X=1\mid1}p_{X=0\mid0}+p_{X=1\mid0}p_{X=0\mid1}-(p_{X=1\mid1}+p_{X=1\mid0})P_{X=0}-(2-p_{X=1\mid1}-p_{X=1\mid0})P_{X=1}\}
        \end{bmatrix*}
    \\
    &=P_{C=1}P_{X=1}P_{X=0}-\\
    &~~~~~
        \begin{bmatrix*}[l]
            P^2_{C=1\mid A=1}p^2_{A=1}\{p^2_{A=0}(p_{X=1\mid1}-p_{X=1\mid0})^2+P_{X=1}P_{X=0}\}+\\
            P^2_{C=1\mid A=0}p^2_{A=0}\{p^2_{A=1}(p_{X=1\mid1}-p_{X=1\mid0})^2+P_{X=1}P_{X=0}\}+\\
            P_{C=1\mid A=1}P_{C=1\mid A=0}p_{A=1}p_{A=0}\{-2p_{A=1}p_{A=0}(p_{X=1\mid1}-p_{X=1\mid0})^2+2P_{X=1}P_{X=0}\}
        \end{bmatrix*}
    \\
    &=P_{C=1}P_{X=1}P_{X=0}-\\
    &~~~~~P_{X=1}P_{X=0}(P^2_{C=1\mid A=1}p^2_{A=1}+P^2_{C=1\mid A=0}p^2_{A=0}+2P_{C=1\mid A=1}P_{C=1\mid A=0}p_{A=1}+p_{A=0})-\\
    &~~~~~p^2_{A=1}p^2_{A=0}(p_{X=1\mid1}-p_{X=1\mid0})^2(P^2_{C=1\mid A=1}+P^2_{C=1\mid A=0}-2P_{C=1\mid A=1}P_{C=1\mid A=0})\\
    &=P_{C=1}P_{X=1}P_{X=0}-\\
    &~~~~~P_{X=1}P_{X=0}(P_{C=1\mid A=1}p_{A=1}+P_{C=1\mid A=0}p_{A=0})^2-\\
    &~~~~~p^2_{A=1}p^2_{A=0}(p_{X=1\mid1}-p_{X=1\mid0})^2(P_{C=1\mid A=1}-P_{C=1\mid A=0})^2\\
    &=P_{C=1}P_{X=1}P_{X=0}-P_{X=1}P_{X=0}P_{C=1}^2-\\
    &~~~~~p^2_{A=1}p^2_{A=0}(p_{X=1\mid1}-p_{X=1\mid0})^2(P_{C=1\mid A=1}-P_{C=1\mid A=0})^2\\
    &=P_{X=1}P_{X=0}P_{C=1}P_{C=0}-p^2_{A=1}p^2_{A=0}(p_{X=1\mid1}-p_{X=1\mid0})^2(P_{C=1\mid A=1}-P_{C=1\mid A=0})^2\\
    &=\begin{pmatrix*}[l]
		p_{A=1}p_{X=1\mid1}+\\
		p_{A=0}p_{X=1\mid0}
    \end{pmatrix*}
    \times
    \begin{pmatrix*}[l]
		p_{A=1}p_{X=0\mid1}+\\
		p_{A=0}p_{X=0\mid0}
    \end{pmatrix*}
    \times
    \begin{pmatrix*}[l]
		p_{A=1}p_{Y=1}p_{C=1\mid11}+\\
		p_{A=1}p_{Y=0}p_{C=1\mid10}+\\
		p_{A=0}p_{Y=1}p_{C=1\mid01}+\\
		p_{A=0}p_{Y=0}p_{C=1\mid00}
    \end{pmatrix*}
    \times
    \begin{pmatrix*}[l]
		p_{A=1}p_{Y=1}p_{C=0\mid11}+\\
		p_{A=1}p_{Y=0}p_{C=0\mid10}+\\
		p_{A=0}p_{Y=1}p_{C=0\mid01}+\\
		p_{A=0}p_{Y=0}p_{C=0\mid00}
    \end{pmatrix*}
    -
    \\
    &~~~~~~p_{A=1}^2p_{A=0}^2(p_{X=1\mid1}-p_{X=1\mid0})^2
    \{p_{Y=1}(p_{C=1\mid11}-p_{C=1\mid01})+p_{Y=0}(p_{C=1\mid10}-p_{C=1\mid00})\}^2.
\end{align*}

\bigskip

To derive $\phi(\leftlongM)$, we build on the derivation of $\phi(\leftM)$. Note that the derivation of $\phi(\leftM)$ above, up to the step before the last step, has shown that
\begin{align*}
    \phi(\leftM)
    &=\mathrm{P}(C=0)\mathrm{P}(C=1,X=1)\mathrm{P}(C=1,X=0)+\mathrm{P}(C=1)\mathrm{P}(C=0,X=1)\mathrm{P}(C=1,X=0)
    \\
    &=P_{X=1}P_{X=0}P_{C=1}P_{C=0}-p^2_{A=1}p^2_{A=0}(p_{X=1\mid1}-p_{X=1\mid0})^2(P_{C=1\mid A=1}-P_{C=1\mid A=0})^2.
\end{align*}
With the \leftlongM~structure, similar reasoning (replacing $C$ with $D$) shows that
\begin{align*}
    \phi(\leftlongM)
    &=\mathrm{P}(D=0)\mathrm{P}(D=1,X=1)\mathrm{P}(D=1,X=0)+\mathrm{P}(D=1)\mathrm{P}(D=0,X=1)\mathrm{P}(D=1,X=0)
    \\
    &=P_{X=1}P_{X=0}P_{D=1}P_{D=0}-p^2_{A=1}p^2_{A=0}(p_{X=1\mid1}-p_{X=1\mid0})^2(P_{D=1\mid A=1}-P_{D=1\mid A=0})^2,
\end{align*}
which can be expanded as,
\begin{align*}
    \phi(\leftlongM)
    &=P_{X=1}P_{X=0}P_{D=1}P_{D=0}-p^2_{A=1}p^2_{A=0}(p_{X=1\mid1}-p_{X=1\mid0})^2\times\\
    &~~~~~(p_{D=1\mid1}P_{C=1\mid A=1}+p_{D=1\mid0}P_{C=0\mid A=1}-p_{D=1\mid1}P_{C=1\mid A=0}-p_{D=1\mid0}P_{C=0\mid A=0})^2\\
    &=P_{X=1}P_{X=0}P_{D=1}P_{D=0}-
    p^2_{A=1}p^2_{A=0}(p_{X=1\mid1}-p_{X=1\mid0})^2(P_{C=1\mid A=1}-P_{C=1\mid A=0})^2(p_{D=1\mid1}-p_{D=1\mid0})^2
    \\
    &=\begin{pmatrix*}[l]
		p_{A=1}p_{X=1\mid1}+\\
		p_{A=0}p_{X=1\mid0}
    \end{pmatrix*}
    \times
    \begin{pmatrix*}[l]
		p_{A=1}p_{X=0\mid1}+\\
		p_{A=0}p_{X=0\mid0}
    \end{pmatrix*}
    \times
    \\
    &~~~~~\left\{
    p_{D=1\mid1}
    \begin{pmatrix*}[l]
		p_{A=1}p_{Y=1}p_{C=1\mid11}+\\
		p_{A=1}p_{Y=0}p_{C=1\mid10}+\\
		p_{A=0}p_{Y=1}p_{C=1\mid01}+\\
		p_{A=0}p_{Y=0}p_{C=1\mid00}
    \end{pmatrix*}
    +
    p_{D=1\mid0}
    \begin{pmatrix*}[l]
		p_{A=1}p_{Y=1}p_{C=0\mid11}+\\
		p_{A=1}p_{Y=0}p_{C=0\mid10}+\\
		p_{A=0}p_{Y=1}p_{C=0\mid01}+\\
		p_{A=0}p_{Y=0}p_{C=0\mid00}
    \end{pmatrix*}
    \right\}
    \times
    \\
    &~~~~~\left\{
    p_{D=0\mid1}
    \begin{pmatrix*}[l]
		p_{A=1}p_{Y=1}p_{C=1\mid11}+\\
		p_{A=1}p_{Y=0}p_{C=1\mid10}+\\
		p_{A=0}p_{Y=1}p_{C=1\mid01}+\\
		p_{A=0}p_{Y=0}p_{C=1\mid00}
    \end{pmatrix*}
    +
    p_{D=0\mid0}
    \begin{pmatrix*}[l]
		p_{A=1}p_{Y=1}p_{C=0\mid11}+\\
		p_{A=1}p_{Y=0}p_{C=0\mid10}+\\
		p_{A=0}p_{Y=1}p_{C=0\mid01}+\\
		p_{A=0}p_{Y=0}p_{C=0\mid00}
    \end{pmatrix*}
    \right\}
    -
    \\
    &~~~~~p_{A=1}^2p_{A=0}^2(p_{X=1\mid1}-p_{X=1\mid0})^2
    \left\{
    \begin{matrix*}[l]
		p_{Y=1}(p_{C=1\mid11}-p_{C=1\mid01})+\\
		p_{Y=0}(p_{C=1\mid10}-p_{C=1\mid00})
    \end{matrix*}
    \right\}^2
    (p_{D=1\mid1}-p_{D=1\mid0})^2.
\end{align*}
\end{proof}

\end{document}